\title{On Kernels for \textsc{$d$-Path Vertex Cover}}
\author{Radovan Červený}{Department of Theoretical Computer Science, Faculty of Information Technology,
\\Czech Technical University in Prague, Prague, Czech~Republic}{radovan.cerveny@fit.cvut.cz}{https://orcid.org/0000-0003-4528-9525
}{The author acknowledges the support of the Grant Agency of the Czech
Technical University in Prague, grant No. SGS20/208/OHK3/3T/18.}%
\author{Pratibha Choudhary}{Department of Theoretical Computer Science, Faculty of Information Technology,
\\Czech Technical University in Prague, Prague, Czech~Republic}{ondrej.suchy@fit.cvut.cz}{https://orcid.org/0000-0002-1648-288X}{The author acknowledges the support of the OP VVV MEYS funded project
CZ.02.1.01/0.0/0.0/16\_019/0000765 ``Research Center for Informatics''.}
\author{Ond\v rej~Such\'{y}}{Department of Theoretical Computer Science, Faculty of Information Technology,
\\Czech Technical University in Prague, Prague, Czech~Republic}{ondrej.suchy@fit.cvut.cz}{https://orcid.org/0000-0002-7236-8336}{The author acknowledges the support of the OP VVV MEYS funded project
CZ.02.1.01/0.0/0.0/16\_019/0000765 ``Research Center for Informatics''.}
\authorrunning{R. Červený and P. Choudhary and O. Such\'{y}}
\keywords{Parameterized complexity, Kernelization, d-Hitting Set, d-Path Vertex Cover, Expansion Lemma}
\newcommand*{\ISLIPICS}{}%
\let\original@algocf@latexcaption\algocf@latexcaption
\long\def\algocf@latexcaption#1[#2]{%
  \@ifundefined{NR@gettitle}{%
    \def\@currentlabelname{#2}%
  }{%
    \NR@gettitle{#2}%
  }%
  \original@algocf@latexcaption{#1}[{#2}]%
}
\newcommand{\ostar}{\mathcal{O}^*}
\newcommand{\bigo}{\mathcal{O}}
\newcommand{\PP}{\mathcal{P}}
\newcommand{\PVC}{\textsc{PVC}}
\newcommand{\dPVC}{$d$-\textsc{PVC}}
    \newtheorem{observation}{Observation}
\newtheorem{rrule}{Reduction Rule}
    \newenvironment{claimproof}{{\emph{Proof of the claim.} }}{\hfill$\vartriangleleft$}
\newcommand{\M}{\mathcal{M}}
\newcommand{\F}{\mathcal{F}}
\newcommand{\R}{\mathcal{R}}
\newcommand{\Y}{\mathcal{Y}}
\newcommand{\anc}{\textit{anc}}
\newcommand{\sub}{\textit{sub}}
\DeclareRobustCommand*{\req}[1]{\varrho#1}
\DeclareRobustCommand*{\subreq}[1]{\sigma#1}
\definecolor{vertexVorangeEdge}{RGB}{245, 182, 66}
\definecolor{vertexVorangeEdgeDim}{RGB}{255, 236, 201}
\definecolor{matchingMBoxDraw}{RGB}{0,0,0}
\definecolor{matchingMBoxFill}{RGB}{234, 234, 234}
\definecolor{matchingMvertex}{RGB}{0,0,0}
\definecolor{setXBoxDraw}{RGB}{0, 255, 238}
\definecolor{setXBoxFill}{RGB}{235, 255, 254}
\definecolor{setXvertex}{RGB}{0, 255, 238}
\definecolor{setX2edge}{RGB}{252, 56, 255}
\definecolor{setX2BoxDraw}{RGB}{252, 56, 255}
\definecolor{setX2BoxFill}{RGB}{255, 237, 255}
\definecolor{setX0BoxDraw}{RGB}{87, 201, 83}
\definecolor{setX0BoxFill}{RGB}{216, 255, 214}
\definecolor{setX1edge}{RGB}{61, 71, 255}
\definecolor{setX1BoxDraw}{RGB}{61, 71, 255}
\definecolor{setX1BoxFill}{RGB}{227, 228, 255}
\definecolor{setM1BoxDraw}{RGB}{162, 219, 77}
\definecolor{setM1BoxFill}{RGB}{239, 255, 214}
\definecolor{setMprimeBoxDraw}{RGB}{54, 156, 0}
\definecolor{setMprimeBoxFill}{RGB}{162, 219, 77}
\definecolor{setXprimeBoxDraw}{RGB}{0, 24, 130}
\definecolor{setXprimeBoxFill}{RGB}{140, 156, 255}
\definecolor{setXprimeEdge}{RGB}{252, 56, 255}
\definecolor{NFiBoxDraw}{RGB}{245, 182, 66}
\definecolor{NFiBoxFill}{RGB}{255, 246, 221}
\definecolor{CiBoxDraw}{RGB}{255, 125, 134}
\definecolor{CiBoxFill}{RGB}{255, 240, 241}
\newcommand{\lv}[1]{}
\newcommand{\appendixText}{}
\newcommand{\toappendix}[1]{\gappto{\appendixText}{{#1}}}
\newcommand{\appmark}{$\star$}
\newcommand{\ifappendix}[1]{#1}
\newcommand{\mainappendix}{}
\begin{document}

\maketitle

%TODO mandatory: add short abstract of the document
\begin{abstract}
In this paper we study the kernelization of the $d$-\textsc{Path Vertex Cover} ($d$-PVC) problem. Given a graph $G$, the problem requires finding whether there exists a set of at most $k$ vertices whose removal from $G$ results in a graph that does not contain a path (not necessarily induced) with $d$ vertices. It is known that $d$-PVC is \textsc{NP}-complete for $d\geq 2$. Since the problem generalizes to $d$-\textsc{Hitting Set}, it is known to admit a kernel with $\mathcal{O}(dk^d)$ edges. We improve on this by giving better kernels.
Specifically, we give kernels with $\mathcal{O}(k^2)$ vertices and edges for the cases when $d=4$ and $d=5$.
Further, we give a kernel with $\mathcal{O}(k^4d^{2d+9})$ vertices and edges for general $d$.

\end{abstract}

\section{Introduction}\label{sec:introduction}
Vertex deletion problems have been studied extensively in graph theory. These problems require finding a subset of vertices whose deletion results in a graph that belongs to some desired class of graphs. One such problem is path covering.
Given a graph $G=(V,E)$, the \textsc{$d$-Path Vertex Cover} problem ($d$-PVC) asks to compute a subset $S\subseteq V$ of vertices such that the graph resulting from removal of $S$ does not contain a path on $d$ vertices. Here the path need not necessarily be induced. The problem was first introduced by Brešar et al.~\cite{BresarKKS11}. It is known to be NP-complete for any $d \ge 2$ due to the meta-theorem of Lewis and Yannakakis~\cite{LewisY80}. The \textsc{2-PVC} problem is the same as the well known \textsc{Vertex Cover} problem. The \textsc{3-PVC} problem is also known as \textsc{Maximum Dissociation Set} or \textsc{Bounded Degree-One Deletion}. The \textsc{$d$-PVC} problem is motivated by the field of designing secure wireless communication protocols~\cite{Novotny10} or in route planning and speeding up of shortest path queries~\cite{FunkeNS16}.

With respect to exact algorithms, several efficient (better than brute force enumeration) algorithms are known for \textsc{2-PVC} and \textsc{3-PVC}. In particular, \textsc{2-PVC} (\textsc{Vertex Cover}) can be solved in $\bigo(1.1996^n)$ time and polynomial space due to Xiao and Nagamochi~\cite{XiaoN17}, while \textsc{3-PVC} can be solved in $\bigo(1.4613^n)$ time and polynomial space due to Chang et al.~\cite{ChangCHLRS18} or in $\bigo(1.3659^n)$ time and exponential space due to Xiao and Kou~\cite{XiaoK17exact}.

From the approximation point of view, it is known due to Brešar et al.~\cite{BresarKKS11} that \textsc{$d$-PVC}, for $d > 2$, cannot be $r$-approximated within a factor of $r=1.3606$ in polynomial time, unless P$=$NP.
A greedy $d$-approximation algorithm for \textsc{$d$-PVC} can be employed by repeatedly finding a $d$-path and putting its vertices into the solution.
Due to Fomin et al.~\cite{rep-sets}, we can find an arbitrary $d$-path in $\bigo(2.619^ddn\log{n})$ time, and therefore the approximation algorithm runs in $\bigo(n^2\log{n})$ time in the size of the input.
While the algorithms of Zehavi~\cite{Zehavi15KPath} and Tsur~\cite{Tsur19bKPath},
with running times $\ostar({2.597^d})$ and $\ostar({2.554^d})$,\footnote{The $\ostar()$ notation suppresses all factors polynomial in the input size.} respectively, can be faster for large~$d$,
their running time factor polynomial in input size is much worse than $\bigo(n\log{n})$.
Lee~\cite{Lee19} gave a $\bigo(\log{d})$-approximation  algorithm which runs in $\ostar(2^{\bigo(d^3\log{d})})$ time.
For \textsc{3-PVC} a $2$-approximation algorithm was given by Tu and Zhou~\cite{TuZ11} and for \textsc{4-PVC} a $3$-approximation algorithm is known due to Camby et al.~\cite{Camby14}.

When parameterized by the size of the solution~$k$, \textsc{$d$-PVC} is directly solvable by a~trivial FPT algorithm for $d$-\textsc{Hitting Set}, that runs in $\ostar(d^k)$ time.
However, since \textsc{$d$-PVC} is a restricted case of \textsc{$d$-Hitting Set}, it is known due to Fomin et al.~\cite{FominGKLS10} that for  $d\ge4$ \textsc{$d$-PVC} can be solved in $\ostar((d - 0.9245)^{k})$ time
and for $d \ge 6$ algorithms with even better running times are known due to Fernau~\cite{Fernau10}.
Namely the running times are $\ostar((d-1+c_d)^{k})$, where $c_d$ is a small positive constant which monotonically approaches $0$ as $d$ goes to $\infty$.
There has been considerable study for the case when $d$ is a small constant. For the \textsc{2-PVC} (\textsc{Vertex Cover}) problem, current best known algorithm due to Chen, Kanj, and Xia~\cite{ChenKX10} runs in time $\ostar(1.2738^k)$. For \textsc{3-PVC}, %Tu~\cite{Tu15} derived an iterative compression technique to solve the problem in $\ostar(2^k)$ time.
%This was later improved by Katrenič~\cite{Katrenic16} to $\ostar(1.8127^k)$, by Xiao and Kou~\cite{XiaoK17} to $\ostar(1.7485^k)$ by using a~branch-and-reduce approach and finally by
the current best known algorithm due to Tsur~\cite{Tsur19a} runs in $\ostar(1.713^k)$ time. For the \textsc{4-PVC} problem, Tsur~\cite{Tsur21-4PVC} gave the current best algorithm that runs in $\ostar(2.619^k)$ time.
In previous work~\cite{CervenyS19,CervenyS22-5PVC}, a subset of authors developed an $\ostar(4^k)$ algorithm for \textsc{$5$-PVC}. For $d=5$, $6$, and~$7$ Tsur~\cite{Tsur19b} claimed algorithms for $d$-PVC with running times $\ostar(3.945^k)$, $\ostar(4.947^k)$, and $\ostar(5.951^k)$, respectively.
A subset of authors used a computer to generate even faster algorithms for $3\le d\le 8$~\cite{CervenyS21-generating}.

In this paper, we are interested in kernels for the {\dPVC} problem. Since an instance of {\dPVC} can be formulated as an instance of \textsc{$d$-Hitting Set}, by using the results of Fafianie and Kratsch~\cite{FafianieK15} we immediately get a kernel for {\dPVC} with at most $d(k+1)^{d}$ vertices and at most $(d-1)(k+1)^d$ edges by keeping only the vertices and edges that are contained in the corresponding sets of the reduced \textsc{$d$-Hitting Set} instance.

Regarding the lower bounds for kernels of {\dPVC}, Dell and Melkebeek~\cite{DellM10} have shown that for \textsc{Vertex Cover} it is not possible to achieve a kernel with $\bigo(k^{2-\varepsilon})$ edges unless coNP is in NP/poly (which would imply a collapse of the polynomial hierarchy). This result extends to {\dPVC} for any $d\geq2$\ifappendix{ (see \autoref{sec:dpvc_lowerbounds} in the appendix for details)}. Therefore, kernels with $\bigo(k^2)$ edges for {\dPVC} are the best we can hope for.

\toappendix{%
\section{Kernelization Lowerbound for \textsc{$d$-Path Vertex Cover}}\label{sec:dpvc_lowerbounds}
In this section, we show that the \textsc{$d$-Path Vertex Cover} problem does not admit a kernel with $\bigo(k^{2-\epsilon})$ edges unless coNP is in NP/poly. For that we use a \emph{polynomial parameter transformation}. For the detailed introduction we refer the reader to Fomin et al. \cite{fomin_lokshtanov_saurabh_zehavi_2019_ppt}. We quote the key definition and theorems here.

\begin{definition}
\emph{A polynomial compression} of a parameterized language $Q \subseteq \Sigma^* \times \mathbb{N}$ into a language $R \subseteq \Sigma^*$ is an algorithm that takes as input an instance $(x,k) \in \Sigma^* \times \mathbb{N}$, works in time polynomial in $|x| + k$, and returns a string $y$ such that:
\begin{enumerate}
\item $|y| \leq p(k)$ for some polynomial $p(\cdot)$, and
\item $y \in R$ if and only if $(x,k) \in Q$.
\end{enumerate}
\end{definition}

Note that a kernelization is a special case of a polynomial compression.

\begin{definition}
Let $P,Q \subseteq \Sigma^* \times \mathbb{N}$ be two parameterized problems. An algorithm $\mathcal{A}$ is called \emph{polynomial parameter transformation (PPT)} from $P$ to $Q$ if, given an instance $(x,k)$ of problem $P$, $\mathcal{A}$ works in polynomial time and outputs an equivalent instance $(\hat{x}, \hat{k})$ of problem $Q$, that is, $(x,k) \in P$ if and only if $(\hat{x}, \hat{k}) \in Q$, such that $\hat{k} \leq p(k)$ for some polynomial $p(\cdot)$.
\end{definition}

The lower bound is based on the following result.

\begin{theorem}[Dell and Melkebeek~\cite{DellM10}]\label{theorem:no_kernel_for_vc}
For any $\varepsilon >0$ the \textsc{Vertex Cover} problem parameterized by the solution size does not admit a polynomial compression with bitsize $\bigo(k^{2-\varepsilon})$, unless coNP $\subseteq$ NP/poly.
\end{theorem}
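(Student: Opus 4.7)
Strictly speaking this is a result quoted verbatim from \cite{DellM10}, so in our paper no new proof is required and we only invoke it. Still, let me sketch the high-level strategy Dell and van Melkebeek employ, as a plan for how one would reprove it from scratch. The argument is carried out in the model of \emph{oracle communication protocols}, a generalization of polynomial compression in which one player holds the instance and is allowed to send bits to a computationally unbounded second player, so that ruling out short protocols also rules out short compressions.

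The plan has three main steps. First, I would establish the generic communication lower bound that $d$-\textsc{SAT} on $n$ variables does not admit an oracle communication protocol in which the first player sends $O(n^{d-\varepsilon})$ bits, unless $\mathrm{coNP} \subseteq \mathrm{NP/poly}$. This is the technical core and rests on a \emph{complementary witness lemma}: one packs many independent $d$-\textsc{SAT} instances into a single instance whose size grows only slowly with the number of packed instances, so that a sub-optimal protocol on the packed instance would yield an OR-distillation of $d$-\textsc{SAT}, contradicting the Fortnow--Santhanam theorem.

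Second, I would use the classical clause-triangle reduction from $3$-\textsc{SAT} to \textsc{Vertex Cover}, which produces a graph with vertex cover number $k = O(n+m)$. Working in the regime $m = \Theta(n^{3/2})$, where the $3$-\textsc{SAT} lower bound from the first step is tightest, a hypothetical $O(k^{2-\varepsilon})$-bit compression of \textsc{Vertex Cover} translates through this reduction into an $O(n^{3-\Omega(\varepsilon)})$-bit protocol for $3$-\textsc{SAT}, contradicting the first step.

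The main obstacle is the first step, the information-theoretic lower bound for the packed OR of SAT instances. Bounding how many instances can be compressed together without losing essential information requires cover-free families (or equivalent combinatorial objects), and the exponent $d-\varepsilon$ must be derived via a counting argument that matches the trivial $O(n^d)$ upper bound on the number of distinct $d$-clauses; getting a tight constant here is what ultimately yields the $k^{2-\varepsilon}$ threshold for \textsc{Vertex Cover}.
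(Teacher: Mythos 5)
You are right that the paper does not prove this statement at all: it is quoted verbatim from Dell and van Melkebeek and used as a black box, so your opening sentence matches the paper's treatment exactly. However, since you went on to sketch how one would reprove it, I should point out that your step 2 contains a genuine flaw. The oracle-communication lower bound for \textsc{3-SAT} from your step 1 asserts hardness for $n$-variable instances whose hard examples have $\Theta(n^3)$ clauses; it cannot hold when restricted to the regime $m = \Theta(n^{3/2})$, because there the entire formula can be transmitted with $\tilde{O}(n^{3/2})$ bits, trivially beating any purported $n^{3-\Omega(\varepsilon)}$ lower bound, so the contradiction you aim for never arises. Conversely, in the dense regime where the \textsc{3-SAT} bound does hold, the clause-gadget reduction inflates the vertex-cover parameter to $k = \Theta(m) = \Theta(n^3)$, and an $O(k^{2-\varepsilon})$ compression then has bitsize roughly $n^{6-3\varepsilon}$, far above the $n^{3-\varepsilon}$ threshold—again no contradiction. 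This parameter blow-up is exactly why the \textsc{Vertex Cover} bound cannot be obtained by composing the SAT lower bound with the textbook reduction.

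What Dell and van Melkebeek actually do for \textsc{Vertex Cover} is a direct packing argument: they embed the OR of $t$ independent instances into a single graph whose vertex-cover parameter grows only like $\sqrt{t}$ times the size of one instance (and like $t^{1/d}$ for $d$-uniform hypergraph vertex cover), and then run the complementary-witness/oracle-communication machinery on that packed instance; the quadratic savings in the packing is precisely what produces the exponent $2-\varepsilon$. Your step 1 is a fair sketch of that machinery, but the route from it to \textsc{Vertex Cover} must go through such a problem-specific packing, not through a density-tuned reduction from \textsc{3-SAT}.
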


Our goal is to show that there is a PPT from \textsc{Vertex Cover} to \textsc{$d$-Path Vertex Cover}, both parameterized by the solution size, preserving the value of the parameter.
% such that the polynomial of $k$ in PPT is actually a linear function. Then, a kernel with $\bigo(k^{2-\epsilon})$ edges for \textsc{$d$-Path Vertex Cover} would imply a polynomial compression with bitsize $\bigo(k^{2-\epsilon})$ for \textsc{$d$-Path Vertex Cover} which by \autoref{theorem:ppt} would imply a polynomial compression with bitsize $\bigo(k^{2-\epsilon})$ for \textsc{Vertex Cover} which, unless coNP is in NP/poly, would be a contradiction with the result of Dell and Melkebeek \cite{DellM10}.

\begin{lemma}\label{lemma:dpvc_ppt}
For every $d \ge 3$ there is a PPT that takes as input an instance $(G,k)$ of \textsc{Vertex Cover} and outputs an instance $(G',k)$ of \textsc{$d$-Path Vertex Cover}.
\end{lemma}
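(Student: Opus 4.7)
The plan is to construct $G'$ from $G$ by attaching to each vertex $v$ of $G$ a fresh pendant path $v, p^1_v, p^2_v, \ldots, p^\ell_v$ of length $\ell := \lceil (d-2)/2 \rceil$, and to return $(G', k)$ with the parameter unchanged. The construction adds only $\ell \cdot |V(G)|$ vertices, runs in polynomial time, and trivially preserves the parameter; what remains is to prove that $G$ has a vertex cover of size $k$ iff $G'$ has a \dPVC{} solution of size $k$.

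For the easy direction, I would show that any vertex cover $C$ of $G$ is already a solution of \dPVC{} on $G'$. The key observation is that the pendant paths are pairwise vertex-disjoint, each attached to $G$ only at its root $v$, and each has only $\ell \le d-2$ edges. Hence no $d$-path (which has $d-1$ edges) can live entirely inside one pendant path, so every $d$-path in $G'$ traverses at least one edge $uv \in E(G)$ and therefore contains both $u$ and $v$; since $C$ covers $uv$, it hits the $d$-path.

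For the converse, starting from a \dPVC{} solution $S$ of $G'$, I would first produce $S' \subseteq V(G)$ with $|S'| \le |S|$ by replacing each pendant $p^i_v \in S$ by its root $v$. The central claim, verified by a short case analysis on whether $p^i_v$ is an internal vertex or an endpoint of the $d$-path, is that every $d$-path through a pendant vertex $p^i_v$ also passes through $v$: the pendant rooted at $v$ has only $\ell+1 \le d-1$ vertices, so a $d$-path cannot stay inside it, and the only escape route out of the pendant path is through $v$. Hence $S'$ is still a \dPVC{} solution. Finally, for each edge $uv \in E(G)$ the explicit witness path $p^\ell_u, \ldots, p^1_u, u, v, p^1_v, \ldots, p^{\lfloor (d-2)/2 \rfloor}_v$ lives in $G'$, and its only vertices in $V(G)$ are $u$ and $v$, so $S' \subseteq V(G)$ must contain one of $u, v$, making it a vertex cover of $G$.

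The one delicate point, and the only place where a ``clever'' choice is needed, is balancing two opposing constraints on $\ell$: it must be at most $d-2$, otherwise a $d$-path would fit inside a single pendant path and the forward direction would break, and at least $\lceil (d-2)/2 \rceil$, otherwise the witness $d$-path per edge in the reverse direction would have fewer than $d$ vertices. The value $\ell = \lceil (d-2)/2 \rceil$ lies in both intervals for every $d \ge 3$, so the rest of the argument is routine verification.
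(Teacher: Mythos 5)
Your proposal is correct and follows essentially the same approach as the paper: attach a pendant path to every vertex of $G$, keep the parameter $k$ unchanged, and argue both directions via the observation that every $d$-path either crosses an original edge or is hit through the root of a pendant. The only (harmless) difference is the pendant length — the paper uses $d-2$ pendant vertices so the witness $d$-path for an uncovered edge $\{u,v\}$ runs through one pendant only, while you use $\lceil (d-2)/2\rceil$ and split the witness path over both pendants, which also works.
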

\begin{proof}
Given an instance $(G,k)$ of \textsc{Vertex Cover}, we construct a graph $G'$ from $G$ by adding one $(d-2)$-path $P_v$ to each vertex in $v \in V(G)$, i.e., we copy the graph $G$ into $G'$ and for each vertex $v \in V(G)$ we add the path $P_v = (v_1, v_2, \dots, v_{d-2})$ together with an edge $\{v, v_1\}$ to $G'$. The paths $P_u, P_v$ for each two vertices $u,v \in V(G)$ are disjoint.

We now have to show that $(G,k)$ is a YES instance of \textsc{Vertex Cover} if and only if $(G',k)$ is a YES instance of \textsc{$d$-Path Vertex Cover}.

For the first implication, let $S$ be a solution for $(G,k)$. As $S$ is a vertex cover, there are only isolated vertices in $G \setminus S$. As we added only a $(d-2)$-path to each vertex in $G$, there are only $(d-1)$-paths in $G' \setminus S$. Therefore, $S$ is also a solution for $(G', k)$.

For the second implication, let $S'$ be a solution for $(G',k)$. Let $S$ be the set of vertices $v \in V(G)$ for which $S' \cap (\{v\} \cup P_v) \neq \emptyset$, i.e., $S = \{v \mid v \in V(G), S' \cap (\{v\} \cup P_v) \neq \emptyset\}$. We claim that $S$ is a solution for the instance $(G,k)$. Suppose that it is not. Then there is an edge $\{u,v\}$ in $G \setminus S$. But that means that $S' \cap (\{u\} \cup P_u \cup \{v\} \cup P_v) = \emptyset$ and there is a $d$-path $(u,v,v_1,v_2,\dots,v_{d-2})$ in $G' \setminus S'$, which is a contradiction with $S'$ being a solution for $(G', k)$.
\end{proof}

We summarise the original claim in the following corollary.

\begin{corollary}
For any $d \ge 3$ and any $\varepsilon >0$ the \textsc{$d$-Path Vertex Cover} problem parameterized by the solution size does not admit a polynomial compression (in particular a kernel) with bitsize $\bigo(k^{2-\varepsilon})$, unless coNP $\subseteq$ NP/poly.
\end{corollary}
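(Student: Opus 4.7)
The plan is to derive the corollary directly by composing the polynomial parameter transformation of \autoref{lemma:dpvc_ppt} with a hypothetical compression for \dPVC, and then invoking \autoref{theorem:no_kernel_for_vc}. This is the standard way to transfer kernelization lower bounds along a PPT that preserves the parameter up to a polynomial.

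More concretely, I would argue by contradiction. Suppose that for some $d \ge 3$ and some $\varepsilon > 0$ the \dPVC{} problem, parameterized by the solution size, admits a polynomial compression of bitsize $O(k^{2-\varepsilon})$; call this algorithm $\mathcal{C}$. Given an instance $(G,k)$ of \textsc{Vertex Cover}, I would first apply the PPT of \autoref{lemma:dpvc_ppt} to obtain, in polynomial time, an equivalent instance $(G',k)$ of \dPVC{} with exactly the same parameter value. I would then feed $(G',k)$ into $\mathcal{C}$, producing in polynomial time a string $y$ of bitsize $O(k^{2-\varepsilon})$ which encodes a YES/NO instance of some target language $R$, and which satisfies $y \in R$ iff $(G',k)$ is a YES instance of \dPVC, iff $(G,k)$ is a YES instance of \textsc{Vertex Cover}.

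The composition of two polynomial-time algorithms is polynomial-time, and the output size is still $O(k^{2-\varepsilon})$ because the PPT preserves the parameter exactly (it outputs parameter $k$, not $p(k)$). Hence this composition is itself a polynomial compression of \textsc{Vertex Cover} parameterized by solution size, of bitsize $O(k^{2-\varepsilon})$. By \autoref{theorem:no_kernel_for_vc}, the existence of such a compression implies $\textup{coNP} \subseteq \textup{NP/poly}$, which is the conclusion claimed by the corollary.

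There is essentially no obstacle here, because the parameter-preserving PPT has already been established in \autoref{lemma:dpvc_ppt}; the only thing to be careful about is that the PPT preserves the parameter exactly (so the bound $O(k^{2-\varepsilon})$ transfers without any blow-up in the exponent), which is immediate from the construction that outputs the instance $(G',k)$ with the same $k$. No further computation is required.
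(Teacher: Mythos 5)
Your proposal is correct and matches the paper's own proof: both compose the parameter-preserving PPT of \autoref{lemma:dpvc_ppt} with the hypothesized $O(k^{2-\varepsilon})$-bitsize compression for \dPVC{} to obtain such a compression for \textsc{Vertex Cover}, and then invoke \autoref{theorem:no_kernel_for_vc} to conclude coNP $\subseteq$ NP/poly. Your added remark that the parameter is preserved exactly (so no blow-up in the exponent occurs) is the only point where you are slightly more explicit than the paper.
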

\begin{proof}
Suppose that \textsc{$d$-Path Vertex Cover} admits a polynomial compression $\mathcal{K}$ with bitsize $\bigo(k^{2-\epsilon})$.
Let $(G,k)$ be an instance of \textsc{Vertex Cover}. By \autoref{lemma:dpvc_ppt} there is a PPT $\mathcal{A}$ which takes the instance $(G,k)$ and outputs an equivalent instance $(G',k)$ of \textsc{$d$-Path Vertex Cover}.
We now run the compression $\mathcal{K}$ on $(G',k)$ to obtain an equivalent instance $y$ with bitsize $\bigo(k^{2-\epsilon})$.
This way we obtain a polynomial compression with bitsize $\bigo(k^{2-\epsilon})$ for \textsc{Vertex Cover}.
By \autoref{theorem:no_kernel_for_vc}, this implies coNP $\subseteq$ NP/poly.
\end{proof}
}%

The current best kernels known are a kernel for \textsc{Vertex Cover} with $2k - c\log{k}$ vertices for any fixed constant $c$ due to Lampis~\cite{Lampis11} and a kernel with $5k$ vertices for \textsc{$3$-PVC} due to Xiao and Kou~\cite{XiaoK17}. No specific kernels are known for \textsc{$d$-PVC} with $d \ge 4$, except for those inherited from $d$-\textsc{Hitting Set}.

Dell and Marx~\cite{DellM18} recently studied kernels for the related \textsc{$d$-Path Packing} problem, which also inspired our work.

\paragraph*{Our contribution.}
We give kernels with $\bigo(k^2)$ edges for \textsc{4-PVC} and \textsc{5-PVC} (asymptotically optimal, unless coNP $\subseteq$ NP/poly).
Furthermore, for the general case, we give a kernel for {\dPVC} for any $d\geq6$ with $\bigo(k^4d^{2d+9})$ edges.
% Also, inspired by the recent work of Dell and Marx~\cite{DellM18} on the related \textsc{$d$-Path Packing} problem,  .

\section{Preliminaries}\label{sec:prelims}
\toappendix{\section{Additional Material to \autoref{sec:prelims}}}%
% We use the $\ostar$ \textit{notation} as described by Fomin and Kratsch~\cite{FominK10}, which is a~modification of the big-$\bigo$ notation suppressing all polynomially bounded factors.\os{We do not, not so far. Also useless in preliminaries.}
We use the notations related to parameterized complexity as described by Cygan et al.~\cite{CyganFKLMPPS15}.  We consider simple and undirected graphs unless otherwise stated. For a graph $G$, we use $V(G)$ to denote the vertex set of $G$ and $E(G)$ to denote the edge set of $G$. By $G[X]$ we denote the subgraph of~$G$ induced by vertices of $X \subseteq V(G)$. By $N(v)$ we denote the~set of neighbors of $v \in V(G)$ in~$G$. Analogically, $N(X) = \bigcup_{x \in X} N(x)\setminus X$ denotes the~set of neighbors of vertices in $X \subseteq V(G)$. The degree of vertex~$v$ is denoted by $\deg(v) = |N(v)|$. For simplicity, we write $G \setminus v$ for $v \in V(G)$ and $G \setminus X$  for $X \subseteq V(G)$ as shorthands for $G[V(G)\setminus \{v\}]$ and $G[V(G)\setminus X]$, respectively.
% we only used $G-v$ or $G-X$ twice in the paper

A~\textit{$d$-path} (also denoted by $P_d$), denoted as an ordered~$d$-tuple $(p_1, p_2, \ldots, p_d)$, is a~path on $d$~vertices $\{p_1, p_2, \ldots, p_d\}$. A~\textit{$d$-path free} graph is a~graph that does not contain a~$d$-path as a~subgraph (the $d$-path needs not to be induced).
The \emph{length} of a path $P$ is the number of edges in $P$, in particular, the length of a $d$-path $P_d$ is $d-1$.

The \textsc{$d$-Path Vertex Cover} problem is formally defined as follows:

\vspace{2mm}
\noindent
\begin{tabularx}{\textwidth}{|l|X|}
  \hline
\multicolumn{2}{|l|}{\textsc{$d$-Path Vertex Cover, $d$-PVC}} \\ \hline
  \textsc{Input}: & A~graph $G=(V,E)$, a non-negative integer $k$. \\
  \textsc{Output}: & A~set $S \subseteq V$, such that $|S| \leq k$ and $G \setminus S$ is a~$P_d$-free graph. \\
  \hline
\end{tabularx}
\vspace{2mm}

% \begin{definition}
A $d$-path packing $\PP$ of size $l$ in a graph $G$ is a collection of $l$ vertex disjoint $d$-paths in the graph~$G$. We use $V(\PP)$ to denote the union of the vertex sets of the $d$-paths in the packing $\PP$. For rest of the graph theory notations we refer to Diestel~\cite{diestel}.
%\end{definition}

For a positive integer $i$, we will use $[i]$ to denote the set $\{1,2,\dots,i\}$.

\begin{proposition}[\appmark]\ifappendix{\footnote{Proofs of (correctness of) items marked with (\appmark{}) can be found in the Appendix.}}
\label{prop:greedy_packing}
For a given graph $G$ and an integer $k$, there is an algorithm which either correctly answers whether $G$ has a $d$-path vertex cover of size at most $k$,
or finds an inclusion-wise maximal $d$-path packing $\PP$ of size at most $k$ in $\bigo(2.619^d d k n\log{n})$ time.
\end{proposition}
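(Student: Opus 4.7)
The plan is to run a straightforward greedy packing procedure, using the $\ostar(2.619^d)$ algorithm for finding a single $d$-path (via representative sets of Fomin et al.\ \cite{rep-sets}) as a black box. Initialize $\PP := \emptyset$. Then repeat the following: invoke the Fomin et al.\ algorithm on $G \setminus V(\PP)$ to search for a $d$-path $P$; if $P$ is found, add it to $\PP$ and, if $|\PP|$ now exceeds $k$, halt and output NO; if no $d$-path exists, halt and output $\PP$.

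For correctness, observe that the loop terminates either by producing a $d$-path packing of $k+1$ vertex-disjoint $d$-paths, or because $G \setminus V(\PP)$ is $P_d$-free, meaning $\PP$ is inclusion-wise maximal. In the first case, every $d$-path vertex cover must contain at least one vertex from each of the $k+1$ pairwise vertex-disjoint $d$-paths in $\PP$, so no solution of size $\le k$ exists and the NO answer is correct. In the second case we return a maximal packing of size at most $k$, as required.

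For the running time, the main loop runs at most $k+1$ times. Each iteration invokes the $d$-path search on an induced subgraph of $G$, which by \cite{rep-sets} takes $O(2.619^d \cdot d \cdot n \log n)$ time (the polynomial factor accounting for the manipulation of the representative families over a universe of size~$n$). Multiplying by the $O(k)$ iterations yields the claimed bound $O(2.619^d d k n \log n)$. Removing the vertices of each discovered $d$-path to obtain the next induced subgraph is subsumed by this cost.

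There is no real obstacle here: the only point worth flagging is that we must invoke the path-finding routine on $G \setminus V(\PP)$ rather than on $G$, so that a freshly returned path is automatically vertex-disjoint from all previously packed paths, which is what makes the packing vertex-disjoint and what justifies the $k+1$ lower bound used in the NO case.
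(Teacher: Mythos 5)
Your proposal is correct and takes essentially the same approach as the paper: the same greedy loop making at most $k+1$ calls to the $d$-path finding routine on $G \setminus V(\PP)$, the same justification of the NO answer via $k+1$ vertex-disjoint paths, the same maximality argument, and the same running-time accounting. The only cosmetic differences are that the paper answers YES explicitly when the packing comes out empty (returning the empty maximal packing, as you do, also satisfies the statement) and attributes the extra factor $d$ to turning the decision algorithm of Fomin et al.\ into a search algorithm via the framework of Bj\"{o}rklund et al., rather than to the decision routine itself.
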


\toappendix{%
% \begin{proof}
\begin{proof}[Proof of \autoref{prop:greedy_packing}]
The algorithm of Fomin et al.~\cite{rep-sets} decides whether there is a $d$-path in $G$ in time $\bigo(2.619^dn\log{n})$.
By a framework of Bj{\"{o}}rklund et al.~\cite{BjorklundKK14} the algorithm can turned into one that actually finds a $d$-path in time $\bigo(2.619^d d n\log{n})$.
We then call the algorithm to find a $d$-path at most $k+1$ times to construct an appropriate answer (see \autoref{algorithm:greedy_packing}.
Therefore, the total running time of our greedy algorithm is $\bigo(2.619^d d k n\log{n})$.

\begin{algorithm}[H]
 \KwIn{A graph $G$, a non-negative integer $k$.}
 \KwOut{An inclusion-wise maximal $d$-path packing $\PP$, or a decision on whether $G$ has a $d$-path vertex cover of size $\le k$}
 $\PP \gets \emptyset$\\
 \While{There is a $d$-path $P_d$ in $G \setminus V(\PP)$ and $|\PP| \leq k$ }{
     Find and add the $d$-path $P_d$ to the packing $\PP$.
 }%while
 \lIf (\tcp*[f]{there is a $d$-path vertex cover of size $\le k$}){$|\PP|=0$}{ answer YES}
 \lIf (\tcp*[f]{there is no $d$-path vertex cover of size $\le k$}){$|\PP| \geq k+1$}{answer NO}
 \KwRet{$\PP$}
 \caption{Greedy $d$-path packing algorithm.}
 \label{algorithm:greedy_packing}
\end{algorithm}

To see that the algorithm answers correctly, simply observe that for each $d$-path in $\PP$ there must be at least one vertex in any solution for $G$. Therefore if $|\PP| \geq k+1$, then any solution will use at least $k+1$ vertices and therefore the answer is NO. In the case where $|\PP| = 0$, the graph $G$ is already $P_d$-free graph, and the answer is YES. Finally, if the first two cases do not apply, we return an inclusion-wise maximal $d$-path packing $\PP$ with at most $k$ paths. If the packing would not be inclusion-wise maximal, the algorithm would simply find a larger packing.
\end{proof}
}% to appendix

\section{General Reduction Rules}\label{sec:general}
\toappendix{\section{Additional Material to \autoref{sec:general}}}%
Let us start with reduction rules that apply to $d$-\PVC{} for most values of $d$.
Assume that we are working with an instance $(G=(V,E), k)$ of $d$-\PVC{} for some $d \ge 4$.
We start with a reduction rule whose correctness is immediate.

\begin{rrule}\label{rule:component}
 If there is a connected component $C$ in $G$ which does not contain a $P_d$, then remove $C$.
\end{rrule}

The next rule allows us to get rid of multiple degree-one vertices adjacent to a single vertex.
\begin{rrule}[\appmark]
\label{rule:degree-one}
Let there be three distinct vertices $v, x, y \in V$ such that $N(x) =  N(y) = \{v\}$. We reduce the instance by deleting the vertex $x$.
\end{rrule}

\toappendix{%
% \noindent
% \textbf{Reduction Rule} \ref{rule:degree-one}.
% {\em Let there be three distinct vertices $v, x, y \in V$ such that $N(x) =  N(y) = \{v\}$. We reduce the instance by deleting the vertex $x$.}
\begin{proof}[Proof of Correctness of \autoref{rule:degree-one}]
%\begin{proof}
Let $(G,k)$ be the original instance and $(G',k)$ be the reduced one, i.e., $G' = G \setminus x$.
Since $G'$ is a subgraph of $G$, if $S$ is a solution for $G$, then $S\setminus \{x\}$ is a solution for $G'$.
Hence we focus on the other direction.

Let $S'$ be a solution for $G'$.
If $S'$ is a solution for $G$, then we are done.
Suppose it is not.
This means that there is a $d$-path in $G \setminus S'$.
Such a path necessarily uses vertex $x$.
Also observe, that $v \notin S'$, otherwise $x$ would be isolated in $G \setminus S'$ and, hence, could not be part of a $d$-path.

On one hand, suppose that $y \in S'$.
Let $S = (S' \setminus \{y\}) \cup \{v\}$.
We claim that $S$ is a solution for $G'$ and $G$. Indeed, any $d$-path which uses vertex $y$ must go through vertex $v$ as $N(y) = \{v\}$, which means that any such path is covered by the set $S$. The same argument works in the case of vertex $x$ and therefore the set $S$ is also a solution for graph $G$. Lastly, we have that $|S| \leq |S'|$, as we are only switching vertex $y$ for $v$.

On the other hand, suppose that $y \notin S'$. Let the $d$-path in $G \setminus S'$ be $(x,v,u_3,\ldots,u_d)$. We have that $\{x,v,u_3,\ldots,u_d\} \cap S' = \emptyset$. But that means, that there is a $d$-path $(y,v,u_3,\ldots,u_d)$ in $G' \setminus S'$ contradicting the fact, that $S'$ is a solution for $G'$.
\end{proof}
}% to appendix

\section{High Degree Reduction Rule for $4$-\PVC{} and $5$-\PVC{}}\label{sec:high_degree}
\toappendix{\section{Additional Material to \autoref{sec:high_degree}}}%
In this section, we are going to introduce the reduction rules which are applicable to both $4$-\PVC{} and $5$-\PVC{} instances.
We assume that we are working with a $d$-\textsc{PVC} instance $(G=(V,E), k)$ for $d \in \{4,5\}$ which is reduced by exhaustively employing \autoref{rule:degree-one}.

Our aim is to show that the degree of each vertex can be reduced to linear in the parameter.
First assume that there is a large matching in the neighborhood of some vertex $v$.
We call a matching $\M$ in $G$ \emph{adjacent} to vertex $v$, if it is a matching in $G\setminus v$ and for each edge $\{a_i, b_i\}\in \M$ at least one of its vertices, say $a_i$, is adjacent to $v$ in $G$.

\begin{rrule}[\appmark]
\label{rule:matching}
If $v$ is a vertex and $\M$ a matching adjacent to $v$ of size $|\M| \ge k+2$, then delete $v$ and decrease $k$ by $1$.
\end{rrule}

\toappendix{%
% \begin{proof}[of Correctness]
\begin{proof}[Proof of Correctness of \autoref{rule:matching}]
Let $M$ be the set of vertices covered by matching $\M$.
If $S'$ is a solution for $G'=G\setminus\{v\}$ of size at most $k'=k-1$, then $S' \cup \{v\}$ is a solution for $G$ of size at most $k$.
If $S$ is a solution for $G$ of size at most $k$ that contains $v$, then $S \setminus \{v\}$ is a solution for $G'$ of size at most $k'$.
Suppose that there is a solution $S$ for $G$ which does not use the vertex $v$.
The solution $S$ deletes at most $k$ vertices from $M$ which leaves us with at least two distinct uncovered edges $\{a_i, b_i\}, \{a_j, b_j\}$ in $\M \setminus S$. But then we have a 5-path $(b_i,a_i, v, a_j, b_j)$ in $G \setminus S$ which is a contradiction with $S$ being a solution for~$G$.
\end{proof}
}% to appendix

To exhaustively apply \autoref{rule:matching}, we need to find for each $v \in V$ a largest matching adjacent to $v$.
This can be done as follows. Let $A=N(v)$ and $B=N(A) \setminus \{v\}$.
Let $G_v$ be the graph obtained from $G[A \cup B]$ by removing edges with both endpoints in $B$.
It is easy to observe, that each matching adjacent to $v$ is also a matching in $G_v$ and vice-versa.
Hence, it suffices to find a largest matching in $G_v$, which can done in polynomial time~\cite{Edmonds65}.

Therefore, we further assume that the instance is reduced with respect to \autoref{rule:matching}.
% Let $v \in V$ be a vertex of $G$ with $\deg(v) \geq 7k + 8$.
% We greedily construct\todo{explain} a maximal matching $\M$ of $m$ edges $\{a_1, b_1\}, \{a_2, b_2\}, \ldots, \{a_m, b_m\}$ such that $\{a_1, a_2, \ldots, a_m\} \subseteq N(v)$.
We fix a vertex $v$ and find a largest matching $\M$ adjacent to it by the above algorithm.
Let $M$ be the set of vertices covered by matching $\M$ and $m = |\M|$.
Since the instance is reduced, we know that $m \le k+1$.
Let $X = N(v) \setminus M$.
We refer the reader to the \ifappendix{\autoref{fig:setting} (Appendix) or }\autoref{fig:setting2} for overview of our setting.

\toappendix{
\begin{figure}[!h]
\begin{center}
\begin{tikzpicture}[every node/.style={draw, fill, circle, inner sep=1.5pt}]
\node[label=above:$v$, red] (v) at (-1,4.5) {};

% matching M
\begin{scope}[shift={(0,0)}]
\foreach \i in {1,2,3}{
  \node[label=above :$a_{\i}$] (a\i) at (\i,2) {};
  \node[label=below :$b_{\i}$] (b\i) at (\i,1) {};
  \draw[thick] (a\i) to (b\i);
  \draw[thick, bend right=10, vertexVorangeEdge] (v) to (a\i);
  \draw[dashed, thick, bend right=10, vertexVorangeEdge] (v) to (b\i);
}
  \node[label=above :$a_{m}$] (a4) at (5,2) {};
  \node[label=below :$b_{m}$] (b4) at (5,1) {};
  \draw[thick] (a4) to (b4);
  \draw[thick, bend right=5, vertexVorangeEdge] (v) to (a4);
  \draw[dashed, thick, bend right=10, vertexVorangeEdge] (v) to (b4);
  \node[draw=none, fill=none] at (4,1.5) {$\cdots$};

\begin{scope}[on background layer]
    \node[rectangle,color=matchingMBoxDraw,draw,fill=matchingMBoxFill,dashed,inner sep=1cm,rounded corners,fit=(a1)(b4),label={left:$M$}] (Mfitter) {};
\end{scope}
 
\foreach \i in {1,2,3,4}{ 
\foreach \j in {1,2,3,4}{ 
 \draw[dashed, black!80!white] (a\i) to (b\j);
   
  \ifthenelse{\i<\j}{\draw[dashed, black!80!white, bend left=20] (a\i) to (a\j);}{}
  \ifthenelse{\i<\j}{\draw[dashed, black!80!white, bend right=20] (b\i) to (b\j);}{}

}
} 
 
\end{scope}
  
% set X
\begin{scope}[shift={(0,4)}]
\foreach \i in {1,2,...,6}{
  \node[setXvertex] (x\i) at (\i,0) {}; 
  \draw[thick, bend left=5, vertexVorangeEdge] (v) to (x\i);
 }
 \node[draw=none, fill=none] at (7,0) {$\cdots$};   
 \node[setXvertex] (x7) at (8,0) {}; 
  \draw[thick, bend left=5, vertexVorangeEdge] (v) to (x7);

\foreach \i in {1,2,3,4}{ 
\foreach \j in {1,2,...,7}{
  \draw[dashed, setXvertex!50!white] (a\i) to (x\j);
  \draw[dashed, setXvertex!50!white] (b\i) to (x\j);  
}  
}

%todo: edges inside M
\begin{scope}[on background layer]
    \node[rectangle,color=setXBoxDraw,draw,fill=setXBoxFill,dashed,inner sep=0.5cm,rounded corners,fit=(x1)(x7),label={right:$X$}] (Xfitter) {};
  \end{scope}

\end{scope}

% redraw the coords so they are visible
\foreach \i in {1,2,3}{
\node[label=above :$a_{\i}$] (a{\i}label) at (a\i) {};
\node[label=below:$b_{\i}$] (b{\i}label) at (b\i) {};
}
 \node[label=above :$a_{m}$] (a4label) at (a4) {};
  \node[label=below :$b_{m}$] (b4label) at (b4) {};

\end{tikzpicture}
\end{center}
\caption{An overview of the setting of vertex $v$ and sets $M, X$ for the high degree reduction rule.}
\label{fig:setting}
\end{figure}
}%to appendix

\begin{observation}[\appmark]
\label{obs:first}
For each $x \in X$ we have $N(x) \setminus \{v\} \subseteq M$.
\end{observation}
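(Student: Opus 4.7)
The plan is to prove the observation by contradiction, leveraging the maximality of the matching $\M$ chosen above. Suppose for some $x \in X$ there is a neighbor $y \in N(x) \setminus \{v\}$ with $y \notin M$. First I would verify that the edge $\{x,y\}$ is a legitimate candidate to extend $\M$: since $x \in X = N(v) \setminus M$ we have $x \notin M$ and $x \neq v$; by assumption $y \notin M$ and $y \neq v$; and $x \neq y$ because $G$ is simple. Hence neither endpoint of $\{x,y\}$ is already covered by $\M$.

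Next I would observe that $\M \cup \{\{x,y\}\}$ is still a matching in $G \setminus v$: the new edge is vertex-disjoint from all edges of $\M$, and it avoids $v$ since $v \notin \{x,y\}$. Moreover, it is still adjacent to $v$ in the sense defined before Reduction Rule~\ref{rule:matching}, because $x \in N(v)$ supplies an endpoint adjacent to $v$ for the new edge (while every old edge retains its original vertex adjacent to $v$).

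This contradicts the choice of $\M$ as a largest (and in particular inclusion-wise maximal) matching adjacent to $v$. Therefore no such $y$ can exist, i.e.\ every neighbor of $x$ other than $v$ lies in $M$. I do not expect any real obstacle here; the only thing to be careful about is to check each requirement of "matching adjacent to $v$" — vertex-disjointness from $\M$, avoidance of $v$, and having an endpoint in $N(v)$ — which the case analysis above handles directly.
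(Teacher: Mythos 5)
Your proof is correct and follows exactly the same route as the paper: assume a neighbor of $x$ outside $M \cup \{v\}$ exists and extend $\M$ by the edge $\{x,y\}$ to contradict the maximality of the matching adjacent to $v$. Your extra verification that the new edge is vertex-disjoint from $\M$, avoids $v$, and has an endpoint in $N(v)$ is a fuller spelling-out of the same argument, not a different one.
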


\toappendix{%
% \noindent
% \textbf{Observation} \ref{obs:first}.
% {\em For each $x \in X$ we have $N(x) \setminus \{v\} \subseteq M$.}
%
\begin{proof}[Proof of \autoref{obs:first}]
% \begin{proof}
Suppose that there is a vertex $x \in X$ and a vertex $u \in N(x) \setminus (M \cup \{v\})$. We construct a new larger matching $\M' = \M \cup \{\{x, u\}\}$ adjacent to $v$, contradicting the assumption that $\M$ is a largest matching.
\end{proof}
}% to appendix

\begin{observation}[\appmark]
\label{obs:noOppositeEnds}
No two distinct vertices $x,y \in X$ are connected to the opposite endpoints of a single edge $\{a_i, b_i\}$ in $\M$.
%Otherwise, we can increase the size of the matching $\M$.
\end{observation}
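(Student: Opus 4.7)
The plan is a direct swap argument by contradiction, analogous to the proof of \autoref{obs:first}: assuming the configuration forbidden by the observation exists, we build a strictly larger matching adjacent to $v$, contradicting the maximality of $\M$.

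Suppose for contradiction that there are distinct $x, y \in X$ and an edge $\{a_i, b_i\} \in \M$ such that $\{x, a_i\}, \{y, b_i\} \in E(G)$. Define
\[
\M' = (\M \setminus \{\{a_i, b_i\}\}) \cup \{\{x, a_i\}, \{y, b_i\}\}.
\]
I would then verify the three required properties of $\M'$. First, $\M'$ is a matching in $G \setminus v$: all four vertices $x, y, a_i, b_i$ are pairwise distinct (since $x \ne y$, $x, y \in X \subseteq V \setminus M$ while $a_i, b_i \in M$, and none equals $v$ as $X \subseteq N(v)$ and $M \subseteq V \setminus \{v\}$), and the two new edges share no endpoint with each other or with any remaining edge of $\M$ (the only vertices of $\M$ that could conflict were $a_i, b_i$, whose matching partner we removed). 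Second, $\M'$ is adjacent to $v$: each new edge contains an endpoint in $N(v)$, namely $x$ for $\{x, a_i\}$ and $y$ for $\{y, b_i\}$, while the rest of $\M$ already satisfied the adjacency condition. Third, $|\M'| = |\M| + 1$, since we removed one edge and added two.

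This contradicts the choice of $\M$ as a largest matching adjacent to $v$, completing the proof.

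The only place that needs any care is the disjointness check for $\M'$; everything else is immediate from the definitions of $X$ and $M$.
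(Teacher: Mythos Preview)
Your proof is correct and follows exactly the same swap argument as the paper: replace $\{a_i,b_i\}$ in $\M$ by the two edges $\{x,a_i\}$ and $\{y,b_i\}$ to obtain a strictly larger matching adjacent to $v$. You even supply more detail than the paper in verifying that $\M'$ is a valid matching adjacent to $v$.
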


\toappendix{%
% \noindent
% \textbf{Observation} \ref{obs:noOppositeEnds}.
% {\em No two distinct vertices $x,y \in X$ are connected to the opposite endpoints of a single edge $\{a_i, b_i\}$ in $\M$.}
%
\begin{proof}[Proof of \autoref{obs:noOppositeEnds}]
% \begin{proof}
Without loss of generality, assume that $a_i \in N(x)$ and $b_i \in N(y)$.
For an illustration see \autoref{fig:noOppositeEnds}.
We construct a new larger matching $\M' = (\M \setminus \{a_i, b_i\}) \cup \{\{x, a_i\}, \{y, b_i\}\}$ adjacent to $v$,
contradicting the assumption that $\M$ is a largest matching.
\end{proof}
\begin{figure}[ht!]
\begin{center}
\begin{tikzpicture}[every node/.style={draw, fill, circle, inner sep=1.5pt}]

% left figure
\begin{scope}[shift={(-1,0)}]
\node[label=above:$v$, red] (v) at (2,3) {};

% matching M
\node[label=above left:$a_{i}$] (a) at (1,2) {};
\node[label=below left:$b_{i}$] (b) at (1,1) {};
\draw[very thick] (a) to (b);
\draw[thick, bend right=20, vertexVorangeEdge] (v) to (a);
\draw[dashed, thick, bend right=20, vertexVorangeEdge] (v) to (b);

\begin{scope}[on background layer]
    \node[rectangle,color=matchingMBoxDraw,draw,fill=matchingMBoxFill,dashed,inner sep=0.5cm,rounded corners,fit=(a)(b),label={left:$M$}] (Mfitter) {};
\end{scope}

\node[setXvertex, label=above right:$x$] (x) at (3,2) {};
\node[setXvertex, label=above right:$y$] (y) at (4,2) {};
\draw[thick, bend left=20, vertexVorangeEdge] (v) to (x);
\draw[thick, bend left=20, vertexVorangeEdge] (v) to (y);
\draw[setXvertex, thick] (a) to (x);
\draw[setXvertex, thick, bend right] (b) to (y);
\draw[setXvertex, dashed] (b) to (x);
\draw[setXvertex, dashed, bend right] (a) to (y);

\begin{scope}[on background layer]
    \node[rectangle,color=setXBoxDraw,draw,fill=setXBoxFill,dashed,inner sep=0.5cm,rounded corners,fit=(x)(y),label={right:$X$}] (Xfitter) {};
  \end{scope}

\end{scope}

\draw[line width=2pt,-{Stealth[round]}] (5,2) to (6,2);

% right figure

\begin{scope}[shift={(7,0)}]
\node[label=above:$v$, red] (v) at (2,3) {};
\node[label=above left:$a_{i}$] (a) at (1,2) {};
\node[label=below left:$b_{i}$] (b) at (1,1) {};
\draw[dashed] (a) to (b);
\draw[thick, bend right=20, vertexVorangeEdge] (v) to (a);
\draw[dashed, thick, bend right=20, vertexVorangeEdge] (v) to (b);
\node[label=above right:$x$] (x) at (3,2) {};
\node[label=above right:$y$] (y) at (4,2) {};
\draw[thick, bend left=20, vertexVorangeEdge] (v) to (x);
\draw[thick, bend left=20, vertexVorangeEdge] (v) to (y);
\draw[thick] (a) to (x);
\draw[thick, bend right] (b) to (y);
\draw[dashed] (b) to (x);
\draw[dashed, bend right] (a) to (y);

\begin{scope}[on background layer]
    \node[rectangle,color=matchingMBoxDraw,draw,fill=matchingMBoxFill,dashed,inner sep=0.5cm,rounded corners,fit=(a)(b)(y),label={left:$M'$}] (Mfitter) {};
\end{scope}

\end{scope}

\end{tikzpicture}
\end{center}
\caption{An illustration of the situation in \autoref{obs:noOppositeEnds}.}
\label{fig:noOppositeEnds}
\end{figure}
}% to appendix

\begin{observation}[\appmark]\label{obs:onlyOneDouble}
If there is a vertex $x \in X$ such that for some edge $\{a_i, b_i\}$ in the matching $\M$ we have that $\{a_i, b_i\} \subseteq N(x)$, then $N(\{a_i,b_i\}) \cap X = \{x\}$.
% there is no other vertex $y \in X$ such that $\{a_i, b_i\} \cap N(y) \neq \emptyset$.
%Otherwise, we can increase the size of the matching $\M$.
\end{observation}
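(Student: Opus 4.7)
The plan is to prove the statement by contradiction, leveraging the maximality of $\M$ as an adjacent matching to $v$, in the same style as the proofs of \autoref{obs:first} and \autoref{obs:noOppositeEnds}.

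Suppose, for contradiction, that there is some vertex $y \in X$ with $y \neq x$ such that $y \in N(\{a_i, b_i\})$. Without loss of generality assume $y \in N(a_i)$; the case $y \in N(b_i)$ is symmetric (just swap the roles of $a_i$ and $b_i$). I would then exhibit a matching larger than $\M$ that is still adjacent to $v$, namely
\[
\M' = (\M \setminus \{\{a_i, b_i\}\}) \cup \{\{x, b_i\}, \{y, a_i\}\}.
\]
Since both $x$ and $y$ lie in $X \subseteq N(v)$ and are distinct from every matched vertex in $\M$ (recall $X = N(v) \setminus M$), the four vertices $x, y, a_i, b_i$ are all distinct and the two newly added edges share no endpoint with each other nor with any remaining edge of $\M$. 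Hence $\M'$ is indeed a matching. Moreover, each new edge has at least one endpoint ($x$ and $y$, respectively) adjacent to $v$, so $\M'$ is adjacent to $v$. Finally, $|\M'| = |\M| + 1$, contradicting the maximality of $\M$ as an adjacent matching to $v$.

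The argument is entirely analogous to \autoref{obs:noOppositeEnds}, with the only subtlety being to verify that all involved vertices are distinct; this follows from $y \neq x$, from $x, y \in X$ being disjoint from $M \supseteq \{a_i, b_i\}$, and from $\{a_i,b_i\}$ being a matching edge. I do not expect any real obstacle here — the statement is essentially a direct consequence of maximality once the right augmenting swap is identified.
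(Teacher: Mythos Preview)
Your proof is correct. The paper's own argument is more concise: instead of constructing the augmenting matching directly, it simply invokes \autoref{obs:noOppositeEnds}. Since $x$ is adjacent to both $a_i$ and $b_i$, any other $y \in X$ adjacent to (say) $a_i$ would, together with $x$ being adjacent to $b_i$, give two distinct vertices of $X$ connected to opposite endpoints of the edge $\{a_i,b_i\}$, contradicting \autoref{obs:noOppositeEnds}. Your construction of $\M'$ is precisely the augmentation used in the proof of \autoref{obs:noOppositeEnds}, so the two approaches coincide at the level of ideas; the paper just factors the work through the earlier observation rather than repeating it inline.
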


\toappendix{%
% \noindent
% \textbf{Observation} \ref{obs:onlyOneDouble}.
% {\em If there is a vertex $x \in X$ such that for some edge $\{a_i, b_i\}$ in the matching $\M$ we have that $\{a_i, b_i\} \subseteq N(x)$, then $N(\{a_i,b_i\}) \cap X = \{x\}$}
%
\begin{proof}[Proof of \autoref{obs:onlyOneDouble}]
% \begin{proof}
Let $y \in ((N(\{a_i,b_i\}) \cap X) \setminus \{x\})$.
If $a_i \in N(y)$, then, as $b_i \in N(x)$ we get a contradiction with \autoref{obs:noOppositeEnds}. Similarly if $b_i \in N(y)$, we can derive a contradiction.
\end{proof}
}% to appendix

% \begin{observation}\label{obs:onlyOneOneDeg}
% There is at most one vertex $x \in X$ such that $N(x) = \{v\}$.
% \end{observation}
%
% \begin{proof}
% Follows directly from the fact that the problem instance $(G=(V,E), k)$ is reduced with respect to the degree-one reduction rule.
% \end{proof}

We now partition the set $X$ into three sets.
Let $X_2$ be the set of vertices such that for each $x \in X_2$ we have some edge $\{a_i, b_i\}$ in the matching $\M$ such that $\{a_i, b_i\} \subseteq N(x)$.
Let $X_0$ be the vertices such that for each $x \in X_0$ we have that $N(x) = \{v\}$.
Note, that $X_0$ contains at most one vertex due to \autoref{rule:degree-one} being exhaustively applied.
Lastly, let $X_1 = X \setminus (X_2 \cup X_0)$ be the rest of the vertices in $X$.
See \autoref{fig:setting2} for an illustration of the sets $X_2$, $X_0$, and~$X_1$.

\begin{figure}[!h]
\begin{center}
\begin{tikzpicture}[every node/.style={draw, fill, circle, inner sep=1.5pt}]
\node[label=above:$v$, red] (v) at (2,4.5) {};

% matching M
\begin{scope}[shift={(0,0)}]
\foreach \i in {1,2,3}{
  \node[label=above left:$a_{\i}$] (a\i) at (\i,2) {};
  \node[label=below left:$b_{\i}$] (b\i) at (\i,1) {};
  \draw[thick] (a\i) to (b\i);
  \draw[thick, bend right=20, vertexVorangeEdge] (v) to (a\i);
  \draw[dashed, thick, bend right=20, vertexVorangeEdge] (v) to (b\i);
}
  \node[label=above left:$a_{m}$] (a4) at (5,2) {};
  \node[label=right:$b_{m}$] (b4) at (5,1) {};
  \draw[thick] (a4) to (b4);
  \draw[thick, bend right=20, vertexVorangeEdge] (v) to (a4);
  \draw[dashed, thick, bend right=20, vertexVorangeEdge] (v) to (b4);
  \node[draw=none, fill=none] at (4,1.5) {$\cdots$};

\begin{scope}[on background layer]
    \node[rectangle,color=matchingMBoxDraw,draw,fill=matchingMBoxFill,dashed,inner sep=1cm,rounded corners,fit=(a1)(b4),label={left:$M$}] (Mfitter) {};
\end{scope}

% M_1 set  
\begin{scope}[on background layer]

\draw[rounded corners,setM1BoxDraw,fill=setM1BoxFill,rotate around={26.57:(3,1)}] (3-0.5,1-0.5) rectangle (5+0.6,1+0.5);
\coordinate[label=$M_1$, setM1BoxDraw] (M1Label) at (3,-0.2);

\end{scope}   
\end{scope}
  
% set X
\begin{scope}[shift={(3,4)}]
\foreach \i in {1,2,...,5}{
  \node[setXvertex] (x\i) at (\i,0) {}; 
  \draw[thick, bend left=10, vertexVorangeEdge] (v) to (x\i);
 }
 \node[draw=none, fill=none] at (6,0) {$\cdots$};   
 \node[setXvertex] (x6) at (7,0) {}; 
  \draw[thick, bend left=10, vertexVorangeEdge] (v) to (x6); 
 \node[setXvertex] (x7) at (8,0) {}; 
  \draw[thick, bend left=10, vertexVorangeEdge] (v) to (x7); 
  
%todo: edges inside M
\begin{scope}[on background layer]
    \node[rectangle,color=setXBoxDraw,draw,fill=setXBoxFill,dashed,inner sep=0.5cm,rounded corners,fit=(x1)(x7),label={right:$X$}] (Xfitter) {};
  \end{scope}

% X_2 set  
\draw[thick, setX2edge, bend left=0] (a1) to[out=30,in=-150] (x1);
\draw[thick, setX2edge, bend left=0] (b1) to[out=30,in=-150] (x1);
\draw[thick, setX2edge, bend left=0] (a2) to[out=0,in=-140] (x2);
\draw[thick, setX2edge, bend left=0] (b2) to[out=30,in=-140] (x2);
\begin{scope}[on background layer]
\node[rectangle, color=setX2BoxDraw, fill=setX2BoxFill,
    dashed,inner sep=0.3cm,rounded corners, fit=(x1) (x2),
    label={left:$X_2$}
    ] (X2fitter) {};
\end{scope} 

% X_0 set  
\begin{scope}[on background layer]
\node[rectangle, color=setX0BoxDraw, fill=setX0BoxFill,
    dashed,inner sep=0.3cm,rounded corners, fit=(x7),
    label={below:$X_0$}
    ] (X0fitter) {};
\end{scope} 

% X_1 set  
\draw[thick, setX1edge, bend left=0] (x3) to[out=60,in=150] (a4);
\draw[thick, setX1edge, bend left=0] (x4) to[out=40,in=160] (a4);
\draw[thick, setX1edge, bend left=0] (x5) to[out=60,in=120] (b3);
\draw[thick, setX1edge, bend left=0] (x5) to[out=30,in=130] (a4);
\draw[thick, setX1edge, bend left=0] (x6) to[out=50,in=120] (b3);
\begin{scope}[on background layer]
\node[rectangle, color=setX1BoxDraw, fill=setX1BoxFill,
    dashed,inner sep=0.3cm,rounded corners, fit=(x3) (x6),
    label={280:$X_1$}
    ] (X1fitter) {};
\end{scope} 

\end{scope}

\end{tikzpicture}
\end{center}
\caption{An overview of the definitions of sets $X_0$, $X_1$, $X_2$, and $M_1$.}
\label{fig:setting2}
\end{figure}

\begin{observation}[\appmark]
 \label{obs:X_1_size}
 If the vertex $v$ has degree at least $(d+2)(k+1)+1$, then $|X_1| \geq (d-1)(k+1)$.
\end{observation}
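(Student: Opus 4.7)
The plan is to bound the size of each part of a partition of $N(v)$ separately and then combine these bounds. Since $N(v)$ is the disjoint union of $M \cap N(v)$ and $X = X_0 \cup X_1 \cup X_2$, we have
\[
\deg(v) \;=\; |M \cap N(v)| \;+\; |X_0| \;+\; |X_1| \;+\; |X_2|,
\]
so a lower bound on $|X_1|$ follows from upper bounds on the other three summands.

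First I would bound $|M \cap N(v)| \le 2m$, simply because $|M| = 2m$. Then, using that the instance is reduced with respect to \autoref{rule:matching}, I get $m \le k+1$, hence $|M \cap N(v)| \le 2(k+1)$. Next, I would use that the instance is reduced with respect to \autoref{rule:degree-one}: if there were two distinct vertices of $X_0$, both having $v$ as their unique neighbor, the rule would have deleted one. Thus $|X_0| \le 1$. Finally, to bound $|X_2|$, I would invoke \autoref{obs:onlyOneDouble}: each edge $\{a_i, b_i\} \in \M$ can be witness for at most one vertex of $X_2$ (the observation even says such an edge hosts no other neighbor from $X$ at all), so $|X_2| \le m \le k+1$.

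Plugging these bounds into the equality and using the hypothesis $\deg(v) \ge (d+2)(k+1)+1$, I obtain
\[
|X_1| \;\ge\; (d+2)(k+1) + 1 \;-\; 2(k+1) \;-\; 1 \;-\; (k+1) \;=\; (d-1)(k+1),
\]
as required.

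I do not expect any real obstacle: each of the three ingredient bounds is either immediate ($|M \cap N(v)| \le 2m$) or a direct consequence of an already-established reduction rule or observation. The only point requiring a moment of care is confirming that $|X_0| \le 1$ really uses \autoref{rule:degree-one} in the stated form (the rule is phrased for any vertex with two degree-one neighbors, so applying it to $v$ itself yields precisely this bound).
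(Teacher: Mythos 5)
Your proof is correct and follows essentially the same route as the paper: both bound the neighbors of $v$ in $M$ by $2(k+1)$ via \autoref{rule:matching}, bound $|X_0|\le 1$ via \autoref{rule:degree-one} and $|X_2|\le k+1$ via \autoref{obs:onlyOneDouble}, and subtract from the degree hypothesis. The arithmetic matches the paper's, so there is nothing to add.
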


\toappendix{%
\begin{proof}[Proof of \autoref{obs:X_1_size}]
Due to \autoref{obs:onlyOneDouble} there can be at most one vertex in $X_2$ per each edge in $\M$ and, since the instance is reduced with respect to \autoref{rule:matching}, we have $|\M| \leq k+1$. Hence $|X_2| \leq k+1$.
Since the instance is reduced with respect to \autoref{rule:degree-one}, we have $|X_0| \leq 1$.

Suppose that the vertex $v$ has degree at least $(d+2)(k+1)+1$.
There are at most $2(k+1)$ edges between $v$ and $\M$, which leaves at least $d(k+1)+1$ edges for the set $X$.
As the set $|X_2|$ contains at most $k+1$ vertices and the set $|X_0|$ contains at most one vertex, there can be at most $k+2$ edges between $v$ and sets $X_0$ and $X_2$.
This leaves us with at least $(d-1)(k+1)$ edges between $v$ and $X_1$, which means that $|X_1| \geq (d-1)(k+1)$.
\end{proof}
}%to appendix

Now we focus on the edges between $X_1$ and $M$.
By \autoref{obs:noOppositeEnds}, for each edge $\{a_i, b_i\}$ in $\M$ we have that the vertices in $X_1$ may be adjacent to at most one vertex of such edge, i.e. $|\{a_i, b_i\} \cap N(X_1)| \leq 1$.
Letting $M_1 = M \cap N(X_1)$ we have $|M_1| \leq k+1$.

We are now ready to employ the Expansion Lemma. We use the version of Fomin~et~al.~\cite{FominLMPS16ExpansionLemma}, which is a generalization of the original results by Prieto~\cite[Corollary 8.1]{Prieto05ExpansionLemma} and Thomassé~\cite[Theorem 2.3]{Thomasse10ExpansionLemma}.

% taken from the squirell book
\begin{definition}
Let $G$ be a bipartite graph with vertex bipartition $(A,B)$. A set of edges $Q \subseteq E(G)$ is called a $q$-expansion, $q \geq 1$, of $A$ into $B$ if every vertex of $A$ is incident with exactly $q$ edges of $Q$, and $Q$ saturates exactly $q|A|$ vertices in~$B$.
\end{definition}

\begin{lemma}[Expansion Lemma; {Fomin et al.~\cite{FominLMPS16ExpansionLemma}}]\label{lemma:expansionLemma} %\cite{fomin_lokshtanov_saurabh_zehavi_2019}
Let $q$ be a positive integer, and $G$ be a bipartite graph with bipartition $(A,B)$ such that $|B| \geq q|A|$, and there are no isolated vertices in $B$. Then, there exists nonempty $A' \subseteq A$ and $B' \subseteq B$ such that $A'$ has a $q$-expansion into $B'$ and $N(B') \subseteq A'$.
Moreover, the sets $A', B'$ and the $q$-expansion can be found in polynomial time.
\end{lemma}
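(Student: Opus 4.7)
The plan is to prove the Expansion Lemma by induction on $|A|$, with the inductive step driven by Hall's Marriage Theorem applied to a suitable auxiliary bipartite graph. The base case $|A| = 0$ is trivial, with $A' = B' = Q = \emptyset$.

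For the inductive step, I would construct an auxiliary bipartite graph $H$ with vertex classes $A \times [q]$ and $B$, placing an edge between $(a,i)$ and $b$ exactly when $ab \in E(G)$. Since $|A \times [q]| = q|A| \le |B|$, it is meaningful to ask whether $H$ has a matching saturating $A \times [q]$, i.e., whether Hall's condition holds on the $A \times [q]$ side. If it does, produce such a matching in polynomial time by standard bipartite matching and identify the $q$ copies of each $a$ back to $a$; the resulting set $Q$ of edges of $G$ is incident to every vertex of $A$ in exactly $q$ edges and saturates exactly $q|A|$ distinct vertices of $B$, which I designate as $B'$. Setting $A' = A$, the requirement $N(B') \subseteq A'$ is automatic because $G$ is bipartite, and this branch is done.

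Otherwise Hall's condition fails, so there is $S \subseteq A \times [q]$ with $|N_H(S)| < |S|$. Because all $q$ copies $(a,1),\ldots,(a,q)$ share the same $H$-neighborhood, adjoining the missing copies only preserves this strict inequality; hence I may assume $S = T \times [q]$ for some nonempty $T \subseteq A$, which gives $|N_G(T)| \le q|T|-1$. I then recurse on $G' = G[(A \setminus T) \cup (B \setminus N_G(T))]$. The recursion is valid: first, $|B \setminus N_G(T)| \ge |B| - (q|T|-1) \ge q|A \setminus T| + 1$; second, any $b \in B \setminus N_G(T)$ has at least one neighbor in $G$ and all such neighbors lie in $A \setminus T$ (else $b$ would belong to $N_G(T)$), so $b$ remains non-isolated in $G'$. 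Induction yields $A' \subseteq A \setminus T$ and $B' \subseteq B \setminus N_G(T)$ together with the required $q$-expansion, and since $B'$ is disjoint from $N_G(T)$ we have $N_G(B') = N_{G'}(B') \subseteq A'$, which completes the argument. A Hall-obstructing $T$ can be extracted in polynomial time from a König vertex cover of $H$, so the whole procedure runs in polynomial time.

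The main obstacle I anticipate is resisting the temptation to keep $A$ fixed and only restrict the $B$ side: that strategy fails precisely when Hall's condition fails, and the key insight is that one must shrink $A$ too, in the way dictated by the Hall obstruction $T$. The technical core is verifying that the recursion preserves both $|B| \ge q|A|$ and the absence of isolated vertices in $B$, but both follow directly from the strict inequality $|N_G(T)| < q|T|$ and the observation that a vertex $b \in B \setminus N_G(T)$ has no neighbor in $T$.
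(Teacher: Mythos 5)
The paper contains no proof of this statement to compare against: it is imported verbatim from Cygan et al.~\cite[Lemma 2.18]{CyganFKLMPPS15} and used as a black box, so your argument has to be judged on its own. It is correct, and it is essentially the standard textbook proof: form the auxiliary bipartite graph $H$ with $q$ copies of every vertex of $A$; if a matching saturating $A\times[q]$ exists, collapsing copies gives a $q$-expansion of $A'=A$ with $N(B')\subseteq A'$ automatic by bipartiteness; otherwise a Hall violator exists, and since all copies of a vertex have identical neighborhoods it can be taken of the form $T\times[q]$, giving $|N_G(T)|\le q|T|-1$, and you recurse on $G[(A\setminus T)\cup(B\setminus N_G(T))]$. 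Your invariant checks are exactly the ones needed: $|B\setminus N_G(T)|\ge q|A\setminus T|+1$, every $b\in B\setminus N_G(T)$ keeps a neighbor because all its neighbors avoid $T$, and $N_G(B')=N_{G'}(B')\subseteq A'$ since $B'$ avoids $N_G(T)$; extracting the violator from a K\H{o}nig vertex cover gives polynomial time. One point worth making explicit: the lemma as used later in the paper needs $A'$ and $B'$ to be nonempty (an empty $q$-expansion would make the subsequent reduction rule vacuous), and the statement as quoted does not say so. Your proof does in fact deliver nonempty sets whenever $A\neq\emptyset$, because $T=A$ can never be a violator (no isolated vertices in $B$ gives $N_G(A)=B$ and $|B|\ge q|A|$), so the recursion strictly shrinks $A$ but never empties it and must terminate in the matching branch; stating this explicitly would close the only cosmetic gap between what you prove and what the application requires.
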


\begin{observation}[\appmark]
\label{obs:expansion}
There exist non-empty subsets $M' \subseteq M_1$ and $X' \subseteq X_1$ such that there is a $(d-1)$-expansion $Q'$ from $M'$ into $X'$ and $N(X') \subseteq M' \cup \{v\}$.
\end{observation}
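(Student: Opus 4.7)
\emph{Proof plan.} The approach is to apply the Expansion Lemma to the bipartite graph between $M_1$ and $X_1$ formed by the edges of $G$ having one endpoint in each set. First I would verify the two hypotheses of \autoref{lemma:expansionLemma} with $A = M_1$, $B = X_1$, and $q = d-1$.

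For the size condition, I would combine the upper bound $|M_1| \le |\M| \le k+1$ (which follows from \autoref{obs:noOppositeEnds}, since each matching edge contributes at most one vertex to $M_1$, together with the fact that the instance is reduced with respect to \autoref{rule:matching}) with the lower bound $|X_1| \ge (d-1)(k+1)$ coming from \autoref{obs:X_1_size}, under the assumption that $\deg(v) \ge (d+2)(k+1)+1$. Chaining these gives $|X_1| \ge (d-1)(k+1) \ge (d-1)|M_1|$, which is exactly what the Expansion Lemma needs.

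For the no-isolated-vertex condition, I would argue as follows. Take any $x \in X_1$. By the definition of the partition $X_0, X_1, X_2$, we have $x \notin X_0$, hence $N(x) \neq \{v\}$, so $x$ has some neighbor $u \neq v$. By \autoref{obs:first}, $u \in M$. Finally, by the very definition $M_1 = M \cap N(X_1)$, we have $u \in M_1$. Thus $x$ has a neighbor in $M_1$, i.e., $x$ is not isolated in the considered bipartite graph.

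With both hypotheses verified, I would invoke \autoref{lemma:expansionLemma} to obtain $M' \subseteq M_1$ and $X' \subseteq X_1$ together with a $(d-1)$-expansion $Q'$ from $M'$ into $X'$, satisfying $N(X') \subseteq M'$ within the bipartite graph (which is the intended meaning in this setting, noting that the only additional neighbor an $x \in X' \subseteq X_1 \subseteq X$ can have in $G$ outside $M$ is $v$ itself, by \autoref{obs:first}). I do not expect a real obstacle here; the only subtlety is the bookkeeping for the degree bound and the reminder that the expansion is taken inside the bipartite graph between $M_1$ and $X_1$ rather than in all of $G$.
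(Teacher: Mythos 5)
Your proposal is correct and follows essentially the same route as the paper: apply the Expansion Lemma to the bipartite graph between $M_1$ and $X_1$ with $q=d-1$, using $|M_1|\le k+1$ and $|X_1|\ge (d-1)(k+1)$ (under the degree assumption) for the size condition, and the definitions of $X_0$, $X_1$, and $M_1$ together with \autoref{obs:first} to rule out isolated vertices in $X_1$. Your explicit remark that $N(X')\subseteq M'$ is meant within the bipartite graph (with $v$ the only possible extra neighbor in $G$) is a fair clarification that the paper leaves implicit.
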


\toappendix{%
\begin{proof}[Proof of \autoref{obs:expansion}]
% \begin{proof}
Let us consider the bipartite subgraph of $G$ formed  by vertex sets $M_1$ and $X_1$ and all edges of $G$ in between them.
As the size of $M_1$ is at most $k+1$ and the size of $X_1$ is at least $(d-1)(k+1)$, we have that $|X_1| \geq (d-1)|M_1|$. Further, there are no isolated vertices in $X_1$, as the isolated vertices would be in the set $X_0$.

The conditions of the \autoref{lemma:expansionLemma} are satisfied and the existence of the subsets $M', X'$ directly follows.
\end{proof}
}% to appendix

\begin{figure}[!h]
\begin{center}
\begin{tikzpicture}[every node/.style={draw, fill, circle, inner sep=1.5pt}]
\node[label=above:$v$, red] (v) at (2,4.5) {};

% matching M
\begin{scope}[shift={(0,0)}]
\foreach \i in {1,2,4,5}{
  \node[label=above left:$$] (a\i) at (\i,2) {};
  \node[label=below left:$$] (b\i) at (\i,1) {};
  \draw[thick] (a\i) to (b\i);
  %\draw[thick, bend right=20, vertexVorangeEdgeDim] (v) to (a\i);
  %\draw[dashed, thick, bend right=20, vertexVorangeEdgeDim] (v) to (b\i);
}
  %\node[label=above left:$a_{m}$] (a4) at (5,2) {};
  %\node[label=below left:$b_{m}$] (b4) at (5,1) {};
  %\draw[thick] (a4) to (b4);
  %\draw[thick, bend right=20, vertexVorangeEdgeDim] (v) to (a4);
  %\draw[dashed, thick, bend right=20, vertexVorangeEdgeDim] (v) to (b4);
  \node[draw=none, fill=none] at (3,1.5) {$\cdots$};

\begin{scope}[on background layer]
    \node[rectangle,color=setM1BoxDraw,draw,fill=setM1BoxFill,dashed,inner sep=0.5cm,rounded corners,fit=(a1)(b5),label={left:$M_1$}] (Mfitter) {};
\end{scope}

% M_1 set  
\begin{scope}[on background layer]
\draw[rounded corners,setMprimeBoxDraw,fill=setMprimeBoxFill,rotate around={-45:(4,2)}] (3.5,2.5) rectangle (4+1.41+0.5,1.5);
\coordinate[label=$M'$] (M1Label) at (6,0.5);

\end{scope}   
\end{scope}
  
% set X
\begin{scope}[shift={(0,3.5)}]
 \node[setXvertex, label={right:$x$}] (x1) at (1,0) {}; 
  \draw[very thick, bend left=0, red] (v) to (x1);
\foreach \i in {2,...,9}{
  \node[setXvertex] (x\i) at (\i,0) {}; 
  \draw[thick, bend left=10, vertexVorangeEdge] (v) to (x\i);
 }
 \node[draw=none, fill=none] at (10,0) {$\cdots$};   
 \node[setXvertex] (x10) at (11,0) {}; 
  \draw[thick, bend left=10, vertexVorangeEdge] (v) to (x10);

\begin{scope}[on background layer]
    \node[rectangle,color=setX1BoxDraw,draw,fill=setX1BoxFill,dashed,inner sep=0.5cm,rounded corners,fit=(x1)(x10),label={right:$X_1$}] (Xfitter) {};
  \end{scope}

% X_1 set  
\foreach \i in {1,2,3,4} {
    \draw[very thick, setXprimeEdge] (x\i) to (a4);
}
\foreach \i in {5,6,7,8} {
    \draw[very thick, setXprimeEdge] (x\i) to[out=270,in=20] (b5);
}

\coordinate[label=$Q$] (QLabel) at (6.2,-1.5);

\begin{scope}[on background layer]
\node[rectangle, color=setXprimeBoxDraw, fill=setXprimeBoxFill,
    dashed,inner sep=0.3cm,rounded corners, fit=(x1) (x9),
    label={[shift={(3.2,-1.5)}]:$X'$}
    ] (X1fitter) {};
\end{scope} 

\end{scope}

\end{tikzpicture}
\end{center}
\caption{A graphical interpretation of applying the Expansion Lemma to our setting.}
\label{fig:setting3}
\end{figure}

We refer the reader to \autoref{fig:setting3} for a graphical interpretation of the situation guaranteed by \autoref{obs:expansion}.
Now, let us focus on the sets $M'$ and $X'$ and the way they are connected with vertex $v$. We are going to show that some edge between $v$ and $X'$ is now redundant.

\begin{rrule}\label{rule:expansion}
Let $v$ be a vertex of degree at least $(d+2)(k+1)+1$.
Let $\M$ be a largest matching adjacent to $v$ and $M$ be the set of vertices covered by $\M$.
Let $X_1\subseteq N(v) \setminus M$ be the set of vertices $x$ with $N(x) \cap M \neq \emptyset$ and $|N(x) \cap \{a_i,b_i\}|\le 1$ for each $\{a_i,b_i\} \in \M$.
Let $M_1=M\cap N(X_1)$.
Let the non-empty subsets $M' \subseteq M_1$ and $X' \subseteq X_1$ be the sets with the $(d-1)$-expansion $Q$ from $M'$ into $X'$ and such that $N(X') \subseteq M' \cup \{v\}$.
Let $x \in X'$. Reduce the instance by deleting the edge $\{x, v\}$.
\end{rrule}

\begin{proof}[Proof of Correctness.]
Let $(G=(V,E),k)$ be the original instance and $(G'=(V,E'),k)$ the reduced one.
For each vertex $m \in M'$ let $Q_{m}$ be the set of vertices of $X'$ incident to $m$ in the $(d-1)$-expansion $Q$.
Since $G'$ is a subgraph of $G$, if $S$ is a solution for $G$, then $S$ is also a solution for $G'$.
Hence we will concetrate on the other direction.

Suppose that $S'$ is a solution for the reduced instance.
If it is also a solution for the original one, then we are done.
Suppose it is not,
i.e., there is a $P_d$ in $G \setminus S'$. This $P_d$ contains the edge $\{x, v\}$, otherwise it would be also present in $G' \setminus S'$.
Therefore $v \notin S'$ and $x \notin S'$.

There are three ways how solution $S'$ can interact with $M'$ and $X'$ that we need to address.

\begin{figure}[!h]
\begin{center}
\begin{tikzpicture}[every node/.style={draw, fill, circle, inner sep=1.5pt}, scale=0.99904]

% left figure
\begin{scope}
\node[label=above:$v$, red] (v) at (6.5,1.5) {};

\node[setXvertex, label={above:$x$}] (x1) at (1,0) {}; 
\draw[very thick, bend right=20, red] (v) to (x1);
\foreach \i in {2,...,6}{
  \node[setXvertex] (x\i) at (\i,0) {}; 
  \draw[thick, bend right=20, vertexVorangeEdge] (v) to (x\i);
 }
\foreach \i in {7,...,12}{
  \node[setXvertex] (x\i) at (\i,0) {}; 
  \draw[thick, bend left=20, vertexVorangeEdge] (v) to (x\i);
 }

\begin{scope}[on background layer]
    \node[rectangle,color=setX1BoxDraw,draw,fill=setX1BoxFill,dashed,inner sep=0.5cm,rounded corners,fit=(x1)(x12),label={right:$X'$}] (XPrimeFitter) {};
\end{scope}

% expansion Q sets for mhat1 and mhat2
\begin{scope}[on background layer]
    \node[rectangle,color=setM1BoxDraw,draw,fill=setM1BoxFill,dashed,inner sep=0.3cm,rounded corners,fit=(x1)(x4),label={200:$Q_{\hat{m}_1}$}] (Qmhat1fitter) {};
    \node[rectangle,color=setM1BoxDraw,draw,fill=setM1BoxFill,dashed,inner sep=0.3cm,rounded corners,fit=(x5)(x8),label={345:$Q_{\hat{m}_2}$}] (Qmhat2fitter) {};

\end{scope}

% M' vertices
\node[label={below:$\hat{m}_1$}] (mhat1) at (4.5, -1.5) {};
\node[label={below:$\hat{m}_2$}] (mhat2) at (6.5, -1.5) {};
\node[] (mprime) at (8.5, -1.5) {};
  
\begin{scope}[on background layer]
    \node[rectangle,color=setM1BoxDraw,draw,fill=setM1BoxFill,dashed,inner sep=0.5cm,rounded corners,fit=(mhat1)(mprime),label={350:$M'$}] (MPrimeFitter) {};
    \node[rectangle,color=setMprimeBoxDraw,draw,fill=setMprimeBoxFill,dashed,inner sep=0.3cm,rounded corners,fit=(mhat1)(mhat2),label={right:$\hat{M}$}] (MHatFitter) {};
\end{scope}

% expansion edges
\foreach \i in {1,...,4} {
    \draw[very thick, bend right=20, setXprimeEdge] (x\i) to (mhat1);
}
\foreach \i in {5,6} {
    \draw[very thick, bend right=20, setXprimeEdge] (x\i) to (mhat2);
}
\foreach \i in {7,8} {
    \draw[very thick, bend left=20, setXprimeEdge] (x\i) to (mhat2);
}
\foreach \i in {9,...,12} {
    \draw[very thick, bend left=20, setXprimeEdge] (x\i) to (mprime);
}

% draw the solution S'
\begin{scope}[on background layer]
\foreach \i in {x2,x3,x5,x6,x7,mprime} {
\filldraw[blue] (\i) circle (0.2);
}
\draw[dashed, thick, bend left=20, blue, ->] (x2) to (mhat1);
\draw[dashed, thick, bend right=20, blue, ->] (x3) to (v);
\draw[dashed, thick, bend left=20, blue, ->] (x5) to (mhat2);

\end{scope}

\end{scope}

\draw[line width=2pt,-{Stealth[round]}] (6.5,-2.2) to (6.5,-2.8);

% right figure
\begin{scope}[shift={(0,-4.8)}]
\node[label=above:$v$, red] (v) at (6.5,1.5) {};

\node[setXvertex, label={above:$x$}] (x1) at (1,0) {}; 
\draw[very thick, bend right=20, red] (v) to (x1);
\foreach \i in {2,...,6}{
  \node[setXvertex] (x\i) at (\i,0) {}; 
  \draw[thick, bend right=20, vertexVorangeEdge] (v) to (x\i);
 }
\foreach \i in {7,...,12}{
  \node[setXvertex] (x\i) at (\i,0) {}; 
  \draw[thick, bend left=20, vertexVorangeEdge] (v) to (x\i);
 }

\begin{scope}[on background layer]
    \node[rectangle,color=setX1BoxDraw,draw,fill=setX1BoxFill,dashed,inner sep=0.5cm,rounded corners,fit=(x1)(x12),label={right:$X'$}] (XPrimeFitter) {};
\end{scope}

% expansion Q sets for mhat1 and mhat2
\begin{scope}[on background layer]
    \node[rectangle,color=setM1BoxDraw,draw,fill=setM1BoxFill,dashed,inner sep=0.3cm,rounded corners,fit=(x1)(x4),label={200:$Q_{\hat{m}_1}$}] (Qmhat1fitter) {};
    \node[rectangle,color=setM1BoxDraw,draw,fill=setM1BoxFill,dashed,inner sep=0.3cm,rounded corners,fit=(x5)(x8),label={345:$Q_{\hat{m}_2}$}] (Qmhat2fitter) {};

\end{scope}

% M' vertices
\node[label={below:$\hat{m}_1$}] (mhat1) at (4.5, -1.5) {};
\node[label={below:$\hat{m}_2$}] (mhat2) at (6.5, -1.5) {};
\node[] (mprime) at (8.5, -1.5) {};
  
\begin{scope}[on background layer]
    \node[rectangle,color=setM1BoxDraw,draw,fill=setM1BoxFill,dashed,inner sep=0.5cm,rounded corners,fit=(mhat1)(mprime),label={350:$M'$}] (MPrimeFitter) {};
    \node[rectangle,color=setMprimeBoxDraw,draw,fill=setMprimeBoxFill,dashed,inner sep=0.3cm,rounded corners,fit=(mhat1)(mhat2),label={right:$\hat{M}$}] (MHatFitter) {};
\end{scope}

% expansion edges
\foreach \i in {1,...,4} {
    \draw[very thick, bend right=20, setXprimeEdge] (x\i) to (mhat1);
}
\foreach \i in {5,6} {
    \draw[very thick, bend right=20, setXprimeEdge] (x\i) to (mhat2);
}
\foreach \i in {7,8} {
    \draw[very thick, bend left=20, setXprimeEdge] (x\i) to (mhat2);
}
\foreach \i in {9,...,12} {
    \draw[very thick, bend left=20, setXprimeEdge] (x\i) to (mprime);
}

% draw the solution S
\begin{scope}[on background layer]
\foreach \i in {v,mhat1,mhat2,mprime} {
\filldraw[blue] (\i) circle (0.2);
}

\end{scope}

\end{scope}

\end{tikzpicture}
\end{center}
\caption{Illustration of the first part of the proof of correctness of \autoref{rule:expansion}.}
\label{fig:setting4}
\end{figure}

Firstly, suppose that $M' \nsubseteq S'$ and let $\widehat{M} = M' \setminus S'$.
See \autoref{fig:setting4} for an illustration.
We have that for each vertex $\widehat{m} \in \widehat{M}$ it must be that $|Q_{\widehat{m}} \cap S'| \geq 2$. Indeed, if this is not the case, there would be a $P_d$ in $G' \setminus S'$ which uses the $(d-2)$ vertices of $Q_{\widehat{m}}$ not in $S'$ and the vertices $\widehat{m}$ and $v$.
Consider a set $S = (S' \cup \widehat{M} \cup \{v\}) \setminus \bigcup_{\widehat{m} \in \widehat{M}} Q_{\widehat{m}}$. Observe, that any $P_d$ which uses some vertex from $X'$ must contain at least one of the vertices in $M'$ or $v$, because $N(X') \subseteq M' \cup \{v\}$. Therefore the set $S$ is a solution for the reduced graph~$G'$ as it contains both $M'$ and $v$. The set $S$ is also a solution for $G$ as it contains $v$ and therefore covers any $P_d$ which might use the deleted edge $\{x, v\}$. Finally, $|S| \leq |S'|$, because for each $\widehat{m}$ that we add into $S$, we remove at least two vertices of $Q_{\widehat{m}}$ from $S$ and therefore we have that $|\widehat{M} \cup \{v\}| \leq 2|\widehat{M}|\le |\bigcup_{\widehat{m} \in \widehat{M}} Q_{\widehat{m}}|$.

Secondly, assume that $M' \subseteq S'$ and there is some $x' \in X'$ such that $x' \in S'$. We construct the set $S = (S' \setminus X') \cup \{v\}$. Again, observe that $S'$ is a solution for $G'$ because any $P_d$ which uses some vertex from $X'$ must contain at least one of the vertices in $M'$ or $v$ and both are fully contained in $S$. We also have that $S$ is a solution for $G$, again, as it contains $v$ and therefore covers any $P_d$ which might use the deleted edge $\{x, v\}$. Finally, $|S| \leq |S'|$, because we have the assumption that there is some $x' \in X'$ and $x' \in S'$.

Lastly, assume that $M' \subseteq S'$ and $X' \cap S' = \emptyset$.
% See \autoref{fig:setting5} for an illustration.
In this case, the $P_d$ that we found in $G \setminus S'$ must be of the form $P = (x, v, u_3,\ldots, u_d)$ and $u_3,\ldots, u_d \notin (M' \cup X')$.
The existence of such $P_d$ also gives us that $v, u_3,\ldots, u_d \notin S'$.
Let $x'$ be an arbitrary vertex of $X' \setminus \{x\}$ (note that $|X'| \geq (d-1)|M'| \geq 3$).
Then the $P_d$ of the form $P' = (x', v, u_3, \ldots, u_d)$ can be found in both $G$ and $G'$, which contradicts the fact that $S'$ is a solution in $G'$.

% \begin{figure}[!h]
% \begin{center}
% \input{src/tikz_images/tikz_image_06_high_degree_rule_correctness_part_03.tex}
% \end{center}
% \caption{Illustration of the last part of the proof of correctness of \autoref{rule:expansion}.}
% \label{fig:setting5}
% \end{figure}

To sum up, we have shown that when we delete the edge $\{x, v\}$ from $G$, then for any solution $S'$ for $G'$ which is not also a solution for $G$, we can always find a new solution $S$, $|S| \leq |S'|$ which is a solution for both $G'$ and~$G$.
\end{proof}
As the application of the rule only requires finding a largest matching adjacent to $v$, classifying the vertices of $N(v)$, and finding a $(d-1)$-expansion and these tasks can be done in polynomial time, the rule can be applied in polynomial time.

\section{4-PVC Kernel with Quadratic Number of Edges}\label{sec:4PVC}
\toappendix{\section{Additional Material to \autoref{sec:4PVC}}}%
Let $(G=(V,E),k)$ be an instance reduced by exhaustively employing Reduction Rules~\ref{rule:component}--\ref{rule:expansion}.
Then the maximum degree in $G$ is at most $(d+2)(k+1)=6k + 6$.
Furthermore, assume that the algorithm of \autoref{prop:greedy_packing} actually returned an inclusion-wise maximal packing $\PP$ in $G$ with at most $k$ $4$-paths instead of answering immediately.
Let $P=V(\PP)$, and let $A = V \setminus P$.
There are at most $4k$ vertices in $P$.
Each connected component in $G[A]$ is a 4-path free graph, otherwise we would be able to increase the size of the packing $\PP$.
Since the instance is reduced with respect to \autoref{rule:component},  each connected component in $G[A]$ is connected to $P$ by at least one edge.

To show that an instance reduced with respect to all the above rules has a quadratic number of edges, it suffices to count separately the number of edges incident on $P$ and the number of edges in $G[A]$.
Since the maximum degree in $G$ is at most $6k + 6$ and there are at most $4k$ vertices in $P$, there are at most $4k \cdot (6k + 6) = 24k^2 + 24k$ edges incident on $P$.

To count the edges in $G[A]$, we first observe that a connected 4-path free graph is either a triangle, or a star (possibly degenerate, i.e., with at most 3 vertices).
Here a~\textit{$q$-star} is a~graph with vertices $\{c, l_1,\ldots,l_q\}$, $q \ge 0$ and edges $\{ \{c,l_i\} \mid i \in \{1,\ldots,q\} \}$. Vertex~$c$ is called \emph{a~center}, vertices $\{l_1,\ldots,l_q\}$ are called \emph{leaves}. The term \textit{star} will be used for a $q$-star with an arbitrary number of leaves.
Note that, a graph with a single vertex is a $0$-star, a graph with two vertices and a single edge is a $1$-star, and a $3$-path is a $2$-star. A~\textit{triangle} is a~cycle on three vertices.

Secondly, as the instance is reduced with respect to \autoref{rule:component}, each connected component in $G[A]$ is connected to $P$ by at least one edge. Therefore, there are at most $24k^2 + 24k$ connected components in $G[A]$, as there are only that many edges going from $P$ to $A$. Next, we provide an observation about stars in $G[A]$.

\begin{observation}[\appmark]
\label{obs:q_star_size}
For each $q$-star $(c, l_1, l_2, \ldots, l_q)$ in $G[A]$, there are at most two vertices in the $q$-star which are not connected to $P$ by any edge in $G$, one possibly being the center $c$ and the other possibly being a leaf $l_i$ of the star.
\end{observation}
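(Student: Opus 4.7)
The plan is to use \autoref{rule:degree-one} as the only nontrivial ingredient. Fix a $q$-star $(c, l_1, \ldots, l_q)$ that is a connected component of $G[A]$. The first observation I would record is structural: since the star is a \emph{connected component} of $G[A]$ (not merely an induced subgraph), each leaf $l_i$ satisfies $N(l_i) \cap A = \{c\}$, because its unique neighbor inside $G[A]$ is $c$ and it has no other neighbors in $A$ across component boundaries.

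With that in hand, I would argue by contradiction. Suppose two distinct leaves $l_i$ and $l_j$ are both not incident to any edge of $G$ going into $P$. Combined with the structural observation, this forces $N_G(l_i) = N_G(l_j) = \{c\}$ in the entire graph $G$. But this is precisely the configuration forbidden by \autoref{rule:degree-one} in a reduced instance, yielding the desired contradiction. Hence at most one leaf of the star fails to have a neighbor in $P$.

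Finally, since the star has a single center, at most one of the vertices not connected to $P$ is the center. Combining the two bounds gives at most $1 + 1 = 2$ vertices of the star with no edge to $P$, and they are of exactly the claimed types (possibly $c$ and possibly one leaf $l_i$).

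I do not anticipate a real obstacle: the only point worth double-checking is the structural fact $N(l_i) \cap A = \{c\}$, which is what upgrades the leaf's lack of $P$-neighbors into the equality $N_G(l_i) = \{c\}$ needed to invoke \autoref{rule:degree-one}; everything else is immediate.
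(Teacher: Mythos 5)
Your proposal is correct and takes essentially the same route as the paper: both argue that two leaves with no edge to $P$ would satisfy $N(l_i)=N(l_j)=\{c\}$, contradicting the instance being reduced with respect to \autoref{rule:degree-one}. The only difference is that you spell out the intermediate fact $N(l_i)\cap A=\{c\}$ (the star being a connected component of $G[A]$), which the paper leaves implicit.
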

\toappendix{%
\begin{proof}[Proof of \autoref{obs:q_star_size}]
% \begin{proof}
Suppose that there is a $q$-star $(c, l_1, l_2, \ldots, l_q)$ in $G[A]$ with two leaves $l_i, l_j$ not being connected to $P$ in $G$.
This means that $N(l_i) = N(l_j) = \{c\}$ which contradicts the fact that $G$ is reduced with respect to \autoref{rule:degree-one}.
\end{proof}
}%toappendix

%We are now ready to count the edges in $G[A]$.

\begin{observation}[\appmark]
\label{obs:4PVC_edges_in_A}
There are at most $72k^2 + 72k$ edges in $G[A]$.
\end{observation}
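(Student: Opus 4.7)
The plan is to bound $|E(G[A])|$ by $3$ times the number of edges between $A$ and $P$. Since every edge between $A$ and $P$ is incident on $P$, and the total number of edges incident on $P$ is at most $24k^2+24k$, a factor-$3$ inequality immediately yields the desired $72k^2+72k$.

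I would prove the stronger per-component statement: for each connected component $C$ of $G[A]$, $e(C) \le 3\, b(C)$, where $e(C)$ is the number of edges inside $C$ and $b(C)$ is the number of edges from $V(C)$ to $P$. Summing this inequality over all components of $G[A]$ and noting that $\sum_C b(C)$ equals the total number of $A$-$P$ edges, which is at most $24k^2+24k$, then finishes the argument. Since $G[A]$ is $4$-path-free (otherwise we could extend $\PP$) and each component is connected, every $C$ is either an isolated vertex, a $q$-star with $q \ge 1$, or a triangle, so only three cases need to be checked.

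For an isolated vertex the inequality is trivial. For a triangle, \autoref{rule:component} guarantees $b(C) \ge 1$ and $e(C) = 3$, so $e(C) \le 3\, b(C)$. For a $q$-star, \autoref{obs:q_star_size} ensures that at most two of its $q+1$ vertices avoid $P$ entirely, so at least $\max(1, q-1)$ of them contribute an edge to $P$, giving $b(C) \ge \max(1, q-1)$; the elementary inequality $q \le 3\max(1, q-1)$ holds for every $q \ge 1$.

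The only real obstacle is the triangle case, which is tight: a triangle attached to $P$ by a single edge already forces the ratio $3/1$ and thus pins down the constant $72$ in the statement. Isolated $q$-stars with $q \ge 2$ behave strictly better (ratio at most $2$), and $q = 1$ yields ratio at most $1$, so neither of these cases worsens the constant; it is solely the potential presence of sparsely attached triangles that drives the final bound.
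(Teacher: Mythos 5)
Your proof is correct and uses essentially the same ingredients as the paper's: the triangle/star structure of connected $P_4$-free components, \autoref{rule:component} and \autoref{obs:q_star_size} to lower-bound the attachment edges, and the $24k^2+24k$ bound on edges incident on $P$, with the factor $3$ driven by the sparsely attached triangle. The only difference is cosmetic bookkeeping — you charge edges per component via $e(C)\le 3\,b(C)$, while the paper sums over the collections of triangles and stars separately — so the arguments coincide.
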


\toappendix{%
% \begin{proof}
\begin{proof}[Proof of \autoref{obs:4PVC_edges_in_A}]
Let $\mathcal{T}$ be the collection of triangles in $G[A]$ and let $\mathcal{S}$ be the collection of stars in~$G[A]$.
Let $t$ be the number of edges between $P$ and $\mathcal{T}$ and $s$ be the number of edges between $P$ and $\mathcal{S}$.
We know that $s+t \le 24k^2+24k$, $|\mathcal{T}| \le t$, and $|\mathcal{S}| \le s$.
There are at most $3t$ edges and vertices in $\mathcal{T}$ as each triangle has three edges and three vertices.
There are at most $s+2|\mathcal{S}|\le 3s$ vertices and edges in $\mathcal{S}$ as there are at most 2 vertices in each star that are not incident to any edge going from $P$.
Therefore there are at most $3s+3t=3(s+t) \le 3(24k^2 + 24k) = 72k^2 + 72k$ edges in $G[A]$ in total.
\end{proof}
}%toappendix

We conclude this section with the final statement about our kernel.

\begin{theorem}[\appmark]
\label{thm:4PVC_kernel}
$4$-\textsc{Path Vertex Cover} admits a kernel with $96k^2 + 96k$ edges, where $k$ is the size of the solution.
\end{theorem}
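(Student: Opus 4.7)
The plan is to combine the bounded-degree guarantee of \autoref{rule:expansion} with a packing-based decomposition of $V(G)$, and then tally edges in two disjoint groups. First I would invoke \autoref{prop:greedy_packing} with $d=4$; if it already answers YES or NO, we are done, so assume it returns a maximal $4$-path packing $\PP$ with $|\PP|\le k$. I would then apply Reduction Rules~\ref{rule:component}--\ref{rule:expansion} exhaustively, each in polynomial time, re-running the packing algorithm whenever the instance changes. After saturation, \autoref{rule:expansion} forces $\Delta(G)\le (d+2)(k+1)=6k+6$, while $|P|:=|V(\PP)|\le 4k$.

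Next I would partition $E(G)$ into edges incident on $P$ and edges lying inside $A:=V\setminus P$. The first group contains at most $|P|\cdot\Delta(G)\le 4k(6k+6)=24k^2+24k$ edges. For the second group I would invoke \autoref{obs:4PVC_edges_in_A}: maximality of $\PP$ makes $G[A]$ a union of $P_4$-free connected components; \autoref{rule:component} forces each such component to send at least one edge to $P$, so the number of components is bounded by the number of $P$--$A$ edges, and \autoref{obs:q_star_size} bounds the internal edge count per component. Together these give at most $72k^2+72k$ edges inside $A$. Adding the two contributions yields the claimed $96k^2+96k$.

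The only conceptual step---essentially already carried out in \autoref{obs:4PVC_edges_in_A}---is the structural classification of connected $P_4$-free graphs as triangles or (possibly degenerate) stars, because this is what lets us pay for the internal edges of $G[A]$ using only the boundary edges between $A$ and $P$, which are themselves controlled by $\Delta(G)$. Everything else is arithmetic. Since \autoref{rule:component} eliminates isolated vertices, the vertex count is at most twice the edge count, so the procedure constitutes a genuine polynomial kernel.
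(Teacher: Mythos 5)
Your proposal is correct and follows essentially the same route as the paper: reduce exhaustively by Reduction Rules~\ref{rule:component}--\ref{rule:expansion} to force $\Delta(G)\le 6k+6$, take a maximal $4$-path packing with $|V(\PP)|\le 4k$, bound the edges incident on $P$ by $24k^2+24k$, and use the triangle/star structure of $P_4$-free components together with \autoref{obs:q_star_size} and \autoref{obs:4PVC_edges_in_A} to bound the edges inside $A$ by $72k^2+72k$, summing to $96k^2+96k$. The only additions beyond the paper's argument (re-running the packing after reductions, and the remark that the vertex count is at most twice the edge count) are harmless.
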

\toappendix{%
\begin{proof}[Proof of \autoref{thm:4PVC_kernel}]
The kernelization algorithm simply applies Reduction Rules~\ref{rule:component}--\ref{rule:expansion} exhaustively.
Since all the reduction rules are applicable in polynomial time, the kernelization algorithm runs in polynomial time.

Let $(G=(V,E),k)$ be an instance reduced with respect to Reduction Rules~\ref{rule:component}--\ref{rule:expansion}.
From the previous observations, there are at most $24k^2 + 24k$ edges incident on $P$ and there are at most $72k^2 + 72k$ edges in $G[A]$, which are all the edges in $G$. Therefore the total count of edges in $G$ is at most $96k^2 + 96k$.
\end{proof}
}%toappendix

\section{5-\PVC{} Kernel with Quadratic Number of Edges}\label{sec:5PVC}
\toappendix{\section{Additional Material to \autoref{sec:5PVC}}}%
The idea is completely analogous to the previous section.
We employ the following characterization.

A~\textit{star with a triangle} is formed by connecting two leaves of a star with an edge. %$q$-star, $q \geq 2$ with an edge.
A~\textit{bi-star} is formed by connecting the centers of two stars with an edge. %a $q_1$-star and a $q_2$-star with an edge.
% Note that a $q$-star is a bi-star formed by connecting $(q-1)$-star with a $0$-star.

\begin{lemma}[{Červený and Suchý~\cite[Lemma 4]{CervenyS19}}] \label{lem:5PVC_structure}
A connected 5-path free graph is either a graph on at most 4 vertices, a star with a triangle, or a bi-star.
\end{lemma}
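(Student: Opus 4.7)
\emph{Proof proposal.} The plan is to take a longest path in $G$ and classify the remaining vertices relative to it. Assume $G$ is connected and $P_5$-free; when $|V(G)|\le 4$ the first clause of the lemma holds, so I may assume $|V(G)|\ge 5$. Let $P=(v_1,\ldots,v_\ell)$ be a longest path. Since $G$ contains no $P_5$ we have $\ell\le 4$, and since $G$ is connected on at least five vertices (and $\ell\le 2$ would force every component of $G$ to have at most two vertices), $\ell\in\{3,4\}$.

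In the case $\ell=3$, write $P=(a,b,c)$. Every vertex $u\notin V(P)$ must satisfy $N(u)\subseteq\{b\}$, otherwise adjacency to $a$ or $c$ extends $P$ to a $P_4$. The same extension argument rules out both the chord $ac$ and any edge between two outside vertices (the latter would create $a-b-u-u'$), so $G$ is a star centered at $b$, which is a degenerate di-star.

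The case $\ell=4$ is the main one; write $P=(v_1,v_2,v_3,v_4)$ and $R=V(G)\setminus V(P)$, which is nonempty. A short sequence of forbidden-$P_5$ arguments then establishes: (i) every $u\in R$ has $N(u)\subseteq\{v_2,v_3\}$, since adjacency to $v_1$ or $v_4$ extends $P$; (ii) $R$ is an independent set, because an edge inside $R$ together with the connectivity of $G$ would yield some $u\in R$ with both an $R$-neighbor and a $\{v_2,v_3\}$-neighbor, producing a $P_5$ of the form $u'-u-v_i-v_{i\pm 1}-v_{i\pm 2}$; and (iii) no $u\in R$ is adjacent to both $v_2$ and $v_3$, since otherwise $v_1-v_2-u-v_3-v_4$ is a $P_5$. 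Hence $R$ partitions as $R_2\cup R_3$ according to which of the two centers a vertex is adjacent to.

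The final and most delicate step is to determine which chords inside $V(P)$ are compatible with the partition; this is where the work sits. Testing each of $v_1v_3$, $v_1v_4$, $v_2v_4$ against a representative of $R_2$ or $R_3$ shows that $v_1v_4$ is always forbidden, that $v_1v_3$ is forbidden exactly when $R_2\ne\emptyset$ (via $u-v_2-v_1-v_3-v_4$), and symmetrically $v_2v_4$ is forbidden exactly when $R_3\ne\emptyset$; moreover any two chords together are excluded by the remaining configurations. Combining the cases: if both $R_2$ and $R_3$ are nonempty, no chord survives and $G$ is a di-star with centers $v_2,v_3$; if exactly one of them is empty, then either no chord survives (again a di-star) or the unique surviving chord creates a triangle together with two path edges and yields a star-with-triangle. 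The main obstacle is the disciplined bookkeeping of these chord restrictions against the nonemptiness of $R_2,R_3$; the rest is standard longest-path reasoning.
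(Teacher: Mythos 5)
Your proposal is correct, but note that the paper itself offers no proof to compare against: the lemma is imported verbatim from Červený and Suchý~\cite[Lemma 4]{CervenyS19}, so your longest-path argument stands as an independent, self-contained derivation of the cited fact. The argument is sound: with $|V(G)|\ge 5$ the longest path has $3$ or $4$ vertices; the $3$-vertex case collapses to a star (a degenerate di-star) once connectivity is used to rule out edges among outside vertices, and the $4$-vertex case, via your observations (i)--(iii) and the chord bookkeeping against $R_2,R_3$, yields exactly a di-star or a star with a triangle, matching the $P_5$-free structures. Only cosmetic points: in (i) you should write $N(u)\cap V(P)\subseteq\{v_2,v_3\}$, since neighbors inside $R$ are only excluded afterwards in (ii); and in the $3$-vertex case the path $a\mbox{-}b\mbox{-}u\mbox{-}u'$ presupposes that the component of $G[R]$ containing the edge attaches to $b$, which is the same connectivity remark you make explicitly in (ii) and should be invoked there as well. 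Neither affects correctness.
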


\ifappendix{We defer the rest of the proof to appendix.}
\toappendix{%
Let $(G=(V,E),k)$ be an instance reduced by exhaustively employing Reduction Rules \ref{rule:component}--\ref{rule:expansion}.
Then the maximum degree in $G$ is at most $(d+2)(k+1)=7k + 7$.
Furthermore, assume that the \autoref{algorithm:greedy_packing} actually returned an inclusion-wise maximal packing $\PP$ in $G$ with at most $k$ $5$-paths instead of answering immediately.
Let $P=V(\PP)$, and let $A = V \setminus P$.
There are at most $5k$ vertices in $P$.
Each connected component in $G[A]$ is a $5$-path free graph, otherwise we would be able to increase the size of the packing $\PP$.
Since the instance is reduced with respect to \autoref{rule:component},  each connected component in $G[A]$ is connected to $P$ by at least one edge.

To show that an instance reduced with respect to all the above rules has quadratic number of edges, it suffices to count separately the number of edges incident on $P$ and the number of edges in $G[A]$.
Since the maximum degree in $G$ is at most $7k + 7$ and there are at most $5k$ vertices in $P$, there are at most $5k \cdot (7k + 7) = 35k^2 + 35k$ edges incident on $P$.

% To count the edges in $G[A]$, we first need to argue about the structure of $5$-path free graphs.
To count the edges in $G[A]$, we use \autoref{lem:5PVC_structure} about the structure of $5$-path free graphs.
% % \begin{definition}\label{def:starwithtriangle}
% A~\textit{star with a triangle} is formed by connecting two leaves of a $q$-star, $q \geq 2$ with an edge.
% % \end{definition}
% %
% % \begin{definition}\label{def:distar}
% A~\textit{bi-star} is formed by connecting the centers of a $q_1$-star and a $q_2$-star with an edge.
% % \end{definition}
% Note that a $q$-star is a bi-star formed by connecting $(q-1)$-star with a $0$-star.
%
% \begin{lemma}[{Červený and Suchý~\cite[Lemma 4]{CervenyS19}}]
% A connected 5-path free graph is either a graph on at most 4 vertices, a star with a triangle, or a bi-star.
% \end{lemma}

Secondly, as the instance is reduced with respect to \autoref{rule:component}, each connected component in $G[A]$ is connected to $P$ by at least one edge. Therefore, there are at most $35k^2 + 35k$ connected components in $G[A]$, as there are only that many edges going from $P$ to $A$. Next, we provide an observation about stars with a triangle and bi-stars in $G[A]$.

\begin{observation}
For each star with a triangle in $G[A]$, there are at most $4$ vertices in the star with triangle which are not connected to $P$ by any edge in $G$, three of them possibly being the vertices of the triangle and the other possibly being a leaf of the star.
\end{observation}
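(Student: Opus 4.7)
The plan is to follow the template of the analogous bound for $q$-stars (\autoref{obs:q_star_size}), since the only new wrinkle introduced by the triangle is that two of the leaves now have an additional in-component neighbor, so they can no longer be forced into a degree-one configuration by being disconnected from $P$.

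First I would fix a star with a triangle in $G[A]$, writing it as a center $c$ together with the two triangle leaves $l_1, l_2$ (joined by the extra edge) and the remaining leaves $l_3, \ldots, l_q$, each of which is adjacent only to $c$ inside the component. The three triangle vertices $c, l_1, l_2$ each have at least two neighbors inside the component, so \autoref{rule:degree-one} places no restriction on their adjacency to $P$; I would simply concede that all three of them could in principle be disconnected from $P$, contributing three to the bound.

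The substantive step is to bound the number of non-triangle leaves that are disconnected from $P$. For any $i \ge 3$, the unique neighbor of $l_i$ inside the component is $c$, so if $l_i$ has no edge to $P$ then $N(l_i) = \{c\}$ in the whole graph $G$. If two such leaves $l_i, l_j$ coexisted, we would have $N(l_i) = N(l_j) = \{c\}$, contradicting that the instance is reduced with respect to \autoref{rule:degree-one}. Hence at most one non-triangle leaf can be disconnected from $P$, and combining this with the three triangle vertices yields the claimed bound of four.

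The main (minor) obstacle is simply isolating the triangle vertices from the genuine leaves in the case analysis, since the degree-one reduction only bites on vertices whose sole in-component neighbor is $c$; the degenerate cases $q \le 2$ (where there are no non-triangle leaves at all) are trivially covered by the same bound.
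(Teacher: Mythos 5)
Your proof is correct and follows essentially the same route as the paper: the key step in both is that two non-triangle leaves disconnected from $P$ would both have neighborhood exactly $\{c\}$ in $G$, contradicting that the instance is reduced with respect to \autoref{rule:degree-one}. The extra bookkeeping you do (explicitly conceding the three triangle vertices and handling degenerate cases) is consistent with, and merely more detailed than, the paper's argument.
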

\begin{proof}
Suppose that there is a star with a triangle in $G[A]$ with two leaves $l_i, l_j$ (not involved in the triangle) not being connected to $P$ in $G$.
This means that $N(l_i) = N(l_j) = \{c\}$ for $c$ the center of the star, which contradicts the fact that $G$ is reduced with respect to \autoref{rule:degree-one}.
\end{proof}

\begin{observation}
For each bi-star in $G[A]$, there are at most $4$ vertices in the bi-star which are not connected to $P$ by any edge in $G$, the two centers and one leaf for each center.
\end{observation}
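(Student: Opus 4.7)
The plan is to closely parallel the proof of the preceding observation about stars with a triangle, simply invoking \autoref{rule:degree-one} on the two halves of the di-star separately. First I would name the structure: let the di-star in question have centers $c_1$ and $c_2$ joined by the central edge, and let $L_1, L_2$ denote the leaf sets attached to $c_1$ and $c_2$ respectively. By the definition of a di-star, every leaf $\ell \in L_i$ has $c_i$ as its only neighbor inside the di-star; and since the di-star is an entire connected component of $G[A]$, the only neighbor of $\ell$ lying in $A$ is $c_i$. Hence if $\ell$ has no edge to $P$ in $G$, then $N(\ell)=\{c_i\}$ in $G$.

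Next I would argue, exactly as in the previous observation, that if two distinct leaves $\ell,\ell' \in L_i$ were both without a neighbor in $P$, we would have $N(\ell) = N(\ell') = \{c_i\}$, which contradicts the instance being reduced with respect to \autoref{rule:degree-one}. Running this argument separately for $i=1$ and $i=2$ bounds the number of leaves of the di-star with no neighbor in $P$ by at most one per center. Together with the two centers $c_1$ and $c_2$ themselves, this gives at most $4$ vertices of the di-star that are not connected to $P$, matching the statement.

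The argument is essentially routine, so I do not expect any substantive obstacle. The one point I would double-check is that a leaf's neighborhood in $G$ really is confined to its own center and possibly vertices of $P$: this is where the fact that the di-star forms an \emph{entire} connected component of $G[A]$ (and hence has no cross-edges in $A$ to other components) is used, together with the definition of a di-star forbidding leaf-leaf and leaf-opposite-center edges. Once this is made explicit, the application of \autoref{rule:degree-one} mirrors the triangle-star case verbatim.
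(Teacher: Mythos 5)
Your proposal is correct and follows essentially the same argument as the paper: applying \autoref{rule:degree-one} separately to the leaves of each center to rule out two uncovered leaves at the same center, and then counting the two centers plus one leaf per center. Your explicit justification that a leaf with no edge to $P$ has neighborhood exactly $\{c_i\}$ (using that the di-star is an entire component of $G[A]$) is a detail the paper leaves implicit, but it does not change the approach.
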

\begin{proof}
Suppose that the two centers of the bi-star are $c$ and $c'$ with leaves $l_1, l_2, \ldots, l_{q_1}$ adjacent to $c$ and leaves $l'_1, l'_2, \ldots, l'_{q_1}$ adjacent to $c'$.
Suppose that two leaves $l_i, l_j$ are not being connected to $P$ in $G$.
This means that $N(l_i) = N(l_j) = \{c\}$, which contradicts the fact that $G$ is reduced with respect to \autoref{rule:degree-one}.
Similarly, if two leaves $l'_i, l'_j$ are not being connected to $P$ in $G$, then $N(l'_i) = N(l'_j) = \{c'\}$ contradicting the fact that $G$ is reduced with respect to \autoref{rule:degree-one}.
\end{proof}

We are now ready to count the edges in $G[A]$.

\begin{observation}
There are at most $210k^2 + 210k$ edges in $G[A]$.
\end{observation}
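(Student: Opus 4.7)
The plan is to bound the number of edges in each connected component of $G[A]$ by a constant multiple of the number of edges it sends to $P$, and then sum. For a connected component $C$ of $G[A]$, let $p(C)$ denote the number of edges between $V(C)$ and $P$ in $G$. Since the instance is reduced with respect to \autoref{rule:component}, $p(C) \ge 1$ for every $C$, and $\sum_C p(C) \le 35k^2 + 35k$ by the earlier bound on the number of edges incident to $P$.

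By \autoref{lem:5PVC_structure} each component is of one of three types, and the plan is to treat them uniformly by establishing the per-component inequality $|E(C)| \le 6\,p(C)$. For a component on at most $4$ vertices, the trivial bound $|E(C)| \le \binom{4}{2} = 6$ together with $p(C) \ge 1$ already yields $|E(C)| \le 6\,p(C)$. For a star with a triangle on $v(C) \ge 5$ vertices, the number of edges equals $v(C)$, and the preceding observation gives $p(C) \ge v(C) - 4$, so $|E(C)| = v(C) \le p(C) + 4 \le 5\,p(C)$. For a di-star on $v(C) \ge 5$ vertices, the number of edges equals $v(C) - 1$, and the preceding observation again yields $p(C) \ge v(C) - 4$, whence $|E(C)| \le p(C) + 3 \le 4\,p(C)$.

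Summing over all components, $|E(G[A])| = \sum_C |E(C)| \le 6 \sum_C p(C) \le 6(35k^2 + 35k) = 210k^2 + 210k$, as claimed. I do not expect a real obstacle here: the only conceptual step is choosing the uniform per-component bound $|E(C)| \le 6\,p(C)$, which absorbs the additive constant arising in the large-component cases into a multiplicative one using $p(C) \ge 1$. Everything else follows directly from \autoref{lem:5PVC_structure} and the two preceding observations about how many vertices of a star-with-triangle or di-star may avoid $P$.
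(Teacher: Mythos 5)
Your proof is correct and follows essentially the same route as the paper: classify components via \autoref{lem:5PVC_structure}, use the two preceding observations to get that at most $4$ vertices of a large component miss $P$, bound each component's edges against its edges to $P$ (with constants $6$, $5$, $4$ for the three types), and sum against the $35k^2+35k$ bound on edges incident to $P$. Phrasing it as a uniform per-component inequality $|E(C)|\le 6\,p(C)$ rather than summing the three component classes separately is only a cosmetic difference.
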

\begin{proof}
Let $\mathcal{R}$ be the collection of connected components of $G[A]$ that are of size at most $4$,
let $\mathcal{T}$ be the collection of stars with triangles in $G[A]$, and let $\mathcal{B}$ be the collection of bi-stars in $G[A]$.
Let $r$ be the number of edges between $P$ and $\mathcal{R}$, $t$ be the number of edges between $P$
and $\mathcal{T}$ and $b$ be the number of edges between $P$ and $\mathcal{B}$.
We have that $|\mathcal{R}|\le r$, $|\mathcal{T}|\le t$, $|\mathcal{B}|\le b$, and $r+t+b \le 35k^2+35k$.
There are at most $6|\mathcal{R}| \le 6r$ edges in $\mathcal{R}$ as each connected component in $\mathcal{R}$ has at most $4$ vertices
and, hence, at most $6$ edges.
A star with a triangle with $p$ edges from $P$ has at most $p+4$ vertices and, hence, at most $p+4$ edges.
Therefore, there are at most $t+4|\mathcal{T}| \le 5t$ edges in~$\mathcal{T}$.
A bi-star with $p$ edges from $P$ has at most $p+4$ vertices and, hence, at most $p+3$ edges.
Thus, we have at most $b+3|\mathcal{B}| \le 4b$ edges in~$\mathcal{B}$.
Consequently, there are at most $6r+5t+4b \le 6(r+t+b) \le 6(35k^2+35k)= 210k^2+210k$ edges in $G[A]$ in total.
\end{proof}
}%to appendix

We conclude this section with the final statement about our kernel.

\begin{theorem}[\appmark]
\label{thm:5PVC_kernel}
$5$-\textsc{Path Vertex Cover} admits a kernel with $245k^2 + 245k$ edges, where $k$ is the size of the solution.
\end{theorem}

\toappendix{%
\begin{proof}[Proof of \autoref{thm:5PVC_kernel}]
The kernelization algorithm simply applies Reduction Rules~\ref{rule:component}--\ref{rule:expansion} exhaustively.
Since all the reduction rules are applicable in polynomial time, the kernelization algorithm runs in polynomial time.

Let $(G=(V,E),k)$ be an instance reduced with respect to Reduction Rules~\ref{rule:component}--\ref{rule:expansion}.
From the previous observations, there are at most $35k^2 + 35k$ edges incident on $P$, and there are at most $210k^2 + 210k$ edges in $G[A]$, which are all the edges in $G$. Therefore the total count of edges in $G$ is at most $245k^2 + 245k$.
\end{proof}
}%to appendix

\section{\textsc{$d$-PVC} Kernel with $\bigo(k^4d^{2d+9})$ Edges}
\label{sec:dpvc_general_kernel}
\toappendix{\section{Additional Material to \autoref{sec:dpvc_general_kernel}}}%

In this section, we give a kernelization algorithm for \textsc{$d$-PVC} with $d \geq 6$.

\textit{An intuition behind the approach.}
The kernelization algorithm marks some vertices and edges, which it wants to keep, and throws away the rest. Essentially, the kernelization creates a subgraph $\widehat{G}$ of the input graph $G$. For correctness of the algorithm, we want to show that if there is a $d$-path $P$ in $G$ which misses some set of vertices $S$ (a prospective solution), then we will also find some $d$-path $P'$ in $\widehat{G}$, which also misses the set $S$.

We begin by finding a maximal packing $\M$ in $G$ and we keep in $\widehat{G}$ all vertices $M$ of the packing and all edges between them. Now, on one hand, if the path $P$ would be completely contained in $M$, then trivially the path appears also in $\widehat{G}$. On the other hand, the path $P$ cannot be completely outside of $M$. Thus, the path $P$ crosses between $M$ and outside of $M$ at least once.
% We can characterize this crossing as some desire to connect two vertices of $M$ through the outside of $M$ with a path of some desired length.
This corresponds to vertices of $M$ being connected by a path of prescribed length outside of $M$.
We later formalize this as a ``request''.

To get more structure, we leverage the behavior of DFS trees of the connected components outside of $M$.
With the DFS trees we identify vertices, which are ``crucial'' for the requests, and we further split the requests into ``sub-requests'' according to the ``crucial'' vertices.

The algorithm is inspired by Dell and Marx~\cite{DellM18}.
However, while the considered problems have similarities, many ideas are not translatable.
In particular, they could afford to consider all ``sub-requests'' and keep $\Omega(k)$ vertices for each without affecting their bound (cf.~\cite[p. 23]{DellM18}).
We had to be more careful in which ``sub-requests'' we consider and we need to employ \autoref{lemma:dpvc_sol_mod} (below) to only keep $d^{\bigo(d)}$ vertices and edges for each such ``sub-request'' to achieve our precise bound.
Also, to achieve the edge bound, we need to keep track of the purpose for which the individual vertices were marked, which makes it hard to split the algorithm into small self-contained steps.

\textit{Formal definitions.}
More formally, assume that we are given an instance $(G=(V,E), k)$ of \textsc{$d$-PVC}.
We start by running the algorithm of \autoref{prop:greedy_packing} on the instance.
If it answers directly, then we are done.
Otherwise it returns a maximal packing $\M$ in $G$.
Let $M$ be the vertices of the packing $\M$. Recall that $|M| \leq dk$.

Let $G' = G \setminus M$, i.e., $G'$ is the graph outside the packing $\M$. Label the connected components of $G'$ as $G'_1, G'_2, \dots, G'_t$. For each component $G'_i$ pick an arbitrary vertex $r_i \in G'_i$ and compute a \emph{depth-first search} tree $T_i$ of $G'_i$ rooted at $r_i$.
Note that $V(T_i) = V(G'_i)$.
Let $\F$ denote the forest consisting of all the trees $T_i, i \in [t]$.
Note that $V(\F) = V(G')$.

For a rooted forest $F$ and its vertex $v \in V(F)$, $\sub(v)$ denotes the set of vertices of a maximal subtree of $F$ rooted at $v$ and $\anc(v)$ the set of ancestors of $v$ in $F$, i.e., $\anc(v) = \{u \mid u \in V(F), v \in \sub(u)\}$. Note that $v \in \anc(v)$.
% The \emph{depth} of a rooted tree is the length of the longest path between the root of the tree and one of the leaves of the tree. The \emph{depth} of a rooted forest is the maximum depth among its trees.

We provide the following observations regarding the DFS trees $T_i$ and the forest $\F$.

\begin{observation}[\appmark]
\label{observation:trees}
\begin{enumerate}[(a)]
\item Each $y \in V(\F)$ has at most $d-1$ ancestors.
%The depth of each tree $T_i$ is at most $d-2$. In particular, the depth of $\F$ is at most $d-2$.
\label{observation:trees_depth}
\item For two vertices $u,v \in V(\F)$ which are not in ancestor-descendant relation, we have $\sub(u) \cap \sub(v) = \emptyset$ and $\{u,v\} \notin E(G)$. \label{observation:trees_disjoint}
\item If $C$ is a connected subgraph of $G'$, then there is a vertex $w \in V(C)$ such that $V(C) \subseteq \sub(w)$.
\label{observation:connected_subgraph}
% \item For two vertices $u,v \in V(\F)$ which belong to the same tree $T_i$ in $\F$ and have $\sub(u) \cap \sub(v) = \emptyset$, we have that every path in the component $G'_i$, which uses some vertex from $\sub(u)$ and some vertex from $\sub(v)$, must also use some common ancestor $w \in \anc(u) \cap \anc(v)$ of $u$ and $v$.
\end{enumerate}
\end{observation}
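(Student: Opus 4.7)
The plan is to deduce all three statements from standard DFS properties combined with the key fact that $\M$ is a \emph{maximal} $d$-path packing, which forces $G'=G\setminus M$ to be $P_d$-free (otherwise a new $d$-path could be added to $\M$, contradicting maximality).

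For \textbf{(a)}, the ancestors of any $y\in V(\F)$ (together with $y$ itself, since $y\in\anc(y)$) form exactly the vertex set of the path in $T_i$ from the root of the containing tree down to $y$, which is an actual path in $G'$. Since $G'$ contains no $P_d$, this path has at most $d-1$ vertices, so $y$ has at most $d-1$ ancestors.

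For \textbf{(b)}, the disjointness $\sub(u)\cap\sub(v)=\emptyset$ when $u,v$ are not in ancestor-descendant relation is an immediate property of rooted trees (and trivial if $u,v$ lie in different trees of $\F$). The claim $\{u,v\}\notin E(G)$ is the classical fact that an \emph{undirected} DFS tree has no cross edges: every edge of $G'$ is either a tree edge (giving a parent-child pair) or a back edge (joining a vertex to one of its ancestors). Since $u,v\in V(G')$, any $G$-edge between them lies in $G'$, and thus places $u,v$ in ancestor-descendant relation, contradicting the hypothesis.

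For \textbf{(c)}, I would let $w$ be the vertex of $C$ of smallest depth in $\F$ (all of $V(C)$ lies in one tree $T_i$ by connectivity of $C$ in $G'$) and claim $V(C)\subseteq \sub(w)$. Suppose for contradiction there is some $v\in V(C)\setminus\sub(w)$; take a $w$--$v$ path inside $C$ and let $\{u,v'\}$ be its first edge with $u\in\sub(w)$ and $v'\notin\sub(w)$. Then $v'$ is not a descendant of $u$ (since $\sub(u)\subseteq\sub(w)$), and $v'$ is not an ancestor of $u$ either: any ancestor of $u$ either lies in $\sub(w)$ (being $w$ itself or a descendant of $w$ on the root-to-$u$ path) or is a proper ancestor of $w$, but the latter would contradict the minimality of the depth of $w$ among vertices of $C$ (as $v'\in V(C)$). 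Applying (b) to $u,v'$ contradicts $\{u,v'\}\in E(G)$. The main obstacle is this step~(c): one must choose $w$ as the shallowest vertex of $C$ and then carefully combine this with the no-cross-edges property of (b); parts (a) and (b) themselves are essentially bookkeeping once the $P_d$-freeness of $G'$ is observed.
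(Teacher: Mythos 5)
Your proof is correct and follows essentially the same route as the paper: (a) uses the maximality of the packing to make $G'$ $P_d$-free, (b) is the standard no-cross-edge property of undirected DFS trees, and (c) chooses the shallowest vertex $w$ of $C$ (equivalently, a vertex with $\anc(w)\cap V(C)=\{w\}$, as the paper does) and walks along a path in $C$ to the first edge leaving $\sub(w)$, deriving a contradiction via (b). The only difference is presentational: you spell out the DFS tree-edge/back-edge classification that the paper leaves as ``directly follows,'' and in (c) you invoke (b) in the contrapositive direction rather than via the paper's two-case analysis.
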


\toappendix{%
\begin{proof}[Proof of \autoref{observation:trees}]
\begin{enumerate}[(a)]
\item %Assume that the depth of some tree $T_i$ is at least $d-1$. Then there is a $d$-path in $T_i$ going from the root $T_i$ to some leaf of $T_i$.
Assume that some $y \in V(\F)$ has at least $d$ ancestors. Then these ancestors form a $d$-path between $y$ and the root of the corresponding tree $T_i$.
Since $V(T_i) \cap M = \emptyset$, it contradicts the maximality of packing $\M$.
\item Directly follows from the fact that $\F$ is a rooted forest and each $T_i$ is a depth-first search tree.
\item Since $C$ is connected, there is $i \in [t]$ such that $C$ is a subgraph of $G'_i$.
Let $w$ be a vertex in $V(C)$ such that $\anc(w) \cap V(C) = \{w\}$.
If $V(C) \subseteq \sub(w)$, then we are done.
Otherwise, let $x \in V(C) \setminus \sub(w)$ and let $P_x$ be a path from $w$ to $x$ in $C$.
Let $x'$ be the first vertex outside $\sub(w)$ on that path, and $y$ be the previous vertex to $x'$ on $P_x$.
Then, by (\ref{observation:trees_disjoint}), either $y$ is an ancestor of $x'$ or vice versa.
In the first case we have $x'$ in $\sub(w)$ contradicting the choice of $x'$.
In the second case we have $x'$ in $V(C)$. Moreover, as it is an ancestor of $y \in \sub(w)$ and $x'$ is not in $\sub(w)$, it is an ancestor of $w$ in $V(C)$, contradicting the choice of $w$.
This shows that $V(C) \subseteq \sub(w)$. \qedhere
\end{enumerate}
\end{proof}
}%to appendix

A triple $(f,l,X)$ is \emph{an $X$-request}, if $l \in \{1, \dots, d-1\}$, $X \subseteq V(G)$, and either $f=\{u,v\} \subseteq X$, $u \neq v$, or $f=\{x\} \subseteq X$.
For an $X$-request $(f,l,X)$ and $H \subseteq V(G)$, we use $\mathcal{P}^{H}_{f,l}$ to denote the set of paths $P$ of length $l$ in $G[H \cup f]$ such that each $x \in f$ is an endpoint of $P$.
In particular, if $f=\{u,v\}$, $u \neq v$, then $u$ and $v$ are the endpoints of $P$ and if $f=\{x\}$, then one endpoint of $P$ is $x$ and the other can be any vertex of $H$.
A set $H \subseteq V(G)$ is said to \emph{satisfy an $X$-request} $(f,l,X)$ if $\mathcal{P}^{H}_{f,l} \neq \emptyset$.

An $M$-request $(f,l,M)$ will be simply denoted as $\req{(f,l)}$ and called \emph{a request}.

In the next paragraphs we cover the notion of the ``crucial'' vertices mentioned in the earlier intuition.
Roughly speaking, a vertex of $F$ is ``crucial'' if the set of its descendants (vertices of its subtree) satisfies some request.

%\begin{observation}\label{observation:number_of_requests}
%There are at most $(\binom{dk}{2} + dk)(d-1) = \bigo(k^2d^3)$ possible requests.
%\end{observation}
%\begin{proof}
%The size of $|M|$ is at most $dk$. Hence, the number of possibilities to pick the set $f$ is $\binom{dk}{2}$ when $|f| = 2$ and $dk$ when $|f| = 1$. For each unique set $f$ we have $d-1$ different choices for $l$.
%\end{proof}

For each request $\req{(f, l)}$ we define the set $Y_{f,l} \subseteq V(\F)$ as the set of all vertices $v \in V(\F)$ such that $\sub(v)$ satisfies the request $\req{(f,l)}$.
Note that if $v \in Y_{f,l}$, then also $w \in Y_{f,l}$ for every $w \in \anc(v)$, since $\sub(w) \supseteq \sub(v)$.
Thus, if $Y_{f,l} \cap V(G'_i) \neq \emptyset$ for some $i$, then in particular $r_i \in Y_{f,l}$ and $Y_{f,l}$ induces a connected subtree of $T_i$.
The request $\req{(f,l)}$ is \emph{resolved} if the subforest $\F[Y_{f,l}]$ has at least $k+d+1$ leaves.

Recall, that in the intuition we examined some $d$-path $P$ in $G$. The resolved request basically ensures that there are at least $k+d+1$ disjoint paths in $G$ which satisfy said request. The idea is that the prospective solution may compromise at most $k$ of these paths and the other parts of $P$ may compromise at most $d$ of these paths. As we will keep exactly $k+d+1$ of these  disjoint paths in $\widehat{G}$, we can be sure, that at least one of them will always be usable to reroute some part of $P$ which will help us to find the desired path $P'$ in $\widehat{G}$.

Now, we focus on the unresolved requests.
%The set $\Y$ denotes the union of all sets $Y_{f,l}$ corresponding to requests $\req{(f,l)}$ which are \emph{not} resolved, i.e., let $\R^*$ be the set of all requests $\req{(f,l)}$ which are \emph{not} resolved, then $\Y = \bigcup_{\req{(f,l)} \in \R^*} Y_{f,l}$.
Let $\R^*$ be the set of all requests $\req{(f,l)}$ which are \emph{not} resolved and let $\Y = \bigcup_{\req{(f,l)} \in \R^*} Y_{f,l}$.
%Label the subtrees of $\F[\Y]$ as $T^\Y_1, T^\Y_2, \dots, T^\Y_p$.

%\begin{observation}
%The subforest $\F[\Y]$ has at most $\bigo(k^3d^4)$ leaves.
%\end{observation}
%\begin{proof}
%By \autoref{observation:number_of_requests}, there are at most $\bigo(k^2d^3)$ possible requests and for each such request we add at most $k+d+1$ leaves to $F[\Y]$ as we include only those sets $Y_{f,l}$ corresponding to unresolved requests which, by definition, have at most $k+d+1$ leaves in $\F[Y_{f,l}]$. Therefore, the maximum number of leaves in $\F[\Y]$ is $\bigo((k^2d^3)\cdot(k+d+1)) = \bigo(k^3d^4)$.
%\end{proof}

%Let $\C$ be the set of vertices $v \in V(\F) \setminus \Y$ such that $\anc(v) \cap \Y \neq \emptyset$. Let $\D = V(\F) \setminus (\Y \cup \C)$. Label the subtress of $\F[\C]$ as $T^\C_1, T^\C_2, \dots, T^\C_q$ and the subtrees of $\F[\D]$ as $T^\D_1, T^\D_2, \dots, T^\D_r$. Note that $V(F) = \Y \cup \C \cup \D$.

%Let $C_1, C_2, \dots, C_q$ be the vertex sets of connected components of $\F \setminus \Y$. Let $\C = \bigcup_{i \in [q]} C_i$.

% \begin{figure}[!h]
% \begin{center}
% \includegraphics[width=\textwidth]{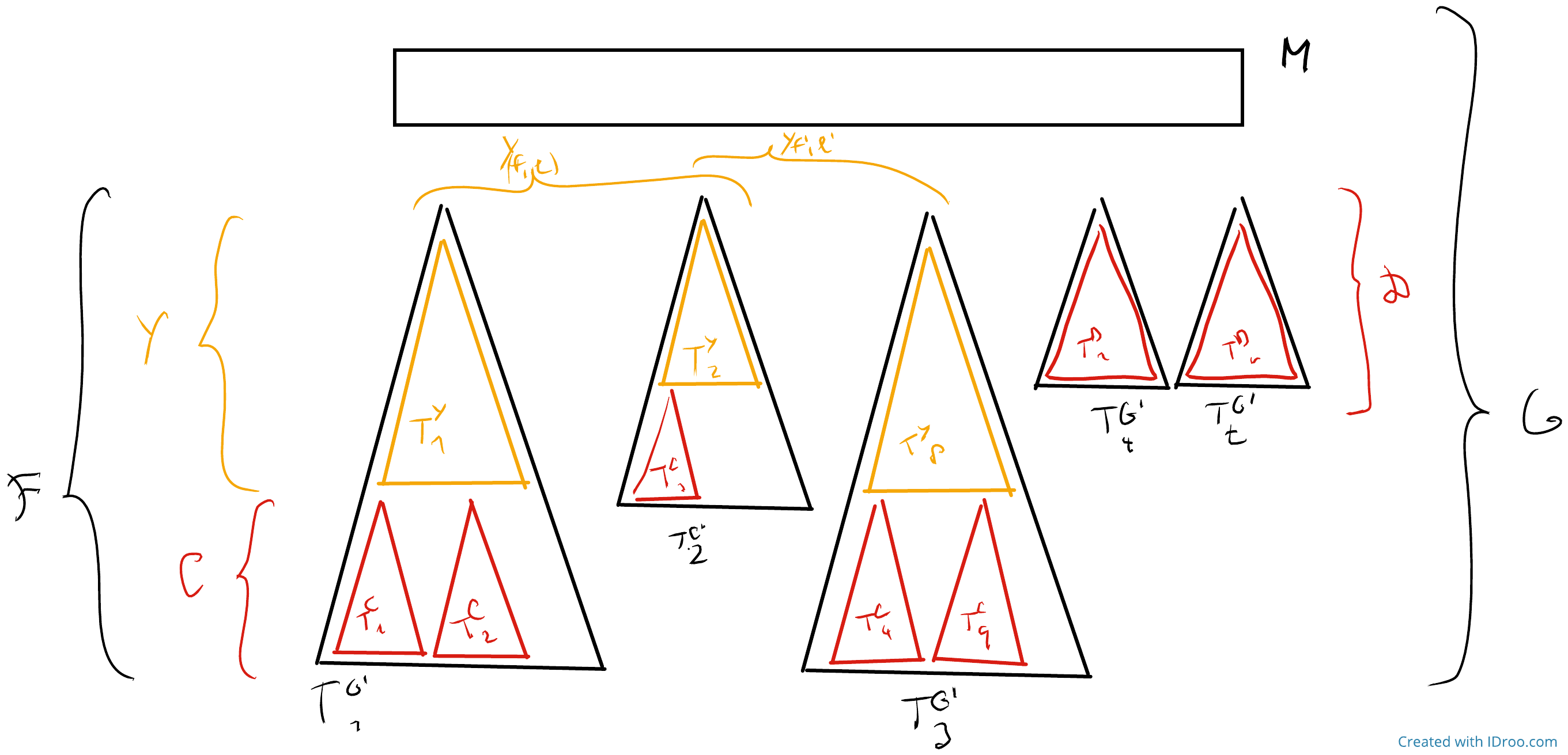}
% \end{center}
% \caption{An overview of the notation.}
% \label{fig:dpvc_kernel_notation}
% \end{figure}

We are now getting to the notion of sub-requests. These have either one endpoint in $M$ and the other in $\Y$, or both endpoints in $\Y$, or only one prescribed endpoint, which is in $\Y$. Note that if the two endpoint are in $\Y$, then, by \autoref{observation:trees}, either one of the endpoints is an ancestor of the other, or there is no path connecting them outside $(M \cup \Y)$.

An $(M \cup \Y)$-request $(g,j,M \cup \Y)$ will be simply denoted as $\subreq{(g,j)}$ and called \emph{a sub-request} if there exists $y \in \Y$ such that $y \in g$ and $g \subseteq M \cup \anc(y)$.
In particular, either $g=\{y\}$ or one of the vertices in $g$ is $y$ and the other vertex is in $M \cup \anc(y)$.

Even though we will not formally define a \textit{resolved sub-request}, later we will actually show that the sub-request is ``resolved'' if there are at least $2d$ paths satisfying it.

\begin{algorithm}[h!]
\caption{Marking procedure \textbf{Mark}}
\label{alg:mark}
\BlankLine
Let $M$, $\F$, and $\Y$ be as in the text. \\
Mark all the vertices and edges in $G[M \cup \Y]$. \label{alg:mark_my}\\
\ForEach{resolved request $\req{(f,l)}$\label{alg:mark_foreach_resolved_start}}{
    Pick arbitrary $k+d+1$ leaves $h_1, h_2, \dots, h_{k+d+1}$ of $\F[Y_{f,l}]$.\\
    \ForEach{leaf $h_i$}{
        Pick an arbitrary path $P$ from $\mathcal{P}^{\sub(h_i)}_{f,l}$.\\
        Mark the vertices and edges of $P$.\label{alg:mark_foreach_resolved_end}\\
    }
}
\ForEach{$y \in \Y$\label{alg:mark_foreach_y_start}}{
    \ForEach{sub-request $\subreq{(g,j)}$ such that $g \subseteq M \cup \anc(y)$ and $g \not\subseteq M$}{
        Let $C_1, C_2, \dots, C_{q'}$ be the vertex sets of the connected components of $G \setminus (M \cup \Y)$ such that for all $i \in [q']$, $N(C_i) \cap \Y \subseteq \anc(y)$ and $C_i$ satisfies $\subreq{(g,j)}$.\label{alg:components}\\
    \If{$q' \geq 2d$}{\label{alg:mark_c_bound_start}
        \ForEach{$i \in [2d]$}{
            Pick an arbitrary path $P \in \PP^{C_i}_{g,j}$.\\
            Mark the vertices and edges of $P$.\\
        }
    }\label{alg:mark_c_bound_end}
    \Else{\label{alg:mark_mark2_start}
        \ForEach{$i \in [q']$}{
            Run the marking procedure \textbf{Mark~2} on $\Big(\subreq{(g,j)},C_i,\emptyset\Big)$. \label{alg:mark_foreach_y_end}\label{alg:mark_mark2_end}\\
        }
    }
    }
}

\end{algorithm}

\begin{algorithm}[h!]
\caption{Marking procedure \textbf{Mark 2}$\Big( \subreq{(g,j)}, C_i, W \Big)$}
\label{alg:mark2}
\BlankLine
\If{$|W| \leq 2d$ and $\PP^{C_i \setminus W}_{g,j} \neq \emptyset$}{
    Let $P \in \PP^{C_i \setminus W}_{g,j}$.\\
    Mark all the edges and vertices of $P$.\\
        \ForEach{$v \in V(P) \setminus g$}{
            $W' = W \cup \{v\}$\\
            Call \textbf{Mark 2} on $\Big(\subreq{(g,j)},C_i,W'\Big)$
        }
    }
\end{algorithm}

\textit{Description of the algorithm.} We are now ready to describe the kernelization algorithm. As we mentioned earlier, the algorithm first marks some vertices and edges, which it wants to keep, and it deletes the rest of the graph. Therefore, the core of the algorithm is the marking procedure. In our case, the main procedure is called \textbf{Mark} which in turn uses a procedure called \textbf{Mark~2}. These procedures are described in \autoref{alg:mark} and \autoref{alg:mark2}, respectively.

Let us now give an insight into how the procedures \textbf{Mark} and \textbf{Mark~2} were constructed. We will start with \textbf{Mark}.

The lines \ref{alg:mark_foreach_resolved_start}--\ref{alg:mark_foreach_resolved_end} deal with the resolved requests.
Essentially, by preserving $k+d+1$ corresponding paths for the request $\req{(f,l)}$, we retain all the necessary structure such that we do not create any new solutions in the reduced instance.

In the two following \texttt{for}-cycles, we first pick a vertex $y$ of $\Y$. This fixes the set of ancestors $\anc(y)$, i.e., it fixes the set of vertices on the path from $y$ to the root of its tree in $\F$. And, for this particular $y$, we then pick a sub-request $\subreq{(g,j)}$ which lives on this fixed set $\anc(y)$ and $M$.
This allows us to look only at some components of $G \setminus (M \cup \Y)$ and actually makes it possible for us to bound their number. The bounding happens on lines \ref{alg:mark_c_bound_start}--\ref{alg:mark_c_bound_end} and we can also say that the sub-request $\subreq{(g,j)}$ is resolved when the number of components is at least $2d$. The bound $2d$ follows from \autoref{lemma:dpvc_sol_mod}, which will be stated later.
For a resolved sub-request we proceed similarly to resolved request.
Namely, we preserve one corresponding path in each of some $2d$ of the components.

If the number of components is not large, the lines \ref{alg:mark_mark2_start}--\ref{alg:mark_mark2_end} run the second marking procedure \textbf{Mark~2} on each of these components and the aim is to bound their size.

Now, recall again, that in the intuition we examined some $d$-path $P$ in $G$ and some prospective solution $S$. The purpose of the marking procedure \textbf{Mark~2} is to brute-force all the possible ways of how the path $P$ and solution $S$ may compromise the paths which satisfy the sub-request $\subreq{(g,j)}$ and which are contained in the component $C_i$. The procedure works recursively, starts with the empty set of ``compromising'' vertices $W$ and it always picks a path which was not yet compromised, marks it (so that it remains in $\widehat{G}$), and tries to compromise its vertices one by one. By doing it like this, we ensure, that all the important parts of $C_i$ remain in $\widehat{G}$ no matter what parts of $C_i$ were compromised.

And the main trick is that we can stop the recursion of \textbf{Mark~2} once the number of ``compromising'' vertices reaches $2d$. This number $2d$ again follows from \autoref{lemma:dpvc_sol_mod}.

Now, the kernelization can be formally summarized as follows. Run the marking procedure \textbf{Mark} on the instance $(G,k)$. The marking results in two subsets $\widehat{V} \subseteq V(G)$ and $\widehat{E} \subseteq E(G)$ corresponding to marked vertices and edges by \textbf{Mark}.
Reduce the instance $(G,k)$ to the instance $(\widehat{G}, k)$ where $\widehat{G} = (\widehat{V}, \widehat{E})$.

With that we conclude the intuition and we continue with the formal proof of correctness. First, we state the crucial \autoref{lemma:dpvc_sol_mod}, then the main body of the proof follows in \autoref{lemma:dpvc_marking_correct}, and we finish with the proof of the size of the kernel in \autoref{lemma:dpvc_marking_size}.

\autoref{lemma:dpvc_sol_mod} roughly states that any reasonable solution only contains at most $d$ vertices among each set of components considered on line \ref{alg:components} of \autoref{alg:mark}.

%The marking procedure \textbf{Mark} consists of the following steps.
%\begin{enumerate}
%\item Mark all the vertices and edges in $G[M \cup \Y]$.
%\item For each resolved request $(f,l)$ pick arbitrary $k+d+1$ leaves $v_1, v_2, \dots, v_{k+d+1}$ of $\F[Y_{f,l}]$ and for each leaf $v_i$ pick an arbitrary path $P$ from $\mathcal{P}^{V(T_{v_i})}_{f,l,M}$ and mark the vertices and edges of $P$.
%\item For each vertex $y \in \Y$ and each sub-request $\widehat{(f,l)}$ such that $f \subseteq M \cup \anc(y)$, let $T^\C_{i_1}, T^\C_{i_2}, \dots, T^\C_{i_{q'}}$ be the subtrees such that $N(V(T^\C_{i_j})) \cap \Y \subseteq \anc(y)$ and $V(T^\C_{i_j})$ satisfies $\widehat{(f,l)}$.
%\begin{enumerate}
%\item If $q' \geq 2d+1$, for each $j \in \{1,2,\dots,2d+1\}$ pick an arbitrary path $P \in \PP^{V(T^\C_{i_j})}_{f,l,M \cup \Y}$ and mark the edges and vertices of $P$.
%\item Otherwise, for each $j \in \{1,2,\dots,q'\}$ run the marking procedure \textbf{Mark~2} on $\widehat{(f,l)}$ and $T^\C_{i_j}$.
%\end{enumerate}

%\end{enumerate}

\toappendix{\subsection{Proof of \autoref{lemma:dpvc_marking_correct} (Correctness of the Algorithm)}}%

%For the correctness of the kernelization algorithm we need the following crucial lemma.
%It shows that a solution that contains more than $d$ vertices ``under the same $y \in \Y$'' is not optimal.
%We use this crucially when dealing with sub-requests.

\begin{lemma}\label{lemma:dpvc_sol_mod}
Let $(G,k)$ be an instance of \textsc{$d$-PVC}. Let $\widehat{G} = (\widehat{V}, \widehat{E})$ be a subgraph of $G$ such that $\widehat{V}$ and $\widehat{E}$ are the result of running the marking procedure \textbf{Mark} on $(G,k)$. Let $S'$ be a solution for the instance $(\widehat{G}, k)$ of \textsc{$d$-PVC}. Let $y \in \Y$ and let $C_1, C_2, \dots, C_c$ be the vertex sets of the connected components of $\widehat{G} \setminus (M \cup \Y)$ such that $N(C_i) \cap \Y \subseteq \anc(y)$ for $i \in [c]$. Then $\widehat{S} = (S' \setminus \bigcup_{i \in [c]} C_i) \cup \anc(y)$ is a solution for $\widehat{G}$.
\end{lemma}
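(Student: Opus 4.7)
\emph{Proof proposal.}
My plan is a proof by contradiction: suppose a $P_d$ $P = v_1 v_2 \cdots v_d$ exists in $\widehat{G} \setminus \widehat{S}$, and derive a contradiction with $S'$ being a solution for $\widehat{G}$. From $V(P) \cap \widehat{S} = \emptyset$ I record that $P$ avoids $\anc(y)$ and $V(P) \cap S' \subseteq \bigcup_i C_i$; since $S'$ is a solution, $V(P) \cap S' \neq \emptyset$, so $V(P)$ meets at least one $C_i$.

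The first structural step is to observe that after deleting $\anc(y)$, the set $\bigcup_i C_i$ is attached to the rest of $\widehat{G}$ only through $M$. This holds because the $C_i$ are components of $\widehat{G} \setminus (M \cup \Y)$ (pairwise non-adjacent and non-adjacent to other such components) and $N(C_i) \cap \Y \subseteq \anc(y)$. Hence each $P$-vertex in $\bigcup_i C_i$ has its $P$-neighbors in $C_i \cup M$, so the $P$-vertices in $\bigcup_i C_i$ form maximal sub-paths inside a single $C_i$, each bordered on $P$ by $M$-vertices (or by an endpoint of $P$). If $V(P) \subseteq C_i$ for some $i$, then $P$ is a $P_d$ in $G \setminus M$, contradicting the maximality of the packing $\M$.

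Otherwise, I isolate a maximal sub-path $Q = v_a \cdots v_b$ of $P$ inside some $C_i$ with $V(Q) \cap S' \neq \emptyset$; at least one of $v_{a-1}, v_{b+1}$ exists and lies in $M$. This sub-path together with its $M$-borders encodes a request $(f,l)$ satisfied by $C_i$, where $f \subseteq M$ is the set of bordering vertices and $l$ accounts for the length of $Q$. By \autoref{observation:trees}, there is a vertex $w_i \in V(C_i)$ with $V(C_i) \subseteq \sub(w_i)$, hence $w_i \in Y_{f,l}$. The plan is then to exhibit a marked alternative path $R$ in $\widehat{G}$ satisfying the same request, vertex-disjoint from both $S'$ and from $V(P) \setminus (C_i \cup f)$; splicing $R$ in place of $Q$ in $P$ produces a $P_d$ in $\widehat{G} \setminus S'$, contradicting $S'$ being a solution.

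The crux, and the main expected obstacle, is producing such an $R$. If $(f,l)$ is resolved, then line~\ref{alg:mark_foreach_resolved_end} of \textbf{Mark} supplies $k + d + 1$ paths lying in pairwise disjoint subtrees $\sub(l_i)$ and sharing only $f$; since $f \cap S' = \emptyset$ (because $f \subseteq V(P) \setminus \bigcup_i C_i$), $S'$ covers at most $k$ of them, and at most $|V(P) \setminus (C_i \cup f)| \leq d - 2$ can share a vertex with the ``external'' part of $V(P)$, leaving at least three valid choices. If $(f,l)$ is unresolved, then $w_i \in \Y$ and the role of $R$ is played by the alternatives marked in lines~\ref{alg:mark_foreach_y_start}--\ref{alg:mark_foreach_y_end} for sub-requests associated with $w_i$, including those produced recursively by \textbf{Mark~2}; exploiting the case split on $q' \geq 2d$ versus $q' < 2d$ then secures the required disjointness.
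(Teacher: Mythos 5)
There is a genuine gap, and it sits exactly at what you call the crux. For the segment $Q$ you isolate, all of its internal vertices lie in $C_i$, and $C_i$ is by hypothesis a connected component of $\widehat{G} \setminus (M \cup \Y)$, so $C_i \cap \Y = \emptyset$. Once you have the vertex $w_i \in C_i$ with $C_i \subseteq \sub(w_i)$ and hence $w_i \in Y_{f,l}$, the decisive observation is that the request $\req{(f,l)}$ \emph{must} be resolved: if it were unresolved, then $Y_{f,l} \subseteq \Y$ would force $w_i \in \Y$, contradicting $w_i \in C_i$. This is how the paper closes the argument — only the $k+d+1$ paths marked for resolved requests are ever needed in this lemma. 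Your unresolved branch is therefore vacuous, and moreover the fallback you sketch for it would not work as an argument: sub-requests $\subreq{(g,j)}$ are defined so that $g$ contains a vertex of $\Y$, whereas your segment $Q$ together with its borders lies entirely in $C_i \cup M$ and avoids $\Y$, so it does not correspond to any sub-request; and in the $q' < 2d$ case the paths marked by \textbf{Mark~2} inside a single component are not pairwise disjoint, so "securing the required disjointness" there needs the whole $W$-recursion analysis (which the paper only deploys in the proof of \autoref{lemma:dpvc_marking_correct}, where it in turn \emph{uses} the present lemma — so importing it here would be circular).

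A second, smaller gap: your splicing step claims that replacing the single sub-path $Q$ yields a $P_d$ in $\widehat{G} \setminus S'$, but $P$ may meet $S'$ in several maximal sub-paths (in other components $C_j$, or in other sub-paths of the same $C_i$), and then the spliced path still hits $S'$. The paper avoids this by choosing $P$ as a $d$-path in $\widehat{G} \setminus \widehat{S}$ minimizing $|V(P) \cap S'|$ and deriving a contradiction with that minimality after one swap, rather than contradicting directly that $S'$ is a solution. Your version is repairable (iterate the swap over all offending segments, or adopt the minimality choice), but as written it is incomplete. With the resolvedness observation added and the minimality (or iteration) fix, your resolved-case counting of the $k+d+1$ marked, internally disjoint paths is essentially the paper's argument.
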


%\toappendix{%
%\begin{proof}[of \autoref{lemma:dpvc_sol_mod}]
\begin{proof}
If $\widehat{S}$ is a solution for $(\widehat{G},k)$, we are done.
Suppose to the contrary that it is not. Then there is a $d$-path $P$ in $\widehat{G} \setminus \widehat{S}$.
Assume that $P$ is selected such that it contains the least number of vertices which are in $S'$ i.e.,
$|V(P) \cap S'|$ is minimized among all $d$-paths in $\widehat{G} \setminus \widehat{S}$.
As $S' \setminus \widehat{S} \subseteq \bigcup_{i \in [c]}C_i$, path $P$ must contain at least one vertex from at least one set $C_i \cap S'$,
because otherwise $P$ would also be in $\widehat{G} \setminus S'$,
which is a contradiction with $S'$ being a solution for  $(\widehat{G},k)$.
Further, since $N(C_i) \subseteq (M \cup \Y)$, $N(C_i) \cap \Y \subseteq \anc(y)$, and $\anc(y) \subseteq \widehat{S}$ by assumption,
we have $N(C_i) \setminus \widehat{S} \subseteq M$. Therefore $P \cap M \neq \emptyset$, as otherwise the path $P$ would be contained in $C_i$,
which is a contradiction with $\M$ being a maximal packing of $d$-paths.

We split the path $P$ into segments according to the vertices of $M$, i.e., a \emph{segment} of $P$ is a sub-path $(v_1, v_2, \dots, v_s)$ of $P$ such that $v_2, v_3, \dots, v_{s-1} \in V(G) \setminus M$ and either $\{v_1, v_s\} \subseteq M$ (an inner segment), or
one of $v_1,v_s$ is in $M$, while the other is an endpoint of $P$ (an outer segment).
The argument is the same in both cases.
% without loss of generality, $v_1 \in M$ and $v_s \in V(G) \setminus M$ (an outer segment). We note that the distinction between inner and outer segments is not actually important for the proof itself, as the arguments work with both, we mainly mention them to bring both possibilities to the readers attention.

Let $P' = (v_1, v_2, \dots, v_s)$ be the segment of $P$ which uses some vertex from $C_i \cap S'$.
Observe, that the segment $P'$ corresponds to request $\req{(f,l)} = \req{(V(P') \cap M, s-1)}$ as $2 \leq s \leq d$ and $|V(P') \cap M| \in \{1,2\}$.
In particular, $V(P') \setminus M$ satisfies $\req{(f,l)}$.

We also know that $V(P') \cap \Y = \emptyset$, because $N(C_i) \setminus \widehat{S} \subseteq M$. With that we argue that the request $\req{(f,l)}$ must be resolved.
Indeed, suppose it is not.
By \autoref{observation:trees}(\ref{observation:connected_subgraph}) there is a vertex $v$ in $V(P') \cap C_i$ such that
that $V(P') \cap C_i \subseteq sub(v)$. But that implies that $\sub(v)$ satisfies the request $\req{(f,l)}$ and, therefore, vertex $v$ should have been included in $Y_{f,l}$ and, consequently, $v$ should have been included in $\Y$, which is a contradiction with $V(P') \cap \Y = \emptyset$.

Now, as the request $\req{(f,l)}$ is resolved, the marking procedure \textbf{Mark} picked $k+d+1$ leaves $h_1, h_2, \dots, h_{k+d+1}$ from $\F[Y_{f,l}]$ and for each such leaf $h_i$ it marked the vertices and edges of some path $P_i \in \PP^{\sub(h_i)}_{f,l}$. Therefore, these paths $P_1, P_2, \dots, P_{k+d+1}$ remained in $\widehat{G}$. Further, at least one of these paths is untouched by the vertices of $S'$ and the vertices of $P$ as $|S'| \leq k$ and $|V(P)| \leq d$, respectively. Let this one untouched path be $P_i$. Observe, that we can swap the segment $P'$ with the path $P_i$ in $P$ to obtain a $d$-path $P^*$. But then the path $P^*$ contains strictly fewer vertices which are in $S'$ than $P$, which is a contradiction with the choice of $P$.
\end{proof}
%}%to appendix

Now we can prove the correctness of the algorithm.

\begin{lemma}[\appmark]\label{lemma:dpvc_marking_correct}
Let $(G,k)$ be an instance of \textsc{$d$-PVC}. Let $\widehat{G} = (\widehat{V}, \widehat{E})$ be a subgraph of $G$ such that $\widehat{V}$ and $\widehat{E}$ are obtained by running the marking procedure \textbf{Mark} on $(G,k)$. Then, $(G,k)$ is a YES instance if and only if $(\widehat{G}, k)$ is a YES instance.
\end{lemma}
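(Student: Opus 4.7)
The forward direction is immediate: if $S$ solves $(G,k)$, then $S\cap\widehat{V}$ solves $(\widehat{G},k)$ with size at most $k$, since $\widehat{G}$ is a subgraph of $G$ and any $d$-path in $\widehat{G}\setminus(S\cap\widehat{V})$ would also be a $d$-path in $G\setminus S$. For the backward direction, fix a minimum-size solution $S'$ for $(\widehat{G},k)$ and suppose for contradiction that $S'$ fails to solve $G$. Pick a $d$-path $P$ in $G\setminus S'$ minimizing the number of its vertices and edges lying outside $\widehat{G}$. The plan is to swap a segment of $P$ carrying unmarked content with a marked replacement path in $\widehat{G}\setminus S'$, producing a $d$-path in $G\setminus S'$ with strictly less unmarked content -- contradicting minimality. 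Partition $P$ into \emph{segments} at its vertices in $M\cup\Y$: by Observation~\ref{observation:trees}(\ref{observation:connected_subgraph}) each segment's internal vertices lie in a single connected component $C$ of $G\setminus(M\cup\Y)$, and since Line~\ref{alg:mark_my} of \textbf{Mark} preserves $G[M\cup\Y]$ inside $\widehat{G}$, some segment $P'=(v_1,\ldots,v_s)$ must harbour the unmarked content. Write $g=\{v_1,v_s\}\cap(M\cup\Y)$ and $j=s-1$.

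\emph{Case I: $g\subseteq M$.} Then $P'$ witnesses that $C$ satisfies the request $\req{(g,j)}$. By Observation~\ref{observation:trees}(\ref{observation:connected_subgraph}) there is $w\in V(C)$ with $V(C)\subseteq\sub(w)$, so $\sub(w)$ satisfies $\req{(g,j)}$ and $w\in Y_{g,j}$. Since $w\notin\Y$, the request $\req{(g,j)}$ must be resolved (otherwise $w$ would have been placed into $\Y$). Therefore \textbf{Mark} selected $k+d+1$ paths satisfying $\req{(g,j)}$, each in a distinct leaf's subtree $\sub(l_i)$; by Observation~\ref{observation:trees}(\ref{observation:trees_disjoint}) these subtrees are pairwise disjoint, so the marked paths are internally vertex-disjoint. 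Since the bad set $S'\cup V(P)$ has size at most $k+d$, at least one of the $k+d+1$ marked paths has no internal vertex in the bad set and can replace $P'$ in $P$.

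\emph{Case II: $g\cap\Y\neq\emptyset$.} Then $\subreq{(g,j)}$ is a sub-request. The key structural fact is that $N(C)\cap\Y$ is a chain under the ancestor order: because $\Y$ is upward-closed in $\F$, the set $V(\F)\setminus\Y$ is downward-closed, so the $w$ from Observation~\ref{observation:trees}(\ref{observation:connected_subgraph}) satisfies $\sub(w)\cap\Y=\emptyset$, forcing every vertex of $N(C)\cap\Y$ to be a strict ancestor of $w$. Let $y^*$ be the deepest such ancestor; then $N(C)\cap\Y\subseteq\anc(y^*)$ and $g\subseteq M\cup\anc(y^*)$, so \textbf{Mark} processes $\subreq{(g,j)}$ at $y=y^*$ with $C$ among the considered components $C_1,\ldots,C_{q'}$. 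The minimality of $|S'|$, via Lemma~\ref{lemma:dpvc_sol_mod} applied at $y^*$, gives $|S'\cap\bigcup_i C_i|\leq|\anc(y^*)|\leq d-1$ (otherwise the lemma would produce a strictly smaller solution for $\widehat{G}$). If $q'\geq 2d$, the $2d$ marked paths occupy pairwise disjoint components, at most $d-1$ of which contain an $S'$-vertex and at most $d-2$ of which contain a vertex of $V(P)\setminus V(P')$, leaving at least one free marked path. If $q'<2d$, take $W=(S'\cup V(P)\setminus V(P'))\cap C$, so $|W|\leq(d-1)+(d-2)\leq 2d$; the coverage guarantee of \textbf{Mark~2} -- which I would prove separately by induction on $|W|$, tracing its recursion over $V(P)\setminus g$ of depth at most $2d$ -- then supplies a marked path in $C\setminus W$ satisfying $\subreq{(g,j)}$, with $P'$ itself witnessing that $C\setminus W$ admits such a path in $G$. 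In either sub-case the swap strictly decreases the unmarked content of $P$, contradicting minimality.

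\emph{Main obstacle.} The crux is Case II: recognising that $N(C)\cap\Y$ is a chain, extracting the $d-1$ bound on $|S'\cap\bigcup_i C_i|$ from the minimality of $|S'|$ coupled with Lemma~\ref{lemma:dpvc_sol_mod}, and then pushing this bound uniformly through both the $q'\geq 2d$ branch (a pigeonhole over disjoint marked components) and the $q'<2d$ branch -- the latter hinging on a coverage statement for \textbf{Mark~2} that must be formulated and proved by induction on $|W|$.
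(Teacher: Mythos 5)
Your proposal is correct and follows essentially the same route as the paper's proof: take a $d$-path witnessing failure with minimum unmarked content, split it at $M$ and $\Y$, and swap the offending piece using either the $k+d+1$ internally disjoint marked paths of a resolved request or, for a sub-request, the $2d$ marked components respectively the coverage of \textbf{Mark~2}, with Lemma~\ref{lemma:dpvc_sol_mod} supplying the bound of $d-1$ solution vertices per relevant component collection. The only deviations are organizational: you work with a minimum-size solution where the paper iterates Lemma~\ref{lemma:dpvc_sol_mod} to a fixed point, and you segment at $M\cup\Y$ in one step (deriving resolvedness for segments with no $\Y$-endpoint) rather than the paper's two-level segments-then-sub-segments decomposition; both choices are valid and change nothing essential.
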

\begin{proof}[Proof sketch.]
For the ``if'' direction we pick a solution $S'$ to $\widehat{G}$ such that any application of \autoref{lemma:dpvc_sol_mod} would increase its size.
If $S'$ was not a solution to $G$, then as in \autoref{lemma:dpvc_sol_mod}, we pick a special $d$-path $P$ witnessing that, but this time with the least number of unmarked edges in $G$.
Then we again split $P$ into segments according to $M$ and, in case the corresponding request was not resolved, further into sub-segments according to $\Y$. We always pick a \mbox{(sub-)segment} with at least one unmarked edge and we show that we can swap the (sub-)segment with some other suitable fully marked sub-path to obtain a contradiction with the choice of~$P$.
% or we will show that there actually cannot be an unmarked  edge in the chosen (sub-)segment.
\end{proof}

\toappendix{%
% The rest of this subsection is devoted to the proof of \autoref{lemma:dpvc_marking_correct}.\\
%
% \begin{proof}
As $\widehat{G}$ is a subgraph of $G$, any solution for $(G,k)$ is also a solution for $(\widehat{G},k)$, proving one of the implications.
Therefore, we will focus mainly on the other implication.

Let $S'$ be a solution for $(\widehat{G},k)$. We start by constructing another solution $\widehat{S}$ by repeatedly applying \autoref{lemma:dpvc_sol_mod} on $S'$ whenever it does not increase the size of the solution. Formally, we construct a sequence $S'_0, S'_1, \dots, S'_{s'}$ where $S'_0 = S'$, $\widehat{S} = S'_{s'}$, $S'_i$ is the result of applying \autoref{lemma:dpvc_sol_mod} on $S'_{i-1}$ in such a way that $|S'_i| \leq |S'_{i-1}|$ for all $i \in [s']$, and it holds for all $y \in \Y$ that applying \autoref{lemma:dpvc_sol_mod} on $S'_{s'}$ with $y$ results in a solution larger than $S'_{s'}$.

If $\widehat{S}$ is a solution for $(G,k)$, we are done. Assume on contrary that it is not. Therefore, there must be a $d$-path $P$ in $G \setminus \widehat{S}$. Assume that $P$ is selected such that $P$ has the least number of unmarked edges from all the $d$-paths in $G \setminus \widehat{S}$.
As $\widehat{S}$ is a solution for $\widehat{G}$, at least one edge must be unmarked in $P$, otherwise, $P$ would also be a $d$-path in $\widehat{G} \setminus \widehat{S}$.

Observe, that $V(P) \cap M \neq \emptyset$, otherwise, we would have a contradiction with $\M$ being a maximal packing of $d$-paths.

We again split the path $P$ into segments according to the vertices of $M$, i.e., a \emph{segment} of $P$ is a sub-path $(v_1, v_2, \dots, v_s)$ of $P$ such that $v_2, v_3, \dots, v_{s-1} \in V(G) \setminus M$ and either $\{v_1, v_s\} \subseteq M$ (an inner segment), or
one of $v_1,v_s$ is in $M$, while the other is an endpoint of $P$ (an outer segment).
The argument is again the same in both cases.
% without loss of generality, $v_1 \in M$ and $v_s \in V(G) \setminus M$ (an outer segment). We note that the distinction between inner and outer segments is not actually important for the proof itself, as the arguments work with both, we mainly mention them to bring both possibilities to the readers attention.

Pick a segment $P'=(v_1, v_2, \dots, v_s)$ of $P$, such that there is at least one edge unmarked in $P'$.
Such a segment must exist, because each (unmarked) edge of $P$ belongs to some segment.
Since all the edges in $G[M]$ were marked, we have $V(P') \not\subseteq M$, i.e., it has at least one vertex outside of $G[M]$.
Note that the length $s-1$ of the segment $P'$ is at least $1$ and at most $d-1$ as it contains vertices from $M$, $V(P') \not\subseteq M$, and $P$ is a $d$-path.

%For now, assume that we are dealing with an inner segment, i.e., $v_1, v_s \in M$, as we will point out the differences in the argumentation for an outer segment later.

Observe, that the segment $P'$ corresponds to request $\req{(f,l)} = \req{(V(P') \cap M, s-1)}$ as $1 \leq s-1 \leq d-1$ and $|V(P') \cap M| \in \{1,2\}$.
In particular, $V(P) \setminus M$ satisfies $\req{(f,l)}$.

Now, suppose that request $\req{(f,l)}$ is resolved. The marking procedure \textbf{Mark} picked for such request $k+d+1$ leaves $h_1, h_2, \dots, h_{k+d+1}$ of $\F[Y_{f,l}]$ and for each such leaf $h_i$ it marked the edges and vertices of some path $P_i \in \PP^{\sub(h_i)}_{f,l}$.
Note, that these marked paths have length $s-1$, have vertices of $f$ as their endpoints, and are pairwise disjoint except for the vertices of $f$. Indeed, the paths $P_i$ and $P_j, i \neq j$ were picked from $\PP^{\sub(h_i)}_{f,l}$ and $\PP^{\sub(h_j)}_{f,l}$ and the sets $\sub(h_i)$ and $\sub(h_j)$ are disjoint by the \autoref{observation:trees}(\ref{observation:trees_disjoint}) as the leaves $h_i$ and $h_j$ are not in ancestor-descendant relation. Moreover, at least one of the paths $P_1, P_2, \dots, P_{k+d+1}$ is touched neither by the solution $\widehat{S}$, nor by the path $P$ as $|\widehat{S}| \leq k$ and the length of $P$ is $d-1$. Let this untouched path be $P_i$.
Observe, that we can swap $P_i$ with the segment $P'$ in $P$ to obtain a $d$-path $P^*$.
Therefore, $P^*$ is also a $d$-path in $G \setminus \widehat{S}$ but,  since  all the edges of $P_i$ were marked, it has less unmarked edges than $P$, which is a contradiction with the choice of $P$.

Next, we assume that request $\req{(f,l)}$ is not resolved. Since $P' \setminus M$ is a connected subgraph of $G'$, by \autoref{observation:trees}(\ref{observation:connected_subgraph}) there is a vertex $w \in V(P') \setminus M$ such that $V(P') \setminus M \subseteq \sub(w)$. Then, by definition, $w \in Y_{f,l}$ and  $w \in \Y$, i.e., $V(P') \cap \Y \neq \emptyset$.

We zoom in on the segment $P'$ and split it further into sub-segments according to vertices of $\Y$.
A \emph{sub-segment} of $P'$ is a sub-path $(v'_1, v'_2, \dots, v'_{s'})$ of $P'$ such that $v'_2, v'_3, \dots, v'_{s'-1} \in V(G) \setminus (M \cup \Y)$ and either $\{v'_1, v'_{s'}\} \in (M \cup \Y)$ (an inner sub-segment), or one of $v'_1,v'_{s'}$ is in $M \cup \Y$, while the other is an endpoint of $P$ (and $P'$) (an outer sub-segment).
Yet again, the argument is the same in both cases.
%
% Again, we note that the distinction between inner and outer sub-segments is not actually important for the proof itself, we just want to bring both possibilities to the readers attention.
%

Using similar strategy to the one we used to pick a segment $P'$, we pick a sub-segment $P''=(v'_1, v'_2, \dots, v'_{s'})$ of $P'$, such that there is at least one edge unmarked in $P''$.
Such a sub-segment must exist, because each (unmarked) edge of $P'$ belongs to some sub-segment.
Since all edges in $G[M \cup \Y]$ were marked, $V(P'') \not\subseteq (M \cup \Y)$, i.e., it has at least one vertex outside of $G[M \cup \Y]$.

We provide the following observations about the sub-segment $P''$.

\begin{observation}\label{observation:subsegments}
\begin{enumerate}[(a)]
\item The length $s'-1$ of $P''$ is at least $1$ and at most $d-1$.
\item At least one endpoint of $P''$ is in $\Y$, i.e., $|V(P'') \cap \Y| \ge 1$.
\item There is some connected component of $G \setminus (M \cup \Y)$ with vertex set $C_i$ such that $V(P'') \setminus (M \cup \Y) \subseteq C_i$. \label{observation:subsegments_comp}
\item If $\{v'_1, v'_{s'}\} \subseteq \Y$ then the vertices $v'_1, v'_{s'}$ are in an ancestor-descendant relation in $\F[\Y]$.
%\item No sub-segment $(v'_1, v'_2, \dots, v'_{s'})$ of $P'$ has $\{v'_1, v'_{s'}\} \subseteq M$.
%\item For each sub-segment $(v'_1, v'_2, \dots, v'_{s'})$ of $P'$ there is a subtree $T^\C_i$ such that the inner vertices $\{v'_2, v'_3, \dots, v'_{s'-1}\} \subseteq V(T^\C_i)$.
%\item For each sub-segment $(v'_1, v'_2, \dots, v'_{s'})$ of $P'$ such that $\{v'_1, v'_{s'}\} \subseteq \Y$ the vertices $v'_1, v'_{s'}$ are in an ancestor-descendant relation in $\F[\Y]$.
\end{enumerate}
\end{observation}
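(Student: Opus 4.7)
The plan is to prove the four items in order, each via a short argument using the sub-segment definition and \autoref{observation:trees}. Item (a) is immediate: $P''$ is a sub-path of the $d$-path $P$ (so its length is at most $d-1$), and we chose $P''$ to contain an unmarked edge (so its length is at least $1$). For item (b), I would use the fact---established in the main proof just before the observation---that $V(P')\cap\Y\neq\emptyset$. Since the endpoints of an inner segment $P'$ lie in $M$ (disjoint from $\Y$) and the interior of any segment avoids $M$, any $\Y$-vertex of $P'$ lies in the interior of $P'$, so cutting $P'$ at $M\cup\Y$ genuinely splits it at some $\Y$-vertex; a case analysis on where $P''$ sits among the resulting pieces (adjacent to an $M$-endpoint of $P'$, between two $\Y$-vertices, or---in the outer-segment case---abutting an endpoint of $P$) shows that $P''$ always has at least one endpoint in $\Y$. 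Item (c) is also direct: by the sub-segment definition the interior vertices of $P''$ lie outside $M\cup\Y$ and consecutive ones are joined by edges of $P''$, forming a connected sub-path of $G\setminus(M\cup\Y)$; if $P''$ is an outer sub-segment whose $P$-endpoint falls outside $M\cup\Y$, that endpoint is adjacent to the rest and joins the same component $C_i$.

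The crux is item (d). Suppose for contradiction that $v'_1,v'_{s'}\in\Y$ are not in ancestor-descendant relation in $\F$. Since the interior of $P''$ lies outside $M$, the whole of $V(P'')$ sits in $V(\F)$, and because $P''$ is connected all its vertices belong to a single tree $T_i$ of $\F$; moreover, by \autoref{observation:trees}(\ref{observation:trees_disjoint}) every edge of $P''$ joins an ancestor-descendant pair in $T_i$. Let $\ell\in V(P'')$ be a vertex with the fewest ancestors; a short induction along $P''$ starting at $\ell$---using that $\ell$ is ``highest'' so its first step necessarily goes to a descendant, and that any subsequent move to an ancestor along $P''$ can only reach a descendant of $\ell$ (a strict ancestor of $\ell$ on $P''$ would contradict the choice of $\ell$, and revisiting $\ell$ is forbidden in a simple path)---shows that every vertex of $P''$ belongs to $\sub(\ell)$. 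In particular $\ell$ is a strict ancestor of both $v'_1$ and $v'_{s'}$ (otherwise these two would already be comparable), so $\ell$ is an interior vertex of $P''$ and hence $\ell\notin M\cup\Y$ by the sub-segment definition. But $v'_1\in\Y$ witnesses some non-resolved request $\req{(f'',l'')}$ with $v'_1\in Y_{f'',l''}$, and then $\sub(\ell)\supseteq\sub(v'_1)$ also satisfies $\req{(f'',l'')}$, forcing $\ell\in Y_{f'',l''}\subseteq\Y$---a contradiction.

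The main obstacle will be the walk-in-tree step in (d): one must justify carefully that a vertex of $P''$ with the fewest ancestors actually dominates every other vertex of $P''$, even though $P''$ is a path in $G$ rather than in $\F$---here the ancestor-descendant property of every edge, from \autoref{observation:trees}(\ref{observation:trees_disjoint}), is exactly what drives the induction. The conclusion then relies on the routine but essential upward-closure of $Y_{f,l}$: whenever $\sub(v)$ satisfies a request, so does $\sub(u)$ for every ancestor $u$ of $v$; this is what promotes the would-be interior witness $\ell$ back into $\Y$ and yields the contradiction.
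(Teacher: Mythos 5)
Your proposal is correct, and items (a)--(c) are argued essentially as in the paper (the paper gets the lower bound in (a) from $P''$ having a vertex in $M\cup\Y$ and one outside, which is equivalent to your "contains an edge", and its (b) is your case analysis compressed into "otherwise $P''$ would be all of $P'$"). For (d) you take a somewhat different route: the paper's proof is a one-liner observing that both endpoints lie in $N(C_i)$ for the component $C_i$ from (c) and that all vertices of $N(C_i)\cap\Y$ are pairwise in ancestor--descendant relation, a fact it justifies only later in the main text via \autoref{observation:trees}(\ref{observation:connected_subgraph}) and (\ref{observation:trees_disjoint}); you instead argue directly on $P''$: a shallowest vertex $\ell$ of $P''$ dominates all of $V(P'')$ (in effect re-deriving \autoref{observation:trees}(\ref{observation:connected_subgraph}) for the path, which you could also have cited directly since $V(P'')\subseteq V(\F)$ makes $P''$ a connected subgraph of $G'$), so incomparable $\Y$-endpoints would force $\ell$ to be an interior vertex outside $M\cup\Y$ that is a strict ancestor of $v'_1\in Y_{f'',l''}$, whence upward-closure of $Y_{f'',l''}$ puts $\ell\in\Y$, a contradiction. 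Both arguments ultimately rest on the same two ingredients---comparability of adjacent vertices in the DFS forest and upward-closure of the sets $Y_{f,l}$---but your version is more self-contained and makes explicit the closure step that the paper's terse proof leaves implicit, while the paper's is shorter because it reuses the component $C_i$ already produced in (c).
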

\begin{proof}
\begin{enumerate}[(a)]
\item The length of $P''$ is at least $1$, because it has at least one vertex in $(M \cup \Y)$ and one vertex outside of $(M \cup \Y)$. The length of $P''$ is at most $d-1$ as it is a sub-path of the segment $P'$ whose length is at most $d-1$.
\item We split $P'$ to sub-segments according to the vertices of $\Y$. If neither of the endpoints of $P''$ is in $\Y$, it means that $P''$ is the whole $P'$ contradicting that $V(P') \cap \Y \neq \emptyset$.
\item The path $P'' \setminus (M \cup \Y)$ is a connected subgraph of $G \setminus (M \cup \Y)$. Hence it must be contained within one connected component.
\item Let $C_i$ be as in (\ref{observation:subsegments_comp}). Then $\{v'_1, v'_{s'}\} \subseteq N(C_i)$ and any pair of vertices in $N(C_i) \cap \Y$ is in an ancestor-descendant relation. \qedhere
\end{enumerate}
\end{proof}

%$$$$$$$$$$$$$$$$$$$$$$$$$$$$$$$$$$$$$$$$$$$$$$$$$

Let $C_i$ be the vertex set of the connected component of $G \setminus (M \cup \Y)$ from the \autoref{observation:subsegments}(\ref{observation:subsegments_comp}), i.e., such that $V(P'') \setminus (M \cup \Y) \subseteq C_i$.
By \autoref{observation:trees}(\ref{observation:connected_subgraph}) there is a vertex $w \in C_i$ such that $C_i \subseteq \sub(w)$.
Then by \autoref{observation:trees}(\ref{observation:trees_disjoint}), we have $N(C_i) \cap \Y \subseteq \anc(w)$.
Let $y$ be the vertex in $N(C_i) \cap \Y$ farthest from the root of the appropriate tree of $\F$ %with the highest depth in $\F$
(closest to the leaves).
Then also $N(C_i) \cap \Y \subseteq \anc(y)$.

As a corollary of \autoref{observation:subsegments}, the sub-segment $P'' = (v'_1, v'_2, \dots, v'_{s'})$ corresponds to a sub-request $\subreq{(g,j)} = \subreq{(V(P'') \cap (M \cup \Y),s'-1)}$.
Note that $V(P'') \setminus (M \cup \Y)$ satisfies $\subreq{(g,j)}$.

Now, suppose that the marking procedure \textbf{Mark} for the vertex $y$ and the sub-request $\subreq{(g,j)}$ found at least $2d$ vertex sets $C'_1, C'_2, \dots, C'_{2d}$ of connected components of $G \setminus (M \cup \Y)$ such that for all $i \in [2d]$, $N(C'_i) \cap \Y \subseteq \anc(y)$ and $C'_i$ satisfies $\subreq{(g,j)}$.

Then, at most $d-1$ vertices of the solution $\widehat{S}$ are contained in $C'_1, C'_2, \dots, C'_{2d}$, i.e., $|\widehat{S} \cap \bigcup_{i \in [2d]} C_i| \leq d-1$, as otherwise \autoref{lemma:dpvc_sol_mod} should have been applied to $\anc(y)$ and the vertex sets $C'_1, C'_2, \dots, C'_{2d}$ as $|\anc(y)| \leq d-1$ due to \autoref{observation:trees}(\ref{observation:trees_depth}).

Further, only $d$ of the vertex sets $C'_1, C'_2, \dots, C'_{2d}$ could be touched by the path~$P$.
Therefore, there is at least one vertex set $C'_i$ which is touched by neither the solution $\widehat{S}$, nor the path $P$.
The marking procedure \textbf{Mark} picked an arbitrary path $P_i$ from $\PP^{C'_i}_{g,j}$ and marked its vertices and edges. Observe, that we can swap $P_i$ with the sub-segment $P''$ in $P$ to obtain a $d$-path $P^*$. Therefore, $P^*$ is also a $d$-path in $G \setminus \widehat{S}$ but, since all the edges of $P_i$ were marked, it has less unmarked edges than $P$, which is a contradiction with the choice of $P$.

Finally, suppose that at most $q' < 2d$ vertex sets $C'_1, C'_2, \dots, C'_{q'}$ were found by the marking procedure \textbf{Mark} for the vertex $y$ and the sub-request $\subreq{(g,j)}$. This means that the vertex set $C_i$ must be among the vertex sets $C'_1, C'_2, \dots, C'_{q'}$, i.e., $C_i = C'_z$ for some $z \in [q']$, as $C_i$ also satisfies the sub-request $\subreq{(g,j)}$ and $N(C_i) \cap \Y \subseteq \anc(y)$. Therefore, we know that the marking procedure \textbf{Mark 2} was called with $\Big(\subreq{(g,j)}, C_i, \emptyset \Big)$.

Let $\widehat{W} = \Big(\widehat{S} \cup \big(V(P) \setminus V(P'')\big)\Big) \cap C_i$. Then, the size of $\widehat{W}$ is at most $2d-1$. Indeed, first, we have $|\widehat{S} \cap C_i| < d$ as otherwise the \autoref{lemma:dpvc_sol_mod} should have been applied with $\anc(y)$ and the vertex set $C_i$, and, second, $|(V(P) \setminus V(P'')) \cap C_i| < d$ as $P$ is a $d$-path. We are ready to state the following observation.

\begin{observation}\label{observation:mark2marks}
If the marking procedure \textbf{Mark 2} is called with $\Big(\subreq{(g,j)}, C_i, W \Big)$ such that $W \subseteq \widehat{W}$, then
\begin{enumerate}[(i)]
\item either \textbf{Mark 2} marks the vertices and edges of some path $\widehat{P} \in \PP^{C_i \setminus \widehat{W}}_{g,j}$,
\item or \textbf{Mark 2} makes a recursive call with $\Big(\subreq{(g,j)}, C_i, W' \Big)$ such that $W' \subseteq \widehat{W}$ and $|W'| > |W|$.\label{obs:marks2marks:item2}
\end{enumerate}

\end{observation}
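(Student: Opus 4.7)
The plan is a direct case analysis on the execution of \textbf{Mark 2} when called with $\bigl(\subreq{(g,j)}, C_i, W\bigr)$ and $W \subseteq \widehat{W}$. First I would verify that both conditions in the outer \textbf{if} are met, so that the algorithm does enter the body. The size bound $|W| \le 2d$ is immediate from $W \subseteq \widehat{W}$ and the fact that $|\widehat{W}| \le 2d-1$. For non-emptiness of $\PP^{C_i \setminus W}_{g,j}$ I would invoke the ambient setting of the proof of \autoref{lemma:dpvc_marking_correct}: the sub-segment $P''$ itself lies in $\PP^{C_i \setminus \widehat{W}}_{g,j}$, since its vertices outside $g$ lie in $C_i$ and avoid both $\widehat{S}$ and $V(P) \setminus V(P'')$, hence avoid $\widehat{W}$; as $W \subseteq \widehat{W}$, we get $\PP^{C_i \setminus \widehat{W}}_{g,j} \subseteq \PP^{C_i \setminus W}_{g,j}$, so the latter is also non-empty.

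Once the algorithm enters the body, it picks some $P^\star \in \PP^{C_i \setminus W}_{g,j}$ and marks it. I would then split into two cases. If $P^\star$ happens to lie in $\PP^{C_i \setminus \widehat{W}}_{g,j}$, outcome (i) is established directly. Otherwise, there is a vertex $v \in V(P^\star)$ lying in $(C_i \setminus W) \cup g$ but not in $(C_i \setminus \widehat{W}) \cup g$. The crucial small step is to observe that $g \cap C_i = \emptyset$: this follows because $g \subseteq M \cup \Y$ while $C_i$ is a connected component of $G \setminus (M \cup \Y)$, and since $\widehat{W} \subseteq C_i$, we also have $g \cap \widehat{W} = \emptyset$. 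Consequently $v \notin g$ and $v \in \widehat{W} \setminus W$.

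Finally, since the inner \textbf{foreach} of \textbf{Mark 2} iterates exactly over $V(P^\star) \setminus g$, it makes a recursive call with $W' = W \cup \{v\}$. With $v \in \widehat{W}$ and $W \subseteq \widehat{W}$ we obtain $W' \subseteq \widehat{W}$, and trivially $|W'| > |W|$, giving outcome (ii). The one point requiring care is the disjointness $g \cap \widehat{W} = \emptyset$: it is what guarantees that the witness vertex $v$ is among those the algorithm actually recurses on; without it, the two outcomes would not exhaust all possibilities. The rest is just unfolding the definitions of $\PP^{\cdot}_{g,j}$ and of the \textbf{Mark 2} loop structure.
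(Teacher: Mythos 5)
Your proposal is correct and follows essentially the same argument as the paper: verify the guard of \textbf{Mark~2} (using $|\widehat{W}|\le 2d-1$ and that $P''$ witnesses $\PP^{C_i\setminus\widehat{W}}_{g,j}\neq\emptyset$), then either the marked path already avoids $\widehat{W}$, or it hits a vertex of $\widehat{W}\setminus W$ on which the procedure recurses. Your extra observation that $g\cap\widehat{W}=\emptyset$ (since $\widehat{W}\subseteq C_i$ and $g\subseteq M\cup\Y$), ensuring the witness vertex is among those in $V(P)\setminus g$ over which the loop actually iterates, is a small but welcome refinement of a point the paper states only loosely.
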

\begin{proof}
First, note that the set $\PP^{C_i \setminus \widehat{W}}_{g,j} \neq \emptyset$ as the sub-segment $P''$ is the proof of that. Thus, the set $\PP^{C_i \setminus W}_{g,j} \neq \emptyset$ as $W \subseteq \widehat{W}$. Further, $|W| \leq |\widehat{W}| < 2d$. Consequently, the condition on the first line of \textbf{Mark~2} is satisfied.
Therefore, \textbf{Mark~2} picks some path $P \in \PP^{C_i \setminus W}_{g,j} \neq \emptyset$.
If $P \in \PP^{C_i \setminus \widehat{W}}_{g,j}$, we are done. Suppose that $P \not\in \PP^{C_i \setminus \widehat{W}}_{g,j} \neq \emptyset$. Then, $V(P) \cap (\widehat{W} \setminus W) \neq \emptyset$ and, as the marking procedure \textbf{Mark~2} proceeds by making recursive calls for each vertex in $V(P)$, it will in particular make a recursive call with some vertex $v \in V(P) \cap (\widehat{W} \setminus W)$. This recursive call will be made with the set $W' = W \cup \{v\}$ which, therefore, satisfies $W' \subseteq \widehat{W}$ and $|W'| > |W|$.
\end{proof}

We know that a call to \textbf{Mark 2} was made with $\Big(\subreq{(g,j)}, C_i, \emptyset \Big)$.
Consider the largest set $\widetilde{W} \subseteq \widehat{W}$ such that a call $\Big(\subreq{(g,j)}, C_i, \widetilde{W} \Big)$ was made.
Such a set exists since $|\widehat{W}| < 2d$ and $\emptyset \subseteq \widehat{W}$.
Since \autoref{observation:mark2marks} applies to this call and (\ref{obs:marks2marks:item2}) cannot hold by the choice of $\widetilde{W}$, we know that the vertices and edges of some path $\widehat{P} \in \PP^{C_i \setminus \widehat{W}}_{g,j}$ were marked. But, either the path $\widehat{P}$ is exactly the same path as $P''$, which would be a contradiction with the fact that there is an unmarked edge in $P''$, or $\widehat{P}$ can again be exchanged with $P''$ in $P$ to obtain a $d$-path $P^*$. Then $P^*$ is a $d$-path in $G \setminus \widehat{S}$ with less unmarked edges than $P$, which is a contradiction with the choice of $P$.

%proof summary
We conclude the proof with a brief summary of the proof.

We started by picking a $d$-path $P$ with the least number of unmarked edges. We have shown that there must be a segment $P'$ of $P$ with an unmarked edge, this segment $P'$ either corresponds to a resolved request, or to an unresolved request.

In the case of a resolved request, we have shown that the segment $P'$ can be replaced by some other path with no unmarked edge, resulting in a contradiction with the choice of $P$.

In the case of an unresolved request, we continued by further splitting the segment $P'$ into sub-segments, where one of the sub-segments contained an unmarked edge.

This sub-segment $P''$, again, corresponded to some sub-request for which we have shown that either enough paths were marked for this sub-request, which allowed us, again, to replace the sub-segment $P''$ with some other path with no unmarked edge, resulting in a contradiction with the choice of $P$, or that the marking procedure \textbf{Mark 2} was called with suitable parameters.

This call resulted in either a contradiction with sub-segment $P''$ containing an unmarked edge, or in a possibility to, again, replace the sub-segment $P''$ with some other path with no unmarked edge, resulting in a contradiction with the choice of $P$.
% \end{proof}
}% to appendix

\toappendix{\subsection{Proof of \autoref{lemma:dpvc_marking_size} (Size of the Kernel)}}%

The following lemma shows the bound on the size of the kernel.

\begin{lemma}[\appmark]\label{lemma:dpvc_marking_size}
Let $(G,k)$ be an instance of \textsc{$d$-PVC}. Let $\widehat{G} = (\widehat{V}, \widehat{E})$ be a subgraph of $G$ such that $\widehat{V}$ and $\widehat{E}$ are obtained by running the marking procedure \textbf{Mark} on $(G,k)$. Then, $|\widehat{V}| = \bigo(k^4d^{2d+9})$ and $|\widehat{E}| = \bigo(k^4d^{2d+9})$.
\end{lemma}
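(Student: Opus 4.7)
The plan is to bound the number of vertices and edges marked at each step of \textbf{Mark} and its subroutine \textbf{Mark~2}, and then aggregate. Since $\M$ contains at most $k$ paths of length $d$, $|M|\leq dk$, so the total number of distinct requests $\req{(f,l)}$ is $O(|M|^2\cdot d)=O(d^3k^2)$, because $f$ is a singleton or pair in $M$ and $l\in [d-1]$. By \autoref{observation:trees}(\ref{observation:trees_depth}) each tree of $\F$ has depth at most $d-1$. For an unresolved request, $\F[Y_{f,l}]$ has at most $k+d$ leaves, and every vertex of $Y_{f,l}$ lies on a root-to-leaf path of length at most $d-1$ in $\F[Y_{f,l}]$, so $|Y_{f,l}|\leq d(k+d)$. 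Summing over all unresolved requests gives $|\Y|=O(d^4k^3)$.

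For the initial marking of $G[M\cup\Y]$ I would split the edges into three groups. Edges inside $M$ contribute $O(|M|^2)=O(d^2k^2)$. Edges inside $\Y$ are controlled by \autoref{observation:trees}(\ref{observation:trees_disjoint}), which implies that edges of $G'$ only join ancestor-descendant pairs; hence every $y\in\Y$ has at most $d-1$ neighbors inside $\F$, giving $O(d\cdot|\Y|)=O(d^5k^3)$ edges. Edges between $M$ and $\Y$ are trivially bounded by $|M|\cdot|\Y|=O(d^5k^4)$. So this step contributes $O(d^5k^4)$ marked vertices and edges.

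The resolved-request loop marks at most $k+d+1$ paths of length at most $d-1$ per request, across $O(d^3k^2)$ requests, so it contributes $O(d^4k^3)$ markings in total. For the sub-request loop, every $y\in\Y$ satisfies $|M\cup\anc(y)|=O(dk)$, so there are $O(dk)$ choices for $g$ and $d-1$ choices for $j$, giving $O(d^2k)$ sub-requests per $y$. In the ``at least $2d$ components'' branch the algorithm marks $2d$ paths of length at most $d-1$, costing $O(d^2)$ per sub-request.

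The main obstacle is analyzing the cost of \textbf{Mark~2}. Each recursive call marks a single path $P\in\PP^{C_i\setminus W}_{g,j}$ of length at most $d-1$ and recurses on the vertices of $V(P)\setminus g$; since $|V(P)|\leq d$ and $|g|\geq 1$, the branching factor is at most $d-1$. The guard $|W|\leq 2d$ caps the recursion depth at $2d+1$, so the recursion tree has at most $(d-1)^{2d+1}=d^{O(d)}$ nodes, each marking $O(d)$ vertices and edges. Thus every top-level \textbf{Mark~2} invocation contributes $d^{O(d)}$ markings, each sub-request triggers at most $2d$ such invocations (one per component when $q'<2d$), and aggregating over all $|\Y|\cdot O(d^2k)$ sub-requests yields $O(d^4k^3)\cdot O(d^2k)\cdot d^{O(d)}=O(k^4\, d^{O(d)})$. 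Summing across all steps gives $|\widehat V|, |\widehat E|=O(k^4\, d^{O(d)})$, as claimed.
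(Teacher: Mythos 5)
Your proposal follows essentially the same route as the paper: bound the number of requests by $O(d^3k^2)$, use the depth bound of \autoref{observation:trees}(\ref{observation:trees_depth}) to get $|\Y|=k^3d^{O(1)}$, charge the edges of $G[\Y]$ via the ancestor-descendant property, bound the resolved-request markings by $(k+d+1)$ paths of length $O(d)$ per request, and bound each \textbf{Mark~2} call by a recursion tree of depth $O(d)$ and branching $O(d)$, i.e.\ $d^{O(d)}$ markings per sub-request; the aggregation is identical, and the final bound is correct.

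Two counting steps are mis-justified, though, and one of them sits exactly where the exponent of $k$ is decided. You claim ``$O(dk)$ choices for $g$'' because $|M\cup\anc(y)|=O(dk)$; this treats $g$ as a single vertex, but $g$ may be a pair, and counting all pairs in $M\cup\anc(y)$ would give $O(d^2k^2)$ choices and hence an overall bound of $k^5d^{O(d)}$, not $k^4$. The fact that saves the bound (and that the paper's case analysis uses explicitly) is that the loop only considers sub-requests with $g\not\subseteq M$, so $g$ contains a vertex of $\anc(y)$, of which there are at most $d-1$; this yields $O(d^2k)$ choices for $g$ and $O(d^3k)$ sub-requests per $y$ — a factor $d$ more than you state, which is harmlessly absorbed into $d^{O(d)}$. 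Similarly, the clause ``every $y\in\Y$ has at most $d-1$ neighbors inside $\F$'' is literally false (a vertex may have many descendants as neighbors); the correct argument, as in the paper, charges each edge of $G[\Y]$ to its deeper endpoint, which has at most $d-1$ ancestors, giving the same $O(d\cdot|\Y|)$ bound. With these two fixes your argument matches the paper's proof.
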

% \begin{proof}
\toappendix{%
Let us follow the lines of the marking procedure \textbf{Mark}. First, it marks all the vertices and edges in $G[M \cup \Y]$.

% We will do this by counting the number of vertices and edges in $G[M]$ and $G[\Y]$ and counting the number of edges between $M$ and $\Y$ in $G$.
%
% \begin{claim}\label{claim:size_of_M}
% There are at most $dk$ vertices and $\bigo(d^2k^2)$ edges in $G[M]$.
% \end{claim}
% \begin{claimproof}
% As $|M| \leq dk$, the number of edges in $G[M]$ is at most $\binom{dk}{2} = \bigo(d^2k^2)$.
% \end{claimproof}

To count the number of vertices and edges in $G[M \cup \Y]$, we first count the maximum number of possible requests, then derive the maximum number of leaves of the subforest $F[\Y]$, which will ultimately lead us to the number of vertices and edges in $G[M \cup \Y]$.\\

\begin{claim}\label{claim:number_of_requests}
There are at most $(\binom{dk}{2} + dk)(d-1) = \bigo(k^2d^3)$ possible requests.
\end{claim}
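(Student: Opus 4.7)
The plan is to simply enumerate the possible choices in the triple $\req{(f,l)}$. Recall that a request is of the form $(f,l,M)$ with $l \in \{1,\dots,d-1\}$ and either $f = \{u,v\} \subseteq M$ with $u \neq v$, or $f = \{x\}$ with $x \in M$. Thus the number of admissible $f$'s is bounded by the number of one- and two-element subsets of $M$, which is $|M| + \binom{|M|}{2}$.

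Using the bound $|M| \le dk$ (recalled at the beginning of this section from \autoref{prop:greedy_packing}), the number of possible $f$'s is at most $\binom{dk}{2} + dk$. For each such $f$ there are exactly $d-1$ choices of $l$, so the total number of requests is at most
\[
\left(\binom{dk}{2} + dk\right)(d-1).
\]
Finally I would note $\binom{dk}{2} = \tfrac{dk(dk-1)}{2} = O(k^2 d^2)$ and $dk(d-1) = O(kd^2)$, so multiplying by $(d-1)$ the whole expression is $O(k^2 d^3)$, as claimed. There is no real obstacle here: the bound is a straightforward counting of triples using only the definition of a request and the size bound $|M| \le dk$.
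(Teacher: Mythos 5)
Your proof is correct and follows the same counting as the paper: bound the number of choices of $f$ by the one- and two-element subsets of $M$ using $|M|\le dk$, multiply by the $d-1$ choices of $l$, and simplify to $O(k^2d^3)$. No gaps, nothing to add.
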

\begin{claimproof}
For each possible request $\req{(f,l)}$ we have that $f \subseteq M$ and $|f| \in \{1,2\}$. As $|M| \leq dk$, the number of options to pick the set $f$ is $\binom{dk}{2}$ when $|f| = 2$ and $dk$ when $|f| = 1$. For each unique set $f$ we have $d-1$ different choices for~$l$. Therefore the total number of possible request is at most $(\binom{dk}{2} + dk)(d-1)$ which is $\bigo(k^2d^3)$.
\end{claimproof}

\begin{claim}
The subforest $\F[\Y]$ has at most $\bigo(k^3d^4)$ leaves.
\end{claim}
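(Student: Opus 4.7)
The plan is to reduce the counting of leaves of $\F[\Y]$ to counting leaves across the individual unresolved requests. The key observation I will use is that, by definition, $\Y = \bigcup_{\req{(f,l)} \in \R^*} Y_{f,l}$, where $\R^*$ denotes the set of unresolved requests, and each $Y_{f,l}$ is hereditary with respect to ancestors in $\F$.

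First I would argue that every leaf of $\F[\Y]$ is necessarily a leaf of $\F[Y_{f,l}]$ for some unresolved request $\req{(f,l)}$. Unpacking definitions: a vertex $v$ is a leaf of $\F[\Y]$ exactly when $v \in \Y$ and no child of $v$ in $\F$ lies in $\Y$. Since $v \in \Y$, we may pick any unresolved request $\req{(f,l)} \in \R^*$ with $v \in Y_{f,l}$; as $Y_{f,l} \subseteq \Y$, no child of $v$ in $\F$ lies in $Y_{f,l}$ either, so $v$ is a leaf of $\F[Y_{f,l}]$. This gives the inclusion $\text{Leaves}(\F[\Y]) \subseteq \bigcup_{\req{(f,l)} \in \R^*} \text{Leaves}(\F[Y_{f,l}])$.

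With this inclusion in hand, the bound reduces to a product. By the very definition of a resolved request, each $\req{(f,l)} \in \R^*$ contributes at most $k+d$ leaves to $\F[Y_{f,l}]$. Combining with the bound $|\R^*| \leq |\R| = O(k^2 d^3)$ established in the previous claim, the total number of leaves of $\F[\Y]$ is at most $(k+d)\cdot O(k^2 d^3) = O(k^3 d^3 + k^2 d^4) = O(k^3 d^4)$, as claimed.

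There is no serious obstacle here; the content of the claim is essentially an accounting argument combining the structural definition of $\Y$ with the already-established bound on the number of possible requests. The only thing to be careful about is the direction of the leaf-of-subforest argument, which crucially uses that $Y_{f,l} \subseteq \Y$ to carry non-existence of $\Y$-children down to non-existence of $Y_{f,l}$-children.
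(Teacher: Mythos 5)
Your proof is correct and follows essentially the same route as the paper: bound the number of requests by $O(k^2d^3)$ and multiply by the at most $k+d$ (the paper loosely says $k+d+1$) leaves that each unresolved request's set $Y_{f,l}$ can contribute. Your explicit argument that every leaf of $\F[\Y]$ is a leaf of some $\F[Y_{f,l}]$ (using that $Y_{f,l}\subseteq\Y$ and ancestor-closedness) just spells out a step the paper leaves implicit.
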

\begin{claimproof}
By the previous claim, there are at most $\bigo(k^2d^3)$ possible requests and for each such request we add at most $k+d+1$ leaves to $F[\Y]$ as we include only those sets $Y_{f,l}$ corresponding to unresolved requests which, by definition, have at most $k+d+1$ leaves in $\F[Y_{f,l}]$. Therefore, the maximum number of leaves in $\F[\Y]$ is $\bigo((k^2d^3)\cdot(k+d+1)) = \bigo(k^3d^4)$.
\end{claimproof}

\begin{claim}
There are at most $\bigo(k^3d^5)$ vertices and $\bigo(k^3d^6)$ edges in $G[\Y]$.
\end{claim}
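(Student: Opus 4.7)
\begin{claimproof}
The plan is to bound the number of vertices first and then use that bound together with the structural property of depth-first forests to count the edges.

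First I would observe that the set $\Y$ is closed under taking ancestors in the forest $\F$. Indeed, each $Y_{f,l}$ has this property as noted right after its definition, and $\Y$ is a union of such sets, so $\Y$ inherits the property. Consequently, every $v \in \Y$ lies on a root-to-leaf path of $\F[\Y]$: one can descend from $v$ within $\F[\Y]$ until a leaf of $\F[\Y]$ is reached. Combined with \autoref{observation:trees}(\ref{observation:trees_depth}), every such root-to-leaf path contains at most $d$ vertices. Hence, if $L$ denotes the number of leaves of $\F[\Y]$, then $|\Y| \le L \cdot d$. By the previous claim, $L = O(k^3 d^4)$, and therefore $|V(G[\Y])| = |\Y| = O(k^3 d^5)$.

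Next I would count the edges. By \autoref{observation:trees}(\ref{observation:trees_disjoint}), every edge of $G$ between two vertices of $V(\F)$ connects two vertices that are in ancestor-descendant relation within the same tree of $\F$. In particular, for every $v \in \Y$, all the $G[\Y]$-neighbours of $v$ which are ancestors of $v$ form a subset of $\anc(v)\setminus\{v\}$, whose size is at most $d-1$ by \autoref{observation:trees}(\ref{observation:trees_depth}). Charging each edge of $G[\Y]$ to its deeper endpoint (the descendant), we obtain $|E(G[\Y])| \le |\Y| \cdot (d-1) = O(k^3 d^6)$.

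The bookkeeping is entirely routine once one notices (i) ancestor-closedness of $\Y$, which turns the leaf bound into a vertex bound via the depth-$d$ property, and (ii) that non-ancestral pairs in $\F$ have no edge in $G$, which turns the vertex bound into an edge bound via the same depth-$d$ property. I do not foresee a real obstacle here; the only point to be careful about is to argue that $\Y$ is ancestor-closed, which follows immediately from the observation already recorded after the definition of $Y_{f,l}$.
\end{claimproof}
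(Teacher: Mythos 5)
Your proof is correct and follows essentially the same route as the paper: the vertex bound is leaves of $\F[\Y]$ times the depth bound from \autoref{observation:trees}, and the edge bound uses the depth-first-search property that edges only join ancestor-descendant pairs, charging at most $d-1$ edges per vertex (the paper phrases this as tree edges plus back edges, which is the same argument). The bookkeeping, including the ancestor-closedness of $\Y$, is sound.
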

\begin{claimproof}
The forest $\F$ consists of depth-first search trees. By the previous claim, there are at most $\bigo(k^3d^4)$ leaves in $\F[\Y]$, each of them having at most $d-1$ ancestors, which together gives us, that there is at most $\bigo(k^3d^4) \cdot (d-1) = \bigo(k^3d^5)$ vertices in $\F[\Y]$ and, in particular, in $\Y$.

Now, let us recall the key property of depth-first search trees. Given a depth-first search tree or, in our case, forest $\F$, one can partition the edges of $G$ into two types according to $\F$. There are tree edges -- the edges in the forest $\F$, and back edges -- the edges of $G$ which are not in $\F$ and a vertex of $\F$ can only have back edges to its ancestors. Therefore, in order to count the number of edges in $G[\Y]$, it suffices to count the number of edges in $\F[\Y]$ and the number of possible back edges for each vertex in $\F[\Y]$.

The number of edges in $\F$ is at most $\bigo(k^3d^5)$ as there are at most $\bigo(k^3d^5)$ vertices in $\F$. Each vertex $y \in \Y$ can have at most $d-3$ back edges as it has at most $d-1$ ancestors including itself. We get that there are at most $\bigo(k^3d^5)\cdot(d-2) = \bigo(k^3d^6)$ possible back edges for vertices in $\F[\Y]$. The total number of edges in $G[\Y]$ is, therefore, $\bigo(k^3d^5) + \bigo(k^3d^6) = \bigo(k^3d^6)$.
\end{claimproof}

\begin{claim}
There are at most $\bigo(k^4d^6)$ edges between $M$ and $\Y$ in $G$.
\end{claim}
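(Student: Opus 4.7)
The plan is immediate once the previous claim is in hand: bound the number of edges between $M$ and $\Y$ by the trivial bipartite product $|M| \cdot |\Y|$. First I would recall that $M = V(\M)$ is the vertex set of the maximal $d$-path packing $\M$ returned by \autoref{prop:greedy_packing}, which consists of at most $k$ paths on $d$ vertices each, giving $|M| \leq dk$. The previous claim in this subsection establishes $|\Y| = O(k^3 d^5)$, obtained by combining the bound $O(k^2 d^3)$ on the number of possible requests, the observation that each unresolved request contributes at most $k+d+1$ leaves to $\F[\Y]$, and the fact that every vertex of $\F$ has at most $d-1$ ancestors.

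Any edge of $G$ between $M$ and $\Y$ has one endpoint in $M$ and the other in $\Y$, so the total number of such edges is at most $|M| \cdot |\Y| \leq dk \cdot O(k^3 d^5) = O(k^4 d^6)$, which is exactly the bound asserted by the claim. No further structural argument is needed, since the target bound for this claim matches the trivial bipartite product; the heavy lifting was already performed in the earlier estimate on $|\Y|$.

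The main (non-)obstacle here is simply verifying that we cannot hope to do better in a way that would meaningfully shrink the final kernel. Since the edge count in $G[\Y]$ was itself shown to be $O(k^3 d^6)$, and the marking procedure \textbf{Mark} will in addition contribute $O(k^4)$ further edges from the at most $O(k^2 d^3)$ resolved requests and the at most $O(k^3 d^5) \cdot d^{O(d)}$ sub-request markings, the edges between $M$ and $\Y$ will be one of the dominant terms in the overall $O(k^4)$ bound and need not be tightened further. Hence the simple multiplicative estimate given above is precisely what is needed to match the overall target $|\widehat V|, |\widehat E| = O(k^4 d^{O(d)})$ in \autoref{lemma:dpvc_marking_size}.
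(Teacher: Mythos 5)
Your argument is correct and is exactly the paper's proof: it bounds the edges between $M$ and $\Y$ by the trivial bipartite product $|M|\cdot|\Y| \leq dk \cdot O(k^3d^5) = O(k^4d^6)$, using $|M|\leq dk$ and the previously established bound on $|\Y|$. The extra commentary about dominance in the final kernel size is unnecessary but harmless.
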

\begin{claimproof}
By the previous claims, there are at most $dk$ vertices in $M$ and at most $\bigo(k^3d^5)$ in $\Y$. Therefore, there are at most $dk \cdot \bigo(k^3d^5) = \bigo(k^4d^6)$ possible edges in between.
\end{claimproof}

Now, it suffices to sum the number of vertices and edges in $G[M], G[\Y]$ and in between $M$ and $\Y$ in $G$ and we obtain that there are at most $\bigo(k^3d^5)$ vertices and $\bigo(k^4d^6)$ edges in $G[M \cup \Y]$ marked on line \ref{alg:mark_my}.

The marking procedure \textbf{Mark} continues with marking $k+d+1$ paths for each resolved request. Each such path has length at most $d-1$. By the previous claims, there are at most $\bigo(k^2d^3)$ requests, therefore the procedure on lines \ref{alg:mark_foreach_resolved_start}--\ref{alg:mark_foreach_resolved_end} marks at most $\bigo(k^2d^3) \cdot (k+d+1) \cdot d = \bigo(k^3d^5)$ vertices and edges.

Finally, the marking procedure marks some vertices and edges for each $y \in \Y$ and each sub-request $\subreq{(g,j)}$ such that $g \subseteq M \cup \anc(y)$ and $g \not\subseteq M$.

\begin{claim}
For each $y \in \Y$, there are at most $\bigo(kd^3)$ sub-requests $\subreq{(g,j)}$ such that $g \subseteq M \cup \anc(y)$ and $g \not\subseteq M$.
\end{claim}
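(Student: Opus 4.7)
The plan is to count the eligible sub-requests by classifying them according to the form of $g$ (singleton or pair), using the fact that the restriction $g \subseteq M \cup \anc(y)$ together with $g \not\subseteq M$ forces at least one vertex of $g$ to lie in the relatively small set $\anc(y)$.

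The first step is to fix $y \in \Y$ and observe that a sub-request of the desired form has $g$ of size $1$ or $2$, with $g \subseteq M \cup \anc(y)$ and $g \cap \anc(y) \neq \emptyset$ (this is just the contrapositive of $g \subseteq M$, since $\anc(y)$ is disjoint from $M$ as $\anc(y) \subseteq V(\F) = V(G) \setminus M$). Note also that since $y \in \Y$, every ancestor of $y$ lies in $\Y$ as well (any $Y_{f,l}$ containing $y$ also contains all ancestors of $y$, a fact already noted in the text), so the formal requirement in the definition of sub-request that $g$ contains some vertex of $\Y$ is automatically satisfied.

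Next I invoke \autoref{observation:trees}(\ref{observation:trees_depth}) to conclude that $|\anc(y)| \le d-1$, and I use $|M| \le dk$ from the construction of the packing $\M$. The counting then splits into two cases: singletons $g = \{x\}$ with $x \in \anc(y)$ give at most $d-1$ choices, while pairs $g = \{u,v\}$ with, say, $u \in \anc(y)$ and $v \in (M \cup \anc(y)) \setminus \{u\}$ give at most $(d-1)(dk + d - 2)$ choices. Summing and multiplying by the $d-1$ possible lengths $j \in \{1, \dots, d-1\}$ yields a total of at most $(d-1)\bigl((d-1) + (d-1)(dk + d - 2)\bigr) = O(kd^3)$ sub-requests, as claimed.

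There is no real obstacle here; the whole argument is a direct counting once one realises that the side condition $g \not\subseteq M$ pins one coordinate of $g$ inside the tiny set $\anc(y)$ of size at most $d-1$. The only thing to be mildly careful about is not double-counting pairs and remembering that the ``other'' endpoint of a pair ranges over $M \cup \anc(y)$ rather than just $M$, which still leaves only $dk + d - 2$ choices.
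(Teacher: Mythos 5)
Your proposal is correct and follows essentially the same route as the paper's proof: classify $g$ by its size and by how it splits between $\anc(y)$ and $M$, bound the choices using $|\anc(y)| \le d-1$ and $|M| \le dk$, and multiply by the at most $d-1$ possible lengths to get $O(kd^3)$. Your additional remark that ancestors of $y$ lie in $\Y$ (so such triples are indeed sub-requests) is correct but not needed for the upper bound.
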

\begin{claimproof}
Let $y\in \Y$. We have $\anc(y) \leq d-1$. The set $g$ can attain one of three forms. First, $|g| = 2$, $|g \cap M| = 1$ and $|g \cap \anc(y)| = 1$, second, $|g| = 2$ and $g \subseteq \anc(y)$, and third, $|g| = 1$ and $g \subseteq \anc(y)$. The maximum number of options to pick the set $g$ is therefore $dk \cdot (d-1) = \bigo(kd^2)$, $\bigo(d^2)$, and $\bigo(d)$ in these cases, respectively. Hence, there are at most $\bigo(kd^2)$ options to pick the set $g$ and for each such set $g$, we have $d-1$ choices for $l$. Therefore, there are at most $\bigo(kd^3)$ sub-requests $\subreq{(g,j)}$ for the vertex $y$.
\end{claimproof}

\begin{claim}
For each $y \in \Y$ and each sub-request $\subreq{(g,j)}$ such that $g \subseteq M \cup \anc(y)$ and $g \not\subseteq M$, the marking procedure \textbf{Mark} marks at most $\bigo(d^{2d+1})$ vertices and edges.
\end{claim}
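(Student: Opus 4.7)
The plan is to analyze the two branches of the marking procedure \textbf{Mark} for the pair $(y,\subreq{(g,j)})$ separately. In the branch where at least $2d$ eligible components $C_1,\dots,C_{q'}$ are found, \textbf{Mark} marks the vertices and edges of a single path of length $j\le d-1$ in each of $2d$ components; this contributes at most $2d\cdot(2d-1)=O(d^2)$ markings, which is well within the $d^{O(d)}$ budget. The work lies in the other branch, where $q'<2d$ and \textbf{Mark} calls $\textbf{Mark 2}\bigl(\subreq{(g,j)},C_i,\emptyset\bigr)$ for each of the fewer than $2d$ components $C_i$.

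For a single such invocation I would bound the total number of markings by bounding the size of the induced recursion tree of \textbf{Mark 2}. A call $\textbf{Mark 2}\bigl(\subreq{(g,j)},C_i,W\bigr)$ produces any markings only when $|W|\le 2d$ and $\PP^{C_i\setminus W}_{g,j}\neq\emptyset$; in that case it marks the vertices and edges of one path $P$ of length $j\le d-1$ (hence at most $2d-1$ items), and recurses on $|V(P)\setminus g|\le d-1$ children, each with $W$ replaced by $W'=W\cup\{v\}$ for some $v\in V(P)\setminus g$. Since $v\in V(P)\subseteq C_i\setminus W$, we have $v\notin W$, so $W$ strictly grows by exactly one vertex along every edge of the recursion tree. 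Combined with the guard $|W|\le 2d$, this forces the recursion depth to be at most $2d+1$, and since the branching factor is at most $d-1$, the recursion tree contains at most $(d-1)^{2d+1}=d^{O(d)}$ nodes. Multiplying by the $O(d)$ markings at each node gives $O(d^{O(d)})$ markings per top-level call of \textbf{Mark 2}.

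Summing over the fewer than $2d$ initial \textbf{Mark 2} calls in the $q'<2d$ branch, and adding the $O(d^2)$ markings from the $q'\ge 2d$ branch, yields a grand total of $O\bigl(d\cdot d^{O(d)}\bigr)=O(d^{O(d)})$ marked vertices and edges for the pair $(y,\subreq{(g,j)})$, as claimed.

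The main obstacle I expect is making the depth-and-branching analysis of \textbf{Mark 2} fully rigorous. In particular one must argue that, along any root-to-leaf path of the recursion tree, the set $W$ is strictly monotone in size, so that the guard $|W|\le 2d$ actually bounds the recursion depth; this follows directly from $W'=W\cup\{v\}$ with $v\in V(P)\setminus W$, but it should be spelled out. Once this is established, the $d^{O(d)}$ bound on the tree size is immediate and the claim follows.
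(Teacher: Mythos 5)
Your proposal is correct and follows essentially the same route as the paper: bound the $q'\ge 2d$ branch by $O(d^2)$, and for $q'<2d$ analyze the recursion tree of \textbf{Mark~2}, using the guard $|W|\le 2d$ together with the strict growth of $W$ to bound the depth by $O(d)$ and the branching factor by at most $d$, giving a tree of size $d^{O(d)}$ with $O(d)$ markings per node. Your additional care (branching factor $d-1$, summing over the fewer than $2d$ top-level calls, noting $v\notin W$) only sharpens constants that are absorbed in the $d^{O(d)}$ bound.
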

\begin{claimproof}
Let $C_1,C_2,\dots,C_q'$ be the vertex sets as described in the procedure \textbf{Mark}. If $q' \geq 2d$, the procedure \textbf{Mark} picks $2d$ paths and marks their vertices and edges. As the paths have length at most $d-1$, in this case, the procedure marks at most $2d \cdot (d-1) = \bigo(d^2)$ vertices and edges.

If $q' < 2d$, the procedure \textbf{Mark} for each component $C_i$ calls the marking procedure \textbf{Mark~2}. The procedure \textbf{Mark~2} is a recursive procedure, therefore, we proceed by analyzing its recursion tree.

The stopping condition on the first line ensures that the recursion tree has depth at most $2d$ as each recursive call increases the size of the set $W$. In each recursive call, the procedure branches into at most $d$ recursive calls as $|V(P)| \leq d$ for each $P \in \PP^{C_i \setminus W}_{g,j}$. Therefore, the size of the recursion tree is at most $2d^{2d} = \bigo(d^{2d})$. Further, in each recursive call, the procedure \textbf{Mark~2} marks at most $d$ edges and vertices. Therefore, the procedure \textbf{Mark~2} marks at most $\bigo(d^{2d}) \cdot d = \bigo(d^{2d+1})$ vertices and edges.
\end{claimproof}

Putting the previous claims together, we have that $|\Y| \leq \bigo(k^3d^5)$, for each vertex in $\Y$ there are at most $\bigo(kd^3)$ sub-requests, and for each vertex in $\Y$ and each sub-request for $y$, the procedure marks at most $\bigo(d^{2d+1})$ edges and vertices. Therefore, the total number of marked vertices and edges on lines \ref{alg:mark_foreach_y_start}--\ref{alg:mark_foreach_y_end} is $\bigo(k^3d^5) \cdot \bigo(kd^3) \cdot \bigo(d^{2d+1}) = \bigo(k^4d^{2d+9})$.

Summing over all lines of the marking procedure \textbf{Mark}, we obtain that the total number of marked vertices is $\bigo(k^3d^5) + \bigo(k^3d^5) + \bigo(k^4d^{2d+9}) = \bigo(k^4d^{2d+9})$ and the total number of marked edges is $\bigo(k^4d^6)+ \bigo(k^3d^5) + \bigo(k^4d^{2d+9}) = \bigo(k^4d^{2d+9})$.
% \end{proof}
} % to appendix

%\begin{observation}
%The subforest $\F[\Y]$ has at most $\bigo(k^3d^4)$ leaves.
%\end{observation}
%\begin{proof}
%
%\end{proof}

%\begin{observation}\label{observation:number_of_requests}
%There are at most $(\binom{dk}{2} + dk)(d-1) = \bigo(k^2d^3)$ possible requests.
%\end{observation}
%\begin{proof}
%
%\end{proof}

We summarize the result in the following theorem.

\begin{theorem}
\textsc{$d$-Path Vertex Cover} admits a kernel with $\bigo(k^4d^{2d+9})$ vertices and edges, where $k$ is the size of the solution.
\end{theorem}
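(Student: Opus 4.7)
The plan is to assemble the kernelization algorithm from the components developed in this section and then invoke the two technical lemmas already at hand. On input $(G,k)$ the algorithm first calls the greedy $d$-path packing procedure of \autoref{prop:greedy_packing}; if that step directly decides the instance we return a trivial equivalent constant-size instance, otherwise we obtain an inclusion-wise maximal $d$-path packing $\M$ of at most $k$ paths and set $M = V(\M)$. We then compute the DFS forest $\F$ of $G\setminus M$, enumerate the $O(k^2 d^3)$ possible requests $\req{(f,l)}$ with $f\subseteq M$, $|f|\in\{1,2\}$ and $l\in[d-1]$, and compute each $Y_{f,l}$ by a bottom-up sweep of the trees of $\F$ (at each vertex $v$ recording, for each pair $(f,l)$, whether $\sub(v)$ satisfies $\req{(f,l)}$). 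This lets us identify which requests are resolved, and we take $\Y$ to be the union of the $Y_{f,l}$ over the unresolved ones. Finally we run the procedure \textbf{Mark} (which in turn invokes \textbf{Mark 2} on the relevant triples), and output $(\widehat{G},k)$ where $\widehat{G}$ is the subgraph induced by the marked vertices and edges.

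The next step is to check that each phase runs in polynomial time once $d$ is regarded as a constant. The greedy packing is handled by \autoref{prop:greedy_packing}; constructing $\F$ is linear; the dynamic program producing the sets $Y_{f,l}$ handles $O(k^2 d^3)$ requests and is polynomial. The outer loop of \textbf{Mark} iterates over the $O(k^2 d^3)$ resolved requests, and its second outer loop iterates over the vertices of $\Y$ (of size $O(k^3 d^5)$, as bounded inside the proof of \autoref{lemma:dpvc_marking_size}) and, for each such vertex, over the $O(k d^3)$ sub-requests; each call to \textbf{Mark 2} has a recursion tree of size $d^{O(d)}$, which is polynomial for fixed~$d$.

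Correctness of the reduction is then exactly the content of \autoref{lemma:dpvc_marking_correct}, which states that $(G,k)$ is a YES-instance of \dPVC{} if and only if $(\widehat{G},k)$ is; and the bound $|\widehat{V}|,|\widehat{E}| = O(k^4 d^{O(d)})$ is exactly \autoref{lemma:dpvc_marking_size}. The main conceptual obstacle---showing that the marked subgraph preserves the answer, via the swap arguments on segments and sub-segments that crucially use \autoref{lemma:dpvc_sol_mod}---has already been overcome in those two lemmas; what remains for this theorem is only to package the algorithm, verify its polynomial runtime, and combine the two bounds into the stated form.
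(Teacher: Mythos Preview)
Your proposal is correct and follows essentially the same approach as the paper: invoke \autoref{lemma:dpvc_marking_correct} for equivalence and \autoref{lemma:dpvc_marking_size} for the size bound, and argue that the marking procedures run in polynomial time. The paper's proof is a single sentence to this effect; you have simply spelled out the polynomial-time argument in more detail (including the caveat that $d$ is treated as a constant, which matches the paper's implicit stance given the $2.619^d$ and $d^{O(d)}$ factors throughout).
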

\begin{proof}
As the marking procedures \textbf{Mark} and \textbf{Mark~2} can by implemented in polynomial time, the theorem directly follows from Lemmas \ref{lemma:dpvc_marking_correct} and \ref{lemma:dpvc_marking_size}.
\end{proof}

\section{Conclusion}

We presented kernels with $\bigo(k^2)$ edges for \textsc{4-PVC} and \textsc{5-PVC} and with $\bigo(k^4d^{2d+9})$ edges for \dPVC{} for any $d\geq6$.
An obvious open question is whether there is a kernel with $\bigo(k^2)$ edges for every $d\geq6$.

Furthermore, the size of our kernel depends on $d$ by a factor of $d^{\bigo(d)}$.
We believe that this could be improved to $2^{\bigo(d)}$ with the use of representative sets.
However, improving this to a factor polynomial in $d$ would imply coNP $\subseteq$ NP/poly.
As observed by Dell and Marx~\cite{DellM18}, running such a kernel with $k=0$ would give a polynomial kernel for the $d$-Path problem,
which would have the above mentioned implications.

Next, for \textsc{2-PVC} and \textsc{3-PVC}, there are kernels with linear number of vertices~\cite{Lampis11, XiaoK17}.
Hence, another open question is whether such a kernel can be obtained also for say \textsc{4-PVC}.
Further interesting open questions can be found in the recent survey of Tu~\cite{Tu22}.

%%
%% Bibliography
%%

%% Please use bibtex,

\bibliography{main}

\ifdefined\mainappendix
\clearpage
\appendix

\appendixText
\fi

\end{document}